\documentclass[a4paper,UKenglish,cleveref,autoref,thm-restate]{lipics-v2021}

\nolinenumbers

\crefname{section}{Sect.}{Sects.}
\Crefname{section}{Sect.}{Sects.}
\crefname{definition}{Def.}{Defs.}
\Crefname{definition}{Def.}{Defs.}
\crefname{appendix}{App.}{Appx.}
\Crefname{appendix}{App.}{Appx.}
\crefname{theorem}{Thm.}{Thms.}
\Crefname{theorem}{Thm.}{Thms.}
\crefname{lemma}{Lem.}{Lems.}
\Crefname{lemma}{Lem.}{Lems.}
\crefname{corollary}{Cor.}{Cors.}
\Crefname{corollary}{Cor.}{Cors.}
\crefname{algorithm}{Alg.}{Algs.}
\Crefname{algorithm}{Alg.}{Algs.}
\crefname{figure}{Fig.}{Figs.}
\Crefname{figure}{Fig.}{Figs.}
\crefname{example}{Ex.}{Exs.}
\Crefname{example}{Ex.}{Exs.}
\crefname{table}{Tbl.}{Tbls.}
\Crefname{table}{Tbl.}{Tbls.}

\usepackage{graphicx}
\usepackage{algorithmicx}
\usepackage{algorithm}
\usepackage[noend]{algpseudocode}
\usepackage{float}
\usepackage{booktabs}
\usepackage{amsmath,amssymb}
\usepackage{multirow}
\usepackage{nicematrix}
\usepackage{tablefootnote}
\usepackage{xspace}
\usepackage{microtype}
\usepackage{tikz}
\usepackage{mdframed}
\usepackage{nicefrac}
\usepackage{bm}
\usepackage{mathtools}
\usepackage{subcaption}
\usepackage{enumerate}

\usetikzlibrary{positioning, automata, calc}

\newcommand{\mvdv}[1]{}
\newcommand{\ms}[1]{}
\newcommand{\sj}[1]{}
\newcommand{\marck}[1]{}
\newcommand{\marnix}[1]{}
\newcommand{\sebastian}[1]{}
\newcommand{\sjminor}[1]{}

\newcommand{\ie}{\emph{i.e.}}
\newcommand{\eg}{\emph{e.g.}}

\newcommand{\unif}{\mathsf{unif}}
\newcommand{\dirac}{\mathsf{dirac}}
\newcommand{\assign}{\leftarrow}
\newcommand{\tuple}[1]{\langle #1 \rangle}
\newcommand{\mdp}{M}
\newcommand{\memdp}{\mathcal{M}}

\newcommand{\supp}{\mathit{Supp}}
\newcommand{\dist}[1]{\mathit{Dist}(#1)}
\newcommand{\last}{\mathit{last}}
\newcommand{\first}{\mathit{first}}

\newcommand{\powerset}{\mathcal{P}} %
\newcommand{\paths}{\mathsf{Path}}
\newcommand{\pathsfin}{\mathsf{Path}_\text{fin}}
\newcommand{\Path}{\pi}

\newcommand{\restrictenv}[2]{{#1}_{\downarrow#2}}

\newcommand{\reachable}[1]{\mathsf{Reachable}(#1)}
\newcommand{\reachablem}[2]{\mathsf{Reachable}_{#1}(#2)}

\newcommand{\initdist}{\iota}

\newcommand{\custombomdp}[1]{\mathcal{B}_{#1}}
\newcommand{\bomdp}{\custombomdp{\memdp}}

\newcommand{\genericlocal}{{\mathcal L}}

\newcommand{\pathbelief}{\mathsf{Belief}}
\newcommand{\stateset}{S}
\newcommand{\actionset}{A}
\newcommand{\transitionfunc}{p}
\newcommand{\transitionfuncs}{\{\transitionfunc_i\}_{i \in I}}

\newcommand{\infset}{\mathit{Inf}}
\newcommand{\mc}{C}
\newcommand{\memc}{\mathcal{C}}

\newcommand{\memdptuple}{\memdp = \tuple{\stateset, \actionset, \initdist, \transitionfuncs}}

\newcommand{\Finally}{\lozenge}
\newcommand{\Always}{\square}

\newcommand{\tTypeface}[1]{\mathsf{#1}}
\newcommand{\tP}{\tTypeface{P}}

\newcommand{\tPSPACE}{\tTypeface{PSPACE}}

\newcommand{\tEXPTIME}{\tTypeface{EXP}}

\newcommand{\tNL}{\tTypeface{NL}}

\newcommand{\tUD}{\tTypeface{UD}}

\newcommand{\tXP}{\tTypeface{XP}}

\DeclareMathAlphabet{\mathsf}{OT1}{\sfdefault}{m}{n}
\SetMathAlphabet{\mathsf}{bold}{OT1}{\sfdefault}{b}{n}
\newcommand{\bld}[1]{\boldsymbol{#1}}

\newcommand{\projbelief}[1]{\mathit{Env}(#1)}
\newcommand{\projstate}[1]{\mathit{St}(#1)}

\newcommand{\beliefupd}{\mathsf{BU}}
\newcommand{\allow}{\mathsf{Allow}}

\newcommand{\stratallow}{\strat_\mathsf{Allow}}

\newcommand{\stratwin}{\strat^*}
\newcommand{\strats}{\Sigma}

\newcommand{\stratbscc}{\strat_\mathsf{BSCC}}

\newcommand{\Bu}{B\"uchi\xspace}
\newcommand{\CoB}{Co-\Bu}

\newcommand{\Sbscc}{S_{\mathsf{B}}}
\newcommand{\Sbsccj}[1]{S_{\mathsf{B}#1}}
\newcommand{\Jbscc}{\mathcal{J}_{\tuple{s,J}}}
\newcommand{\Aallow}{A_{\mathsf{Allow}}}

\newcommand{\strat}{\sigma}
\newcommand{\fsc}{\mathcal{F}}

\newcommand{\creachobj}[1]{\Finally #1}
\newcommand{\csafeobj}[1]{\Always #1}
\newcommand{\cbuchiobj}[1]{\Always \Finally #1}
\newcommand{\ccobuchiobj}[1]{\Finally \Always #1}

\newcommand{\reachobj}{\creachobj{T}}
\newcommand{\safeobj}{\csafeobj{T}}
\newcommand{\buchiobj}{\cbuchiobj{T}}
\newcommand{\cobuchiobj}{\ccobuchiobj{T}}

\algnewcommand{\IfThenElse}[3]{%
  \State \algorithmicif\ #1\ \algorithmicthen\ #2\ \algorithmicelse\ #3}
\algdef{SE}[DOWHILE]{Do}{doWhile}{\algorithmicdo}[1]{\algorithmicwhile\ #1}%

\newcommand{\winregion}[2]{\mathit{Win}_{#1}({#2})}
\newcommand{\winset}{S_{\mathit{win}}}

\newcommand{\rabinObj}{\Phi}
\newcommand{\locRabinObj}{\rabinObj^{\mathit{Loc}}}
\newcommand{\clocRabinObj}[1]{\rabinObj^{\mathit{CLoc}}(#1)}
\newcommand{\rabinPair}[1]{\tuple{\rabincobuchi_{#1}, \rabinbuchi_{#1}}}
\newcommand{\rabinObjDef}{\rabinObj = \{ \rabinPair{i} \mid 1 \le i \le k. \,\, \rabinbuchi_i \subseteq \rabincobuchi_i \subseteq S \}}

\newcommand{\rabinbuchi}{\mathfrak{C}}
\newcommand{\rabincobuchi}{\mathfrak{B}}

\newcommand{\liftedstrat}{\hat{\strat}}
\newcommand{\liftedobj}{\hat{\rabinObj}}
\newcommand{\newinit}[2]{#1^#2}
\newcommand{\winningfrontier}{\textit{WF}}
\newcommand{\localbomdp}[2]{#1\{#2\}}

\newcommand{\staterestrict}[2]{{#1}_{\csafeobj{#2}}}

\newcommand{\stateremove}[2]{\mathit{StateRemove}(#1, #2)}
\newcommand{\rabinWinSet}{W_{\locRabinObj}}

\newcommand{\algreach}{\textsc{Reach}}
\newcommand{\rabinalgname}{\textsc{Rabin}}
\newcommand{\defeq}{:=}
\newcommand{\genericalgoname}{\textsc{Check}}

\newenvironment{proofsketch}{%
  \proof}{\endproof}

\bibliographystyle{plainurl}
\title{A PSPACE Algorithm for Almost-Sure Rabin~Objectives in~Multi-Environment~MDPs} 
\author{Marnix Suilen}{Radboud University, Nijmegen, the Netherlands}{marnix.suilen@ru.nl}{https://orcid.org/0000-0003-2163-3504}{}
\author{Marck van der Vegt}{Radboud University, Nijmegen, the Netherlands}{marck.vandervegt@ru.nl}{https://orcid.org/0000-0003-2451-5466}{}
\author{Sebastian Junges}{Radboud University, Nijmegen, the Netherlands}{sebastian.junges@ru.nl}{https://orcid.org/0000-0003-0978-8466}{}
\authorrunning{M.~Suilen, M.~van der Vegt, S.~Junges}
\keywords{Markov Decision Processes, partial observability, linear-time objectives}
\Copyright{Marnix Suilen, Marck van der Vegt and Sebastian Junges}
\ccsdesc[500]{Theory of computation~Logic and verification}

\begin{document}

\maketitle

\begin{abstract}
Markov Decision Processes (MDPs) model systems with uncertain transition dynamics. 
Multiple-environment MDPs (MEMDPs) extend MDPs. 
They intuitively reflect finite sets of MDPs that share the same state and action spaces but differ in the transition dynamics. 
The key objective in MEMDPs is to find a single policy that satisfies a given objective in every associated MDP. 
The main result of this paper is PSPACE-completeness for almost-sure Rabin objectives in MEMDPs.
This result clarifies the complexity landscape for MEMDPs and contrasts with results for the more general class of partially observable MDPs (POMDPs), where almost-sure reachability is already EXPTIME-complete, and almost-sure Rabin objectives are undecidable.
\end{abstract}

\section{Introduction}\label{sec:introduction}
Markov decision processes (MDPs) are the ubiquitous model for decision-making under uncertainty~\cite{DBLP:books/wi/Puterman94}.  
An elementary question in MDPs concerns the existence of strategies that satisfy qualitative temporal properties, such as \emph{is there a strategy such that the probability of reaching a set of target states is one}?
Qualitative properties in MDPs have long been considered as pre-processing for probabilistic model checking of quantitative properties~\cite{Baier2008,DBLP:conf/sfm/ForejtKNP11}. 
Recently, however, qualitative properties have received interest in the context of shielding~\cite{DBLP:conf/aaai/AlshiekhBEKNT18,DBLP:conf/cav/JungesJS20}, \ie, the application of model-based reasoning to ensure safety in reinforcement learning~\cite{DBLP:journals/jmlr/GarciaF15,DBLP:books/lib/SuttonB98}. 

An often prohibitive assumption in using MDPs is that the strategy can depend on the precise state. 
To follow such a strategy, one must precisely observe the state of the system, \ie, of an agent and its environment. 
The more general partially observable MDPs (POMDPs)~\cite{DBLP:journals/ai/KaelblingLC98} do not make this assumption. 
In POMDPs, a strategy cannot depend on the precise states of the system but only on the (sequence of) observed labels of visited states. 
As a consequence, and in contrast to MDPs, winning strategies may require memory. 
Indeed, the existence of strategies that satisfy qualitative objectives on MDPs is efficiently decidable in polynomial time using standard graph-algorithms~\cite{DBLP:conf/soda/ChatterjeeJH04,DBLP:conf/mfcs/ChatterjeeDH10}. 
In contrast, in POMDPs, deciding almost-sure reachability is already $\tEXPTIME$-complete~\cite{DBLP:conf/fossacs/BaierBG08,DBLP:conf/mfcs/ChatterjeeDH10}, and the existence of strategies for a more general class of almost-sure Rabin objectives is undecidable~\cite{DBLP:conf/fossacs/BaierBG08}. 

Multi-environment MDPs (MEMDPs)~\cite{DBLP:conf/fsttcs/RaskinS14} %
model a finite set of MDPs, called \emph{environments}, that share the state space but whose transition relation may be arbitrarily different. 
For any given objective, the key decision problem asks to find a single strategy that satisfies the objective in every associated MDP. 
MEMDPs are particularly suitable to model settings where one searches for a winning strategy that is robust to perturbations or random initialization problems. 
Examples of MEMDPs range from code-breaking games such as Mastermind, card games such as Free-cell, or Minesweeper, to more serious applications in robotics~\cite{DBLP:conf/icra/ChenFHL16}, \eg, high-level planning where an artifact with unknown location must be recovered. 

MEMDPs are special POMDPs~\cite{DBLP:conf/aips/ChatterjeeCK0R20}: An agent can observe the current state but not the transition relation determining the outcomes of its actions. 
This type of partial observation has an important effect: When an agent observes the next state, it may rule out that a certain MDP describes the true system.
Thus, the set of MDPs that may describe the system is monotonically decreasing~\cite{DBLP:conf/aips/ChatterjeeCK0R20}. 
We call this property as \emph{monotonic information gain}.

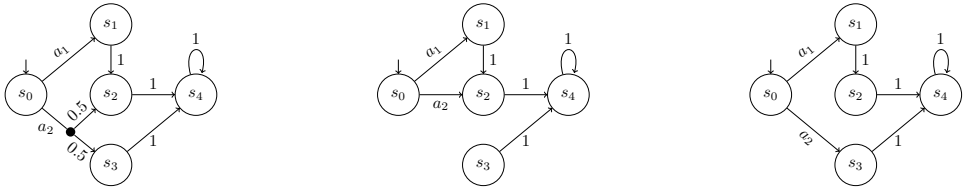
\begin{figure}[t]
    \centering
    \begin{subfigure}[t]{0.3\textwidth}
    \resizebox{0.7\textwidth}{!}{%
        \begin{tikzpicture}
    \node[state] (s0) {$s_0$};
    \node[above=0.3cm of s0] (init){};
    \node[state, right=0.9cm of s0] (s2) {$s_2$};
    \node[state, above=0.6cm of s2] (s1) {$s_1$};
    \node[state, below=0.6cm of s2] (s3) {$s_3$};
    \node[state, right=0.9cm of s2] (s4) {$s_4$};

    \node[circle, inner sep=2pt, fill=black, below right =0.4cm and 0.55cm of s0] (a2) {};

    \draw[-] (s0) --node[below] {$a_2\quad$} (a2);

    \draw[->] (s0) --node[sloped,above]{$a_1$} (s1);
    \draw[->] (a2) --node[sloped,above]{$0.5$} (s2);
    \draw[->] (a2) --node[sloped,below]{$0.5$} (s3);
    \draw[->] (s1) --node[right]{$1$} (s2);
    \draw[->] (s2) --node[above]{$1$} (s4);
    \draw[->] (s3) --node[below]{$1$} (s4);

    \draw[->] (s4) edge[loop above]node[above]{$1$} (s4);

    \draw[->] (init) -- (s0);
\end{tikzpicture}
    }%
    \end{subfigure}%
    \hfill
    \begin{subfigure}[t]{0.3\textwidth}
        \resizebox{0.7\textwidth}{!}{%
        \begin{tikzpicture}
    \node[state] (s0) {$s_0$};
    \node[above=0.3cm of s0] (init){};
    \node[state, right=0.9cm of s0] (s2) {$s_2$};
    \node[state, above=0.6cm of s2] (s1) {$s_1$};
    \node[state, below=0.6cm of s2] (s3) {$s_3$};
    \node[state, right=0.9cm of s2] (s4) {$s_4$};

    \draw[->] (s0) --node[sloped,above]{$a_1$} (s1);
    \draw[->] (s0) --node[sloped, below]{$a_2$}(s2);
    \draw[->] (s1) --node[right]{$1$} (s2);
    \draw[->] (s2) --node[above]{$1$} (s4);
    \draw[->] (s3) --node[below]{$1$} (s4);

    \draw[->] (s4) edge[loop above]node[above]{$1$} (s4);

    \draw[->] (init) -- (s0);
\end{tikzpicture}
            }%
    \end{subfigure}%
    \hfill
    \begin{subfigure}[t]{0.3\textwidth}
        \resizebox{0.7\textwidth}{!}{%
        \begin{tikzpicture}
    \node[state] (s0) {$s_0$};
    \node[above=0.3cm of s0] (init){};
    \node[state, right=0.9cm of s0] (s2) {$s_2$};
    \node[state, above=0.6cm of s2] (s1) {$s_1$};
    \node[state, below=0.6cm of s2] (s3) {$s_3$};
    \node[state, right=0.9cm of s2] (s4) {$s_4$};

    \draw[->] (s0) --node[sloped,above]{$a_1$} (s1);
    \draw[->] (s0) --node[sloped, below]{$a_2$}(s3);
    \draw[->] (s1) --node[right]{$1$} (s2);
    \draw[->] (s2) --node[above]{$1$} (s4);
    \draw[->] (s3) --node[below]{$1$} (s4);

    \draw[->] (s4) edge[loop above]node[above]{$1$} (s4);

    \draw[->] (init) -- (s0);
\end{tikzpicture}
            }%
    \end{subfigure}
    \caption{A MEMDP with three environments.}
    \label{fig:memdp:example}
\end{figure}

We illustrate MEMDPs in \Cref{fig:memdp:example}. 
This MEMDP consists of three environments.
When playing action $a_1$ in state $s_0$, we end up in state $s_1$ in every environment.
If we play action $a_2$ and observe that we end up in state $s_2$, we can infer that we have to either be in the left or the middle environment, while observing $s_3$ after $a_2$ rules out the middle environment.
Such information cannot be lost in a MEMDP, hence \emph{monotonic} information gain.

\begin{table}[b]
\centering
\setlength\tabcolsep{0.25em}
 \caption{Known complexity (completeness) for MEMDPs, new results are in boldface. 
  $\tNL$ and $\tEXPTIME$ denote the classes $\mathsf{NLOGSPACE}$ and $\mathsf{EXPTIME}$, and $\tUD$ denotes $\mathsf{UNDECIDABLE}$. 
 }
 \label{tbl:results}
\resizebox{\textwidth}{!}{
 \begin{NiceTabular}{l|lllll|lll}
 \toprule
 Semantics & \multicolumn{5}{c}{Almost-sure} & \multicolumn{3}{c}{Possible}\\
 Model & ~MDP~ & $2$-MEMDP & $k$-MEMDP & MEMDP & POMDP & ~MDP~  & MEMDP & POMDP \\ \midrule
Reachability & $\tP$~\cite{DBLP:conf/mfcs/ChatterjeeDH10} & $\tP$~\cite{DBLP:conf/fsttcs/RaskinS14} & $\bld{\tP}$~\Cref{cor:2memdps} & $\tPSPACE$~\cite{DBLP:conf/tacas/VegtJJ23}                          & $\tEXPTIME$~\cite{DBLP:conf/fossacs/BaierBG08,DBLP:conf/mfcs/ChatterjeeDH10} & $\tNL$~\cite{DBLP:conf/mfcs/ChatterjeeDH10} & $\bld{\tNL}$~\Cref{thm:pos:equaltomdps} & ${\tNL}$~\cite{DBLP:conf/mfcs/ChatterjeeDH10}\\ 
Safety       & $\tP$~\cite{DBLP:conf/mfcs/ChatterjeeDH10} & $\tP$~\cite{DBLP:conf/fsttcs/RaskinS14} & $\bld{\tP}$~\Cref{cor:2memdps}   & $\bld{\tPSPACE}$~\Cref{thm:rabin_pspace}                   & $\tEXPTIME$~\cite{DBLP:conf/fsttcs/BerwangerD08,DBLP:journals/lmcs/RaskinCDH07} & $\tP$~\cite{DBLP:conf/mfcs/ChatterjeeDH10}  & $\bld{\tP}$~\Cref{thm:pos:equaltomdps}  & $\tEXPTIME$~\cite{DBLP:conf/mfcs/ChatterjeeDH10}\\  
Büchi        & $\tP$~\cite{DBLP:conf/mfcs/ChatterjeeDH10} & $\tP$~\cite{DBLP:conf/fsttcs/RaskinS14} & $\bld{\tP}$~\Cref{cor:2memdps}   & $\bld{\tPSPACE}$~\Cref{thm:rabin_pspace}                     & $\tEXPTIME$~\cite{DBLP:conf/fossacs/BaierBG08,DBLP:conf/mfcs/ChatterjeeDH10} & $\tP$~\cite{DBLP:conf/mfcs/ChatterjeeDH10}  & $\bld{\tP}$~\Cref{thm:pos:equaltomdps}  & $\tUD$~\cite{DBLP:conf/fossacs/BaierBG08}\\ 
Co-Büchi     & $\tP$~\cite{DBLP:conf/mfcs/ChatterjeeDH10} & $\tP$~\cite{DBLP:conf/fsttcs/RaskinS14} & $\bld{\tP}$~\Cref{cor:2memdps}   & $\bld{\tPSPACE}$~\Cref{thm:rabin_pspace}                   & $\tUD$~\cite{DBLP:conf/fossacs/BaierBG08} & $\tP$~\cite{DBLP:conf/mfcs/ChatterjeeDH10}  & $\bld{\tP}$~\Cref{thm:pos:equaltomdps}  & $\tEXPTIME$~\cite{DBLP:conf/mfcs/ChatterjeeDH10}\\ 
Parity       & $\tP$~\cite{DBLP:conf/mfcs/ChatterjeeDH10} & $\tP$~\cite{DBLP:conf/fsttcs/RaskinS14} & $\bld{\tP}$~\Cref{cor:2memdps}   & $\bld{\tPSPACE}$~\Cref{thm:rabin_pspace} & $\tUD$~\cite{DBLP:conf/fossacs/BaierBG08} & $\tP$~\cite{DBLP:conf/mfcs/ChatterjeeDH10}  & $\bld{\tP}$~\Cref{thm:pos:equaltomdps}  & $\tUD$~\cite{DBLP:conf/fossacs/BaierBG08}\\ 
Rabin & $\tP$~\cite{DBLP:conf/icalp/ChatterjeeAH05} & $\bld{\tP}$~\Cref{cor:2memdps} & $\bld{\tP}$~\Cref{cor:2memdps}  & $\bld{\tPSPACE}$~\Cref{thm:rabin_pspace} & $\tUD$~\cite{DBLP:conf/fossacs/BaierBG08} & $\tP$~\cite{Baier2008} & $\bld{\tP}$~\Cref{thm:pos:equaltomdps} & $\tUD$~\cite{DBLP:conf/fossacs/BaierBG08}\\
\bottomrule
 \end{NiceTabular}
 }
\end{table}

The most relevant results for qualitative properties on MEMDPs are by Raskin and Sankur~\cite{DBLP:conf/fsttcs/RaskinS14}, and by Van~der~Vegt et al.~\cite{DBLP:conf/tacas/VegtJJ23}. 
The former paper focuses on the case with only \emph{two environments}, which we refer to as $2$-MEMDPs (and more generally, $k$-MEMDPs). 
It shows, among others, that almost-sure parity can be decided in polynomial time. 
In $2$-MEMDPs, the memory for a winning strategy is polynomial in the size of the MEMDP, while for arbitrary environments, winning strategies for almost-sure reachability may be exponential~\cite{DBLP:conf/tacas/VegtJJ23}. 
However, despite the need for exponential strategies, almost-sure \emph{reachability} in MEMDPs is decidable in $\tPSPACE$ via a recursive algorithm that exploits the aforementioned monotonic information gain~\cite{DBLP:conf/tacas/VegtJJ23}.

The main result of this paper is a landscape of qualitative Rabin objectives and their subclasses in MEMDPs, see~\Cref{tbl:results}. %
The key novelty is a $\tPSPACE$ algorithm to decide the existence of strategies in MEMDPs that satisfy an almost-sure Rabin objective. 
The algorithm relies on two key ingredients:
First, as shown in \Cref{sec:belief:sufficient}, for almost-sure Rabin objectives, a particular type of finite-memory strategies (with memory exponential in the number of environments) is sufficient, in contrast to POMDPs. 
Second, towards an algorithm, we observe that a traditional, per Rabin-pair, approach for Rabin objectives does not generalize to MEMDPs (\cref{sec:nonlocalrabin}). 
It does, however, generalize to what we call belief-local MEMDPs, in which one, intuitively, cannot gain any information (\cref{sec:local_rabin_alg}).  
Exploiting the monotonic information gain of (general) MEMDPs, we construct a recursive algorithm with polynomial stack size, inspired by~\cite{DBLP:conf/tacas/VegtJJ23}, that solves Rabin objectives in belief-local MEMDPs (\cref{sec:recursive_computation}).
Finally, we establish $\tPSPACE$-hardness for almost-sure safety and clarify that for possible objectives, MEMDPs can be solved as efficiently as MDPs. 
Proofs are in the appendix.

\subsection*{Related Work}
\label{sec:relatedwork}

Besides almost-sure objectives, Raskin and Sankur~\cite{DBLP:conf/fsttcs/RaskinS14} also study \emph{limit-sure} objectives for MEMDPs of two environments. 
Where almost-sure objectives require that the satisfaction probability of the objective equals one, an objective is satisfied \emph{limit-surely} whenever for any $\epsilon > 0$, there is a strategy under which the objective is satisfied with probability at least $1-\epsilon$.
For $2$-MEMDPs, limit-sure parity objectives are decidable in P~\cite{DBLP:conf/fsttcs/RaskinS14}.

Closely related to the study of almost-sure objectives in MEMDPs and POMDPs is the \emph{value $1$ problem} for probabilistic automata (PA).
A PA can be seen as a POMDP where all states have the same observation and are thus indistinguishable. 
The value $1$ problem is to decide whether the supremum of the acceptance probability over all words equals one.
This problem is undecidable for general PA~\cite{DBLP:conf/icalp/GimbertO10}, but recent works have studied several subclasses of PA for which the value $1$ problem is decidable.
Most notably, \emph{$\#$-acyclic} PA~\cite{DBLP:conf/icalp/GimbertO10}, \emph{structurally simple} and \emph{simple} PA~\cite{DBLP:conf/lics/ChatterjeeT12}, and \emph{leaktight} PA~\cite{DBLP:conf/lics/FijalkowGO12}.
Leaktight PA are the most general of these subclasses~\cite{DBLP:journals/corr/Fijalkow14}. 
They contain the others, and the value $1$ problem is $\tPSPACE$-complete~\cite{DBLP:journals/corr/FijalkowGKO15}.

The interpretation of MEMDPs as a special POMDP is successfully used in the quantitative setting, where the goal is to find a strategy that maximizes the probability of reaching a target. Finding a strategy that maximizes the finite-horizon expected reward in MEMDPs is  $\tPSPACE$-complete~\cite{DBLP:conf/aaai/ChadesCMNSB12}, as is also the case for the same problem in more general POMDPs~\cite{DBLP:journals/mor/PapadimitriouT87}.

Besides a special class of POMDP, MEMDPs are also a class of \emph{robust} MDP with discrete uncertainty sets~\cite{DBLP:journals/ior/NilimG05,DBLP:journals/mor/Iyengar05}.
In the robotics and AI communities MEMDPs are studied in that context, primarily for quantitative objectives such as maximizing discounted reward or regret minimization~\cite{DBLP:conf/aaai/RigterLH21}.
\emph{Parametric} MDPs (pMDPs) are another formalism for defining MDPs with a range of transition functions~\cite{DBLP:conf/birthday/0001JK22}.
Where we seek a single strategy that is winning for all environments, parameter synthesis is often about finding a single parameter instantiation (or: environment) such that all strategies are winning~\cite{DBLP:journals/tac/CubuktepeJJKT22,DBLP:conf/cav/AndriushchenkoC21}.
That is, the quantifiers are reversed. 
A notable exception is work on quantitative properties in pMDPs by Arming et al.~\cite{DBLP:conf/qest/ArmingBCKS18}, which interprets a parametric MDP as a MEMDP and solves it as a POMDP.
With the (altered) quantifier order in pMDPs, memoryless deterministic strategies are sufficient: 
The complexity of finding a parameter instantiation such that under all (memoryless deterministic) strategies a quantitative reachability objective is satisfied is in NP when the number of parameters is fixed and ETR-complete in the general case~\cite{DBLP:conf/concur/WinklerJPK19}. 
Determining whether a memoryless deterministic policy is robust is both ETR and co-ETR-hard~\cite{DBLP:conf/concur/WinklerJPK19}.

\section{Background and Notation}\label{sec:background}
Let $\mathbb{N}$ denote the natural numbers.
For a set $X$, the powerset of $X$ is denoted by $\powerset(X)$, and the disjoint union of two sets $X,Y$ is denoted $X \sqcup Y$.
A discrete probability distribution over a finite set $X$ is a function $\mu \colon X \to [0,1]$ with $\sum_{x \in X} \mu(x) = 1$, the set of all discrete probability distributions over $X$ is $\dist{X}$.
The support of a distribution $\mu \in \dist{X}$ is the set of elements $x$ with $\mu(x) >0$ and is denoted by $\supp(\mu)$.
We denote the uniform distribution over $X$ by $\unif(X)$ and the Dirac distribution with probability $1$ on $x$ by $\dirac(x)$.

\subsection{Markov Decision Processes}\label{subsec:mdps}
We briefly define standard (discrete-time) Markov decision processes and Markov chains.

\begin{definition}[MDP]\label{def:MDP}
  A \emph{Markov decision process} (MDP) is a tuple $\mdp \defeq \tuple{\stateset, \actionset, \initdist, \transitionfunc}$ where $\stateset$ is a finite set of states and $\initdist \in \stateset$ is the initial state, $\actionset$ is the finite set of \emph{actions}, and $\transitionfunc \colon \stateset \times \actionset \rightharpoonup \dist{\stateset}$ is the partial probabilistic \emph{transition function}.
  By $\actionset(s)$, we denote the set of enabled actions for $s$, which are the actions for which $\transitionfunc(s, a)$ is defined.
\end{definition}
For readability, we write $\transitionfunc(s,a,s')$ for  $\transitionfunc(s,a)(s')$.
A \emph{path} in an MDP is a sequence of successive states and actions, $\Path = s_0 a_0 s_1 a_1 \ldots \in (SA)^*S$, such that $s_0 = \initdist$, $a_i \in \actionset(s)$, and $p(s_i,a_i,s_{i+1}) > 0$ for all $i \geq 0$, and we write $\overline{\Path}$ for only the sequence of states in $\Path$.
The probability of following a path $\Path$ in an MDP with transition function $\transitionfunc$ is defined as $\transitionfunc(\Path) = \transitionfunc(s_0a_0\dots) = \prod_{i=0} \transitionfunc(s_i, a_i, s_{i+1})$.
The set of all (finite) paths on an MDP $\mdp$ is $\paths(\mdp)$ (resp.\ $\pathsfin(\mdp)$).
Whenever clear from the context, we omit the MDP $\mdp$ from these notations.
We write $\first(\Path)$ and $\last(\Path)$ for the first and last state in a finite path, respectively, and the concatenation of two paths $\pi_1, \pi_2$ is written as $\pi_1 \cdot \pi_2$.
The set of \emph{reachable states} from $S' \subseteq S$ is $\reachable{S'} \defeq \{ s' \in S \mid \exists \Path \in \pathsfin\colon \first(\Path) \in S', \last(\Path) = s' \}$.
A state $s\in S$ is a \emph{sink state} if $\reachable{\{s\}}=\{s\}$.
An MDP is acyclic if each state is a sink state or not reachable from its successor states.
The \emph{underlying graph} of an MDP is a tuple $\tuple{V,E}$ with vertices $V \defeq \{v_s \mid s \in S\}$ are edges $E \defeq \{ \tuple{v_s, v_{s'}} \mid \forall s,s' \in S\colon \exists a \in A\colon \transitionfunc(s,a,s') > 0 \}$.

A \emph{sub-MDP} of an MDP $\mdp = \tuple{S,A,\initdist,\transitionfunc}$ is a tuple $\tuple{S',A', \initdist',\transitionfunc'}$
with states $\emptyset \neq S' \subseteq S$, actions $\emptyset \neq A' \subseteq A$, initial state $\initdist' \in S'$, and a transition function $\transitionfunc'$ such that $\forall s \in S\colon \emptyset \neq A'(s) \subseteq A(s)$ and $\forall s,s' \in S', a \in A'(s)\colon \supp(\transitionfunc(s,a)) \subseteq S'$ and $\transitionfunc'(s,a,s') \defeq \transitionfunc(s,a,s')$.  
An \emph{end-component} of an MDP $\mdp$ is a sub-MDP where $\reachable{S'} = S'$. 
Sub-MDPs and end-components are standard notions; for details, cf.~\cite{DBLP:phd/us/Alfaro97,Baier2008,DBLP:conf/fsttcs/RaskinS14}.

A \emph{Markov chain} is an MDP where there is only one action available at every state: $\forall s \in S. \, |A(s)| = 1$.
We write an MC as a tuple $\mc \defeq \tuple{\stateset, \initdist, \transitionfunc}$ where $\stateset$ is a set of states, $\initdist \in \stateset$ is the initial state, and $\transitionfunc \colon \stateset \to \dist{\stateset}$ is the transition function.
Paths in MCs are sequences of successive states, and their underlying graph is analogously defined as for MDPs.
A subset $T \subseteq S$ is strongly connected if for each pair of states $\tuple{s,s'} \in T$ there exists a finite path $\Path$ with $\first(\Path) = s$ and $\last(\Path) = s'$.
A \emph{strongly connected component} (SCC) is a strongly connected set of states $T$ such that no proper superset of $T$ is strongly connected.
A \emph{bottom SCC} (BSCC) is an SCC $S'$ where no state $s \in S \setminus S'$ is reachable.

\subsection{Strategies and Objectives}\label{subsec:properties}
We now formally define strategies and their objectives.
Strategies resolve the action choices in MDPs.
A strategy is a measurable function $\strat \colon \pathsfin \to \dist{\actionset}$ such that for all finite paths $\Path \in \pathsfin$ we have $\supp(\strat(\last(\Path))) \subseteq A(\last(\Path))$.
A strategy is deterministic if it maps only to Dirac distributions, and it is memoryless if the action (distribution) only depends on the last state of every path.
We write $\strats$ for the set of all strategies.

A strategy $\strat$ applied to an MDP $\mdp$ induces an infinite-state MC $\mdp[\strat] = \tuple{S^*, \initdist, \transitionfunc_{\strat}}$ such that for any path $\Path$: $\transitionfunc_{\strat}(\overline{\Path}, \overline{\Path} \cdot s') = \transitionfunc(\last(\Path), \strat(\Path), s')$. 
This MC has a probability space with a unique probability measure $\mathbb{P}_{M[\strat]}$ via the cylinder construction~\cite{Baier2008,DBLP:journals/corr/abs-2305-10546}. 

A strategy is a \emph{finite-memory} strategy if it can be encoded by a stochastic Moore machine, also known as a finite-state controller (FSC)~\cite{DBLP:conf/uai/MeuleauKKC99}.
An FSC is a tuple $\fsc = \tuple{N, n_{\iota}, \alpha, \eta}$, where $N$ is a finite set of \emph{memory nodes}, the \emph{initial node} $n_\iota \in N$, $\alpha \colon \stateset \times N \to \dist{\actionset}$ is the \emph{action mapping}, and $\eta \colon N \times \actionset \times \stateset \to \dist{N}$ is the \emph{memory update function}.
The induced MC $\mdp[\fsc]$ of an MDP $\mdp$ and finite-memory strategy represented by an FSC $\fsc$ is finite and defined by the following product construction:
$\mdp[\fsc] = \tuple{S \times N, \tuple{\initdist, n_{\iota}}, \transitionfunc_{\fsc}}$, where 
$
\transitionfunc_{\fsc}(\tuple{s,n},\tuple{s',n'}) = \sum_{a \in A} \alpha(s,n,a) \cdot \transitionfunc(s,a,s') \cdot \eta(n,a,s',n')$.
A strategy is \emph{memoryless} if its FSC representation has a single memory node, \ie, $|N| =1$.

We consider both \emph{almost-sure} and \emph{possible} objectives for MDPs and MCs with state space $S$.
An \emph{objective} $\Phi$ is a measurable subset of $P \subseteq S^\omega$.
An MC $\mc$ is almost-surely (or possibly) \emph{winning} for an objective $\Phi$ iff $\mathbb{P}_{\mc}(\paths(\mc) \cap \Phi) = 1$ (or $\mathbb{P}_{\mc}(\paths(\mc) \cap \Phi) > 0$).
A state $s$ is winning whenever the MC with its initial state replaced by $s$ is winning.
We write $\mc \models \Phi$ and $s \models^{\mc} \Phi$ to denote that MC $\mc$ and state $s$ are winning for $\Phi$. 

\begin{definition}[Winning]\label{def:winning}
    An MDP $\mdp$ is winning for $\Phi$ if there exists a strategy $\strat \in \strats$ such that the induced MC $\mdp[\strat]$ is winning for $\Phi$, and the strategy $\strat$ is then also called winning.
\end{definition}
Like above, we denote winning in MDPs with $M \models \Phi$ or $s \models^M \Phi$, respectively. Sometimes, we explicitly add the winning strategy and write $\mdp[\strat] \models \Phi$ and $s \models^{\mdp[\strat]} \Phi$ for the MDP winning $\Phi$ under $\strat$ from its initial state or some other state $s$, respectively. 

\begin{definition}[Winning Region]\label{def:winning region}
We call the set of states of an MDP (or MC) that are winning objective $\Phi$ the \emph{winning region}, denoted as $\winregion{\mdp}{\Phi} = \{ s \in S \mid s \models^{\mdp} \Phi \}$.
\end{definition}
We define Rabin objectives.
Let $\mc = \tuple{S, \initdist, \transitionfunc}$ be a MC with associated probability measure $\mathbb{P}_{\mc}$, $\Path \in \paths(\mc)$ a path, and $\infset(\Path) \subseteq S$ the set of states reached infinitely often along $\Path$.

\begin{definition}[Rabin objective]
    A Rabin objective is a set of Rabin pairs: $\rabinObjDef$.
        A path $\Path \in \paths(\mc)$ wins $\rabinObj$ if there is a Rabin pair $\rabinPair{i}$ in $\rabinObj$ where the path leaves the states in $\rabincobuchi_i$ only finitely many times, and states in $\rabinbuchi_i$ are visited infinitely often.
        The MC $\mc$ wins a Rabin objective almost-surely (or possibly) if 
        $
        \mathbb{P}_{\mc}(\Path \in \paths(\mc) \mid \exists \rabinPair{i} \in \Phi\colon \infset(\Path) \subseteq \rabincobuchi_i \wedge \infset(\Path) \cap \rabinbuchi_i \neq \emptyset) = 1$ (or possibly when $> 0$).
\end{definition}
For MDPs, almost-sure and possible Rabin objectives can be solved in polynomial time~\cite[Thm.\ 10.127]{Baier2008}, and the strategies are memoryless deterministic~\cite[Thm.\ 4]{DBLP:conf/icalp/ChatterjeeAH05}.
Other objectives, specifically reachability ($\reachobj$), safety ($\safeobj$), \Bu ($\buchiobj$), \CoB ($\cobuchiobj$), and parity are included in Rabin objectives~\cite{DBLP:conf/icalp/ChatterjeeAH05} for a set $T \subseteq S$.
\Cref{appx:aux:sec:background} contains formal definitions.

\section{Multi-Environment MDPs and the Problem Statement}\label{subsec:memdps}
Next, we introduce the multi-environment versions of MDPs and MCs.
Intuitively, these can be seen as finite sets of MDPs and MCs that share the same states and actions.

\begin{definition}[MEMDP]\label{def:MEMDP}
  A \emph{multi-environment MDP} (MEMDP) is a tuple $\memdptuple$ with $\stateset, \actionset, \initdist$ as for MDPs, and $\transitionfuncs$ is a \emph{finite set of transition functions}, where $I$ are the \emph{environment} indices. 
  We also write $\memdp = \{\mdp_i\}_{i \in I}$ as a set of MDPs,
  where $M_i = \tuple{S, A, \initdist, p_i}$.
\end{definition}
For a MEMDP $\memdp$ and a set $I' \subseteq I$, we define the restriction to $I'$ as the MEMDP $\restrictenv{\memdp}{I'} = \tuple{\stateset, \actionset, \initdist, \{p_{i}\}_{i \in I'}}$.
To change the initial state of $\memdp$, we define $\memdp^{\initdist'} = \tuple{\stateset, \actionset, \initdist', \transitionfuncs}$.

A \emph{multi-environment MC} (MEMC) is a MEMDP with $\forall s \in S\colon |A(s)| = 1$.
A MEMC is a tuple $\memc = \tuple{\stateset, \initdist, \transitionfuncs}$ or equivalently a set of MCs $\memc = \{\mc_i\}_{i \in I}$.
A BSCC in a MEMC %
is a set of states $S' \subseteq S$ such that $S'$ forms a BSCC in every MC $\mc_i \in \memc$.
The underlying graph of a MEMDP or MEMC is the disjoint union of the graphs of the environments.

Similarly to \Cref{def:winning} for MDPs, a strategy $\strat$ for a MEMDP $\memdp$ is winning for objective $\Phi$ if and only if the induced MEMC $\memdp[\strat] = \{\memdp[\strat]_i \}_{i \in I}$ is winning in all environments: $\forall i \in I\colon \memdp[\strat]_i$ is winning for $\Phi$.
Winning regions, \Cref{def:winning region}, extend similarly to MEMDPs: $\winregion{\memdp}{\Phi} = \{s \in S \mid s \models^\memdp \Phi\}$.

The central decision problem in this paper is:
\begin{mdframed}[linewidth=2pt,topline=false,rightline=false,bottomline=false]
Given a MEMDP $\memdp$ and a Rabin objective $\Phi$, is there a winning strategy for $\Phi$ in $\memdp$.
\end{mdframed}
We assume MEMDPs are encoded as an explicit list of MDPs and each MDP is given by the explicit transition function.
The value of the probabilities are not relevant.

We first consider possible semantics in MEMDPs, completing~\cref{tbl:results}.
For POMDPs, \CoB objectives are known to be undecidable~\cite{DBLP:conf/fossacs/BaierBG08}. 
We show that for various objectives, deciding them in MEMDPs is equally hard as in their MDP counterparts.

\begin{restatable}[]{theorem}{thmposequaltomdps}
\label{thm:pos:equaltomdps}
   Deciding possible reachability objectives for MEMDPs is in $\tNL$.
   Deciding possible safety, \Bu, \CoB, parity and Rabin objectives for MEMDPs is in $\tP$.
\end{restatable}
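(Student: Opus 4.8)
The plan is to collapse the MEMDP decision problem into independent per-environment checks on ordinary MDPs, via the equivalence
\[
  \memdp \text{ possibly wins } \Phi \quad\Longleftrightarrow\quad \forall i \in I:\ \mdp_i \text{ possibly wins } \Phi .
\]
The left-to-right direction is immediate: if $\strat$ is winning for $\memdp$, then by definition every induced MC $\memdp[\strat]_i$ possibly wins $\Phi$, and $\memdp[\strat]_i = \mdp_i[\strat]$, so $\strat$ already witnesses that $\mdp_i$ possibly wins $\Phi$.

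The substance is the converse, and the key observation is that for \emph{possible} objectives a single strategy may \emph{commit}, once and for all, to one guessed environment. For each $i\in I$ fix a strategy $\strat_i$ witnessing that $\mdp_i$ possibly wins $\Phi$; these may be taken memoryless deterministic by the MDP results noted above (Rabin, and hence its subclasses, admits such strategies). Define a finite-memory strategy $\strat$ that first draws an index $e\in I$ according to $\unif(I)$, records $e$ in its memory, and thereafter plays $\strat_e$ forever. For an arbitrary true environment $e^\ast\in I$, the induced MC $\memdp[\strat]_{e^\ast}$ realises, with probability $1/|I|$, the branch on which $e=e^\ast$ was guessed; conditioned on that branch the path distribution is exactly that of $\mdp_{e^\ast}[\strat_{e^\ast}]$. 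Hence $\mathbb{P}_{\memdp[\strat]_{e^\ast}}(\Phi)\ge \tfrac{1}{|I|}\,\mathbb{P}_{\mdp_{e^\ast}[\strat_{e^\ast}]}(\Phi)>0$, so $\strat$ is winning in every environment. Crucially, the argument only needs positivity separately in each environment, so it never requires the guess to be correct on more than a $1/|I|$ fraction.

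I expect the main obstacle to be justifying that \emph{commitment}, rather than re-randomising at every step, is both necessary and sound. It is necessary because a perpetually re-randomising strategy (e.g.\ the uniform one) can fail even for safety: the probability of leaving $T$ compounds to $1$, and likewise the probability of sustaining a fixed recurrent loop over infinitely many steps degenerates to $0$, so no prefix-independent objective is preserved. Committing to a single $\strat_e$ avoids this, but it must be realised as a legitimate strategy: the guessed $e$ has to persist in memory, and if $\strat_e$'s first action already depends on $e$ the randomisation cannot be deferred past the first step. This is handled by a finite-memory strategy whose initial memory encodes the guess (equivalently, the mixture $\{\strat_e\}_{e\in I}$ admits a realisation-equivalent behavioural strategy by Kuhn's theorem, since the player has perfect recall of the path).

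Finally, the complexity bounds follow from the per-environment characterisations. Possible reachability on an MDP is plain graph reachability of $T$ from $\initdist$, decidable in $\tNL$; the conjunction over the explicitly listed, hence polynomially many, environments remains in $\tNL$ because its complement---``some $\mdp_i$ cannot reach $T$''---is decided by guessing $i$ and certifying non-reachability, using $\tNL=\mathrm{co}\tNL$. For safety, \Bu, \CoB, parity and Rabin, deciding whether $\mdp_i$ possibly wins $\Phi$ reduces to computing the (maximal) end-components of $\mdp_i$ and testing reachability of a ``good'' one---a reachable cycle inside $T$ for safety, an end-component meeting $T$ for \Bu, an accepting end-component for parity and Rabin---all polynomial by the cited MDP results; running these checks on each of the polynomially many environments keeps the overall problem in $\tP$.
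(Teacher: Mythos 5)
Your proposal is correct, and it uses the same overall decomposition as the paper: $\memdp$ possibly wins $\Phi$ iff every $\mdp_i$ does, followed by independent per-environment checks giving $\tNL$ for reachability and $\tP$ for the remaining objectives. Where you genuinely diverge is in how the converse of that equivalence is witnessed. The paper builds a \emph{memoryless} strategy that mixes actions at every state, $\strat_{\memdp}(s) = \unif(\{\strat_i(s)\}_{i \in I})$, and asserts that this preserves positive winning probability; you instead draw an environment index once, record it in memory, and commit to $\strat_e$ forever, so that in the true environment $e^\ast$ a $\nicefrac{1}{|I|}$-fraction of runs reproduces $\mdp_{e^\ast}[\strat_{e^\ast}]$ exactly. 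Your construction is the more robust of the two: the objection you raise against perpetual re-randomisation applies verbatim to the paper's per-state mixture for the prefix-independent objectives. Take two environments in which actions $a$ and $c$ swap the roles of ``remain in a safe self-loop'' and ``fall into a bad sink'': each $\strat_i$ is surely safe in its own $\mdp_i$, yet the uniform per-state mixture leaves the safe set almost-surely in both environments, so it is not possibly winning for $\csafeobj T$ (the memoryless mixture is unproblematic only for reachability, where a single finite witness path keeps positive probability under the mixture). Your commit-once strategy---realised either as an FSC whose initial memory encodes the guess or as the Kuhn-equivalent behavioural strategy, both legitimate here since strategies observe the full path---closes that gap while still yielding the same characterisation, and your complexity analysis (including the $\tNL = \mathrm{co}\tNL$ step for the universally quantified reachability check over the explicitly listed environments) is sound.
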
\noindent
Using results on MDPs from~\cite{DBLP:conf/mfcs/ChatterjeeDH10}, these upper bounds are tight.
The main observation for membership is that a MEMDP is winning possibly objectives iff each MDP is possibly winning, due to a randomization over the individual winning strategies. We can then construct algorithms that solve each environment sequentially to answer the query on MEMDPs.

From here on, we focus exclusively on the almost-sure objectives.
\begin{theorem}
Almost-sure reach, safety, (co-)B\"uchi, and Rabin objectives for MEMDPs are $\tPSPACE$-hard.
\end{theorem}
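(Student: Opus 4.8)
The plan is to prove hardness for all five listed objective classes through a single family of reductions, exploiting the fact that reachability, safety, B\"uchi, co-B\"uchi and Rabin objectives all coincide on an instance whose only recurrent behaviour is looping in one of two absorbing sinks. Concretely, I would start from a canonical $\tPSPACE$-complete source problem (TQBF, or equivalently the reachability hardness already recorded for MEMDPs in \cite{DBLP:conf/tacas/VegtJJ23}) and build an MEMDP $\memdp$ together with two absorbing sink states $g$ (``good'') and $b$ (``bad''), each carrying a self-loop, such that the arena is acyclic apart from these two sinks. The reduction is arranged so that a YES instance corresponds exactly to the existence of a \emph{single} strategy that, in every environment, reaches $g$ almost surely, whereas a NO instance forces every strategy to reach $b$ with positive probability in at least one environment. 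The interplay of ``there is a single strategy'' (existential) with ``winning in all environments'' (universal) is what encodes the quantifier alternation of the source problem; the monotonic information gain is used so that the amount of information a strategy must track grows only polynomially while the branching over environments is exponential, matching the $\tPSPACE$ bound.

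Given such a two-sink instance, all listed objective classes become hard simultaneously. For reachability this is immediate from the construction ($\Finally\{g\}$ almost surely). For safety I take the objective $\Always(S \setminus \{b\})$: since $b$ is absorbing, avoiding $b$ with probability one is equivalent to assigning zero probability to ever reaching $b$, which holds in every environment iff the instance is YES. For co-B\"uchi $\Finally\Always\{g\}$ and B\"uchi $\Always\Finally\{g\}$, the absorption of $g$ makes both objectives coincide with almost-surely reaching $g$: on a YES instance the strategy reaches $g$ and then loops there forever, satisfying both; on a NO instance positive probability mass is trapped in $b$, so in the offending environment neither holds almost surely. Rabin (and parity) then inherit hardness because they subsume the previous four — one simply picks a single Rabin pair, or a two-priority parity assignment, that accepts the loop at $g$ and rejects the loop at $b$.

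The one case that does \emph{not} follow by a trivial weakening of reachability is safety, since safety is not implied by reachability; this is precisely why the two-sink ``accept/reject'' normal form of the arena is essential rather than merely convenient. I would therefore take care that the reduction never produces a reachable end-component disjoint from $\{g\}$ other than the sink $b$ itself, so that in a NO instance ``failing to reach $g$'' is literally ``reaching $b$''. If the reachability hardness construction of \cite{DBLP:conf/tacas/VegtJJ23} is reused rather than rebuilt from scratch, imposing this shape amounts to redirecting all of its ``stuck'' behaviour into the single sink $b$ and verifying that this redirection preserves the YES/NO partition in every environment.

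I expect the main obstacle to be the design and verification of the underlying $\tPSPACE$ reduction itself — encoding alternation via the single existential strategy against the universal quantification over environments under monotonic information gain — together with checking that the acyclic, two-sink normal form can be imposed without disturbing that encoding (in particular, that no environment introduces a spurious recurrent set outside $\{g,b\}$). Once the arena is in this normal form, the passage from reachability to safety, (co-)B\"uchi and Rabin is routine, exactly as sketched above.
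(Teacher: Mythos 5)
Your proposal is correct and follows essentially the same route as the paper: reuse the $\tPSPACE$-hardness of almost-sure reachability from \cite{DBLP:conf/tacas/VegtJJ23}, exploit the acyclic/sink structure of that construction so that failing to reach the target means hitting a bad absorbing state (which gives safety), and obtain (co-)B\"uchi and Rabin by reduction from reachability with absorbing targets. The paper states this more tersely (``reverse the target and non-target states'' in the acyclic construction), but your two-sink normal form is exactly the point that makes the safety case go through.
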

This theorem follows from~\cite{DBLP:conf/tacas/VegtJJ23}, which shows that
almost-sure reachability is $\tPSPACE$-complete.
$\tPSPACE$-hardness of almost-sure safety can be established by minor modifications to the proof:
In particular, the $\tPSPACE$-hardness proof for reachability operates on acyclic MEMDPs, where we may reverse the target and non-target states to change the objective from almost-sure reachability to safety.
$\tPSPACE$-hardness of almost-sure (co-)B\"uchi, parity, and Rabin objectives follows via reduction from almost-sure reachability.

\section{Belief-Based Strategies are Sufficient}\label{sec:belief:sufficient}

In this section, we fix a MEMDP $\memdp = \tuple{S, A, \initdist, \transitionfuncs}$ with an almost-sure Rabin objective $\Phi$, and constructively show a more refined version of the following statement.
\begin{corollary}
    For a MEMDP $\memdp$ and an almost-sure Rabin objective $\Phi$, if there exists a winning strategy for $\Phi$, there also exists a winning finite-memory strategy $\stratwin$ such that the finite-state controller (FSC) for $\strat^{*}$ is exponential (only) in the number of environments.
\end{corollary}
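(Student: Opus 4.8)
The plan is to replace an arbitrary winning strategy by one whose only memory is the current \emph{belief}, augmented with a small amount of scheduling information needed to reconcile the environments. For a finite path $\Path$ I track the pair $\tuple{\projstate{\Path}, \projbelief{\Path}}$, where $\projstate{\Path} = \last(\Path)$ is the (fully observed) current state and $\projbelief{\Path} = \{i \in I \mid \transitionfunc_i(\Path) > 0\}$ is the set of environments still consistent with the observed path. Since extending a path can only delete indices from $\projbelief{\Path}$, the belief is monotonically non-increasing along every play; this is exactly the monotonic information gain. Because states are observed, the belief is a function of the observable history, so any strategy depending only on $\tuple{\projstate{\Path}, \projbelief{\Path}}$ together with the bounded extra memory introduced below is an FSC whose memory nodes range over $\powerset(I)$ times that extra memory.

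The first structural step is a \emph{stratification} of plays by their belief. As $\projbelief{\cdot}$ is monotone and $\powerset(I)$ is finite, along every infinite play the belief stabilises after finitely many steps to some final set $J^{*} \subseteq I$, after which the play never leaves the belief-local sub-MEMDP $\restrictenv{\memdp}{J^{*}}$, where no further information can be gained. I therefore build the witness strategy by induction over the inclusion order on $\powerset(I)$: whenever a step strictly shrinks the belief, control is handed to the (already constructed) strategy for the smaller belief set, and within a fixed belief set a \emph{local} winning strategy for the corresponding belief-local sub-MEMDP is used. Correctness of the hand-off is immediate from monotonicity, since the smaller-belief strategy is invoked from precisely the state at which information was gained.

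The crux is the belief-local case: given that $\restrictenv{\memdp}{J}$ admits a winning strategy for $\Phi$ along which the belief never changes, I must produce a \emph{finite-memory} winning strategy whose extra memory is bounded by a function of $|J|$ and the number of Rabin pairs only. Here the naive per-pair reduction fails --- as isolated in the later treatment of non-local Rabin objectives --- because a single action must simultaneously drive each environment towards a possibly different accepting pair, with no observation available to disambiguate them. My approach is to combine the memoryless determinacy of Rabin objectives on ordinary MDPs, applied to each environment separately, with a round-robin scheduler that cycles through the environments and their winning pairs, ensuring that each environment is served infinitely often while ultimately confining the play to an end-component that is accepting in \emph{every} environment. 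The scheduler's state is the extra memory and is polynomial in $|J|$ and the number of pairs.

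Finally I compose the local strategies along the belief DAG and read off the bound: a memory node is a pair consisting of the current belief set $J \in \powerset(I)$ and the local scheduler state, so the FSC has $2^{O(|I|)}$ memory nodes --- exponential \emph{only} in the number of environments, with at most polynomial dependence on the number of Rabin pairs and on $|S|$ in the action mapping. That the composed strategy $\stratwin$ wins $\Phi$ in every environment follows, via the stratification, from the correctness of each belief-local strategy. I expect the main obstacle to be exactly this belief-local step: one must prove that the scheduler neither starves any environment nor obstructs convergence into a commonly accepting end-component, which is delicate precisely because co-B\"uchi-type constraints forbid perpetual alternation and force the play to commit eventually.
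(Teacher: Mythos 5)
Your skeleton (tracking the belief $\tuple{\last(\Path),\projbelief{\pathbelief(\Path)}}$, monotonic information gain, stratifying by belief and handing off when the belief shrinks) matches the paper's, but the belief-local step --- which you yourself flag as the main obstacle --- contains a genuine gap, and the round-robin scheduler is not the way to close it. The missing observation is this: once the play is confined to a fragment in which the belief can no longer change, \emph{every} environment in $J$ induces the same underlying graph on that fragment, because a transition present in one environment of $J$ and absent in another is by definition revealing and would strictly shrink the belief. Since almost-sure Rabin winning in a finite MDP depends only on the underlying graph, a single memoryless strategy that wins in one environment of such a fragment wins in all of them simultaneously; there is nothing to reconcile and no scheduler is needed. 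The paper implements this by first playing a strategy $\stratallow$ that randomizes uniformly over the \emph{allowed} actions (those all of whose possible successor beliefs remain winning), proving that every BSCC of the induced multi-environment chain has a stabilized belief and a common underlying graph across its environments, and then switching inside such a BSCC to a memoryless sub-strategy $\stratbscc$ of $\stratallow$ that wins the Rabin condition on that common graph. The resulting $\stratwin$ is purely belief-based: its FSC has exactly $\powerset(I)$ as memory nodes and is randomized, with no auxiliary scheduler state at all.

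Without this graph-coincidence lemma your construction does not go through. Interleaving per-environment memoryless Rabin strategies aimed at different pairs can perpetually re-enter states outside every $\rabincobuchi_i$, and nothing in your round-robin forces the play to commit to an end component that is accepting in every environment --- indeed, proving that such a commonly accepting end component exists \emph{is} the content of the graph argument, not something a scheduler can manufacture. A secondary imprecision feeds into this: you write that after stabilization the play lives in $\restrictenv{\memdp}{J^{*}}$ ``where no further information can be gained,'' but restricting to the environments of $J^{*}$ does not eliminate revealing transitions among them; information can still be gained inside $\restrictenv{\memdp}{J^{*}}$, and the local strategy must remain winning both on plays that stay at belief $J^{*}$ forever and on plays that later reveal more. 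The paper handles exactly this tension with the allowed-actions restriction before the BSCC analysis; your proposal has no analogue of it.
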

This corollary to \cref{thm:beliefsufficient} below immediately gives rise to EXPTIME algorithms for the decision problem that simply iterate over all strategies. 
The particular shape of the strategy, a notion that we call \emph{belief-based}, will be essential later to establish $\tPSPACE$ algorithms.

\subsection{Beliefs in MEMDPs}
It is helpful to consider the strategy as a model for a decision-making agent. 
Then, in MEMDPs, the agent observes in which state $s \in S$ it currently is but does not have access to the environment $i \in I$. 
The hiding of environments gives rise to the notion of a \emph{belief-distribution}
in MEMDPs, akin to beliefs in partially observable MDPs~\cite{DBLP:journals/ai/KaelblingLC98}.
A belief distribution in a MEMDP is a probability distribution over environments $\mu \in \dist{I}$ that assigns a probability to how likely the agent is operating in each environment.
As we show below, we only need to consider the \emph{belief-support}, \ie, a subset of environments that keeps track of whether it is \emph{possible} that the agent operates in those environments. 
From now on, we shall simply write \emph{belief} instead of \emph{belief-support}.

\begin{definition}[Belief, belief update]
Given a finite path $\Path$, we define its \emph{belief} as its last state together with the set of environments for which this path has positive probability:
$\pathbelief(\Path)= \tuple{ \last(\Path), \{ i \in I \mid p_i(\Path) > 0 \}}$.
For path $\Path \cdot a s'$, the belief can be characterized recursively by the \emph{belief update} function $\beliefupd \colon S  \times \powerset(I) \times A \times S \to S \times \powerset(I)$. 
Let $\tuple{s,J} = \pathbelief(\Path)$, then:
$
\tuple{s', J'} = \beliefupd(\tuple{s,J}, a, s') \defeq \pathbelief(\Path \cdot a s'),
$
where $J' = \{j \in J \mid \transitionfunc_j(s,a,s') > 0\}$.
We also liberally write $\beliefupd(\tuple{s,J},a)$ for the set of beliefs $\tuple{s',J'}$ that are possible from $\tuple{s,J}$ via action $a$, and define the two projection functions: $\projstate{\tuple{s, J}} = s$ and $\projbelief{\tuple{s, J}} = J$.    
\end{definition}
Key to MEMDPs is the notion of \emph{revealing transitions}~\cite{DBLP:conf/fsttcs/RaskinS14}.
A revealing transition is a tuple $\tuple{s,a,s'}$ such that there exist two environments $i,i' \in I, i \neq i'$ with $\transitionfunc_i(s,a,s') > 0$ and $\transitionfunc_{i'}(s,a,s') = 0$.
Intuitively, a transition is revealing whenever observing this transition reduces the belief over environments the agent is currently in, since we observed a transition that is not possible in one or more environments.
From this notion of revealing transitions immediately follows the property of monotonic information gain in MEMDPs.

\begin{corollary}\label{lemma:belief-stabalizing}
Let $\Path \cdot as'$ be a finite path. 
Then: (1)~$\projbelief{\pathbelief(\Path \cdot as')} \subseteq \projbelief{\pathbelief(\Path)}$, and (2)~If there are environments $j,j' \in \projbelief{\pathbelief(\Path)}$  with $\transitionfunc_j(\last(\Path),a,s') > 0= 
    \transitionfunc_{j'}(\last(\Path),a,s'))$,
    then 
 $j' \not\in \projbelief{\pathbelief(\Path \cdot as')}$ and thus $\projbelief{\pathbelief(\Path \cdot as')} \subset \projbelief{\pathbelief(\Path)}$.
\end{corollary}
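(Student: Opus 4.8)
The plan is to exploit the multiplicative structure of path probabilities. The key observation is that for every environment $i \in I$ the probability of the extended path factorizes as $\transitionfunc_i(\Path \cdot as') = \transitionfunc_i(\Path) \cdot \transitionfunc_i(\last(\Path), a, s')$, which is immediate from the definition of $\transitionfunc_i(\cdot)$ as the product of consecutive transition probabilities along the path. Since, by definition of belief, $\projbelief{\pathbelief(\Path)} = \{i \in I \mid \transitionfunc_i(\Path) > 0\}$ and likewise for $\Path \cdot as'$, the whole claim reduces to tracking which of these products stay strictly positive.

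For part~(1), I would argue that $\transitionfunc_i(\Path \cdot as') > 0$ forces both factors in the product to be positive, in particular $\transitionfunc_i(\Path) > 0$. Hence every $i$ in $\projbelief{\pathbelief(\Path \cdot as')}$ already lies in $\projbelief{\pathbelief(\Path)}$, giving the desired inclusion $\projbelief{\pathbelief(\Path \cdot as')} \subseteq \projbelief{\pathbelief(\Path)}$.

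For part~(2), the hypotheses give $j, j' \in \projbelief{\pathbelief(\Path)}$, so $\transitionfunc_j(\Path) > 0$ and $\transitionfunc_{j'}(\Path) > 0$. Applying the factorization to $j'$: since $\transitionfunc_{j'}(\last(\Path), a, s') = 0$, the product $\transitionfunc_{j'}(\Path \cdot as')$ vanishes, so $j' \notin \projbelief{\pathbelief(\Path \cdot as')}$. As $j'$ belongs to $\projbelief{\pathbelief(\Path)}$ but not to $\projbelief{\pathbelief(\Path \cdot as')}$, combining with the inclusion from part~(1) yields the strict inclusion $\projbelief{\pathbelief(\Path \cdot as')} \subset \projbelief{\pathbelief(\Path)}$. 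The environment $j$, for which both $\transitionfunc_j(\Path) > 0$ and $\transitionfunc_j(\last(\Path), a, s') > 0$, witnesses that $\projbelief{\pathbelief(\Path \cdot as')}$ is non-empty, so this is a genuine proper containment rather than a collapse.

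The statement is essentially an immediate corollary of the definitions, so I do not anticipate a real obstacle; the only points requiring a little care are to state the factorization $\transitionfunc_i(\Path \cdot as') = \transitionfunc_i(\Path) \cdot \transitionfunc_i(\last(\Path), a, s')$ explicitly, and to note that the strict-inclusion conclusion of part~(2) relies on $j'$ having \emph{already} been present in the earlier belief.
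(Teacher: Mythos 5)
Your proof is correct and follows exactly the route the paper intends: the corollary is stated as an immediate consequence of the definitions (the paper gives no separate proof), and your factorization $\transitionfunc_i(\Path \cdot as') = \transitionfunc_i(\Path)\cdot\transitionfunc_i(\last(\Path),a,s')$ together with the definition of $\projbelief{\pathbelief(\cdot)}$ as the set of environments with positive path probability is precisely the omitted argument. The closing remark that $j$ witnesses non-emptiness of the updated belief is a harmless extra, not needed for the strict inclusion itself.
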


\label{sec:belief_based_strategies}
\noindent Key to our analysis of MEMDPs is the notion of \emph{belief-based} strategies.

\begin{definition}[Belief-based strategy]
    A strategy $\strat$ is \emph{belief-based} when for all finite paths $\Path$, $\Path'$ such that $\pathbelief(\Path)=\pathbelief(\Path')$ and $\last(\Path)=\last(\Path')$ implies  $\strat(\Path)=\strat(\Path')$.
Belief-based strategies can also be written as a function $\strat \colon S \times \powerset(I) \to \dist{A}$. 
\end{definition}
Belief-based strategies are a form of finite-memory strategies and are representable by FSCs.

\begin{restatable}[]{lemma}{lembeliefstratfsc}\label{lem:beliefstrat:fsc}
    A belief-based strategy $\strat \colon S \times \powerset(I) \to \dist{A}$ for a MEMDP $\memdp$ with states $S$ and actions $A$ can be represented by an FSC.
\end{restatable}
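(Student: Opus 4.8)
The plan is to exhibit an explicit FSC $\fsc = \tuple{N, n_\iota, \alpha, \eta}$ whose memory node after reading a finite path always equals the belief of that path, and then to read off the action via the belief-based strategy $\strat$. Since a belief $\pathbelief(\Path) = \tuple{\last(\Path), J}$ ranges over the finite set $S \times \powerset(I)$, finite memory suffices, which is exactly what an FSC offers.

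Concretely, I would set the memory nodes to $N = S \times \powerset(I)$ with initial node $n_\iota = \tuple{\initdist, I} = \pathbelief(\initdist)$, since the length-zero path $\initdist$ has positive probability in every environment. The memory update mimics the belief update deterministically, $\eta(\tuple{s, J}, a, s') = \dirac(\tuple{s', J'})$ with $J' = \{ j \in J \mid \transitionfunc_j(s,a,s') > 0 \}$, so that $\eta$ records exactly $\beliefupd(\tuple{s,J}, a, s')$. The action mapping reads off the strategy from the actual state and the stored belief-support, $\alpha(s, \tuple{\hat s, J}) = \strat(s, J)$.

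The one genuinely delicate point — and the step I expect to require care — is that the belief update $J' = \{ j \in J \mid \transitionfunc_j(s,a,s') > 0 \}$ depends on the state $s$ \emph{from which} the transition is taken, whereas the FSC memory-update signature $\eta \colon N \times A \times S \to \dist{N}$ only exposes the memory node, the action, and the \emph{successor} state $s'$. I resolve this by storing the current state inside the memory node (hence $N = S \times \powerset(I)$, not merely $\powerset(I)$), so that the stored state supplies the argument the belief update needs.

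The correctness is then a single induction on the length of $\Path$ showing that the memory node the FSC occupies after $\Path$ is $\pathbelief(\Path)$. The base case holds by the choice of $n_\iota$, and the step follows from the definition of $\eta$ together with the recursive characterization $\pathbelief(\Path \cdot a s') = \beliefupd(\pathbelief(\Path), a, s')$; in particular the stored state component equals $\last(\Path)$, so $\eta$ is always fed the correct current state. Consequently the FSC prescribes on $\Path$ the distribution $\alpha(\last(\Path), \pathbelief(\Path)) = \strat(\last(\Path), \projbelief{\pathbelief(\Path)})$, which is precisely $\strat(\Path)$ because $\strat$, being belief-based, depends only on the last state and the belief-support. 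Hence $\fsc$ represents $\strat$.
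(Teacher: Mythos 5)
Your proof is correct and follows the same basic route as the paper: build an explicit FSC whose memory node after any finite path equals the path's belief, update the memory by mimicking $\beliefupd$, and read off the action via $\strat$. The one place you diverge is the choice of memory set: the paper takes $N = \powerset(I)$ with $\alpha = \strat$ and defines $\eta(J,a,s') = \dirac(J')$ via $\tuple{s',J'} = \beliefupd(\tuple{s,J},a,s')$, whereas you take $N = S \times \powerset(I)$. Your choice is the more careful one: as you observe, the belief update $J' = \{\, j \in J \mid \transitionfunc_j(s,a,s') > 0 \,\}$ depends on the \emph{source} state $s$, while the signature $\eta \colon N \times A \times S \to \dist{N}$ only exposes the memory node, the action, and the \emph{successor} state; with $N = \powerset(I)$ the paper's $\eta$ is therefore not obviously well defined when two source states could induce different updates for the same $(J,a,s')$. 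Storing the current state in the memory node resolves this at the cost of a factor $|S|$ in the number of memory nodes, which is harmless for the paper's later use (the FSC remains exponential only in $|I|$). Your induction establishing that the occupied memory node equals $\pathbelief(\Path)$, and hence that the prescribed distribution is $\strat(\Path)$, is exactly the right correctness argument.
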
\noindent
Similar to how states may be winning, beliefs can also be winning.
\begin{definition}[Winning belief]
    We call the belief $\tuple{s,J}$ \emph{winning for objective $\Phi$ in $\memdp$}, written as $\tuple{s, J} \models^{\memdp} \Phi$, if there exists a strategy $\strat \colon \paths \to \dist{A}$ such that for every environment $j \in J$, the induced MC is winning.
    That is,
    $
    \forall j \in J\colon \memdp_j[\strat] \models \Phi
    $.
\end{definition}%
    A MEMDP $\memdp$ is winning, $\memdp \models \Phi$, iff the initial belief $\tuple{\initdist, I}$ is winning.
We can extend the notion of a winning region to beliefs.
The (belief) winning region of a MEMDP $\memdp$ is $\winregion{\memdp}{\Phi} = \{ \tuple{s, J} \in S \times \powerset(I) \mid \tuple{s, J} \models^{\memdp} \Phi \}$.
The notion of winning beliefs has been used before in the context of POMDPs with other almost-sure objectives~\cite{DBLP:journals/ai/ChatterjeeCGK16}.

The induced MEMC of a MEMDP and FSC conservatively extends the standard product construction between an MDP and FSC to be applied to each transition function $\transitionfuncs$ individually. 
In that MEMC, the objective must be lifted to the new state space $\stateset \times N$.

\begin{definition}[Lifted Rabin objective]\label{def:lifted:rabin:objective}
    For Rabin objective $\rabinObjDef$ on $S$, the \emph{lifted Rabin objective} to  $S \times N$ is $\liftedobj \defeq \{ \tuple{\rabinbuchi_i \times N, \rabincobuchi_i \times N} \mid 1 \le i \le k \}$.
\end{definition}
When clear from the context, we implicitly apply this lifting where needed.

\subsection{Constructing Winning Belief-Based Strategies}

We are now ready to state the main theorem of this section.

\begin{restatable}[]{theorem}{thmbeliefsufficient}\label{thm:beliefsufficient}
For  MEMDP $\memdp$ and Rabin objective $\Phi$, there exists a winning strategy $\strat$ for $\Phi$ iff there exists a belief-based strategy $\stratwin$ that is winning for $\Phi$. 
\end{restatable}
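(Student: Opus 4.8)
The plan is to prove the two directions separately. The right-to-left implication is immediate: a belief-based strategy is in particular a strategy, so a belief-based winning strategy already witnesses $\memdp \models \Phi$. All the work is in the left-to-right direction, which I would prove in a stronger, localized form: \emph{every} winning belief $\tuple{s,J}$ admits a belief-based strategy that is winning from it. I would then obtain the global $\stratwin \colon S \times \powerset(I) \to \dist{A}$ by defining it belief-by-belief. This is well-formed because, by the monotonic information gain of \cref{lemma:belief-stabalizing}, running a belief-based strategy started at $\tuple{s,J}$ only ever queries beliefs $\tuple{s',J'}$ with $J' \subseteq J$; hence the construction can proceed by induction on the belief-support size $\lvert J \rvert$, with sub-beliefs already fixed by the induction hypothesis.

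For the base case $\lvert J \rvert = 1$, the belief $\tuple{s,\{j\}}$ fixes a single environment, so $s$ is winning for $\Phi$ in the MDP $\memdp_j$; by memoryless determinacy of Rabin objectives on MDPs~\cite{DBLP:conf/icalp/ChatterjeeAH05} there is a memoryless deterministic winning strategy, which is trivially belief-based since the belief never changes. For the step, fix a winning belief $\tuple{s,J}$ with $\lvert J \rvert = m > 1$ together with an arbitrary witnessing winning strategy $\strat$. I would decompose every play into phases delimited by revealing transitions: as long as no revealing transition has been taken the belief is still $J$, every transition used is non-revealing and therefore present with the same support in all $j \in J$, so the within-$J$ dynamics form a single graph $G_J$ shared by all environments in $J$. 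The first revealing transition moves to some $\tuple{s',J'}$ with $J' \subsetneq J$; since $\strat$ wins in all of $J \supseteq J'$, this sub-belief is winning, so by the induction hypothesis $\stratwin$ is already defined and winning on it. It therefore suffices to produce a memoryless strategy $\strat_J$ on the belief-$J$ states that (i) plays only actions all of whose revealing outcomes, in every $j \in J$, lead to winning sub-beliefs, and (ii) makes every play that never reveals satisfy $\Phi$ in all $j \in J$.

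The step I expect to be the main obstacle is establishing (ii) simultaneously with the sufficiency of a \emph{memoryless} choice, because revealing transitions are inherently environment-specific: a bottom end-component $C$ of $G_J[\strat_J]$ may be escapable through a revealing transition in one environment $j$ yet trapping in another $j' \in J$ where that transition is absent. For the escaping environment the play leaves $C$ almost surely and wins by the induction hypothesis; for the trapping environment the play is confined to $C$, whose induced chain coincides in $G_J[\strat_J]$ and in $\memdp_{j'}[\strat_J]$ (any revealing edge would leave belief $J$, hence cannot stay inside $C$), so winning there forces $C$ to be Rabin-winning as a graph. The correct requirement on $\strat_J$ is thus that every bottom end-component $C$ of $G_J[\strat_J]$ is \emph{either} Rabin-winning \emph{or} escapable in every $j \in J$; a short argument over the BSCCs of each $\memdp_j[\strat_J]$ shows this requirement is exactly what makes $\strat_J$ (followed by the inductive sub-belief strategies) winning in all $j \in J$. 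I would prove existence of such a $\strat_J$ by an end-component and attractor analysis on the single shared graph $G_J$: classify the maximal end-components of $G_J$ as Rabin-winning, uniformly escapable, or bad; show that the witness $\strat$, truncated at the first revealing transition, never forces a bad end-component nor an unsafe action, so an attractor strategy reaching good-or-escapable end-components while avoiding bad ones exists; and combine it with the memoryless Rabin-realizing strategy inside the winning end-components, again invoking memoryless determinacy of Rabin on MDPs~\cite{DBLP:conf/icalp/ChatterjeeAH05} and the fact that almost-sure Rabin winning depends only on the (here shared) support graph. The delicate point to get right is the classification of end-components with environment-specific exits and the merging of the reachability and in-component behaviors into one memoryless $\strat_J$.

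Finally I would assemble the global strategy: setting $\stratwin(\tuple{s,J}) = \strat_J(s)$ on belief-$J$ states and using the inductively obtained values on strictly smaller beliefs yields a well-defined belief-based $\stratwin$, since the belief is part of the input and the domains for distinct $J$ are disjoint. To see that $\stratwin$ wins from $\tuple{\initdist, I}$, observe that along any play the belief can strictly decrease at most $\lvert I \rvert - 1$ times by \cref{lemma:belief-stabalizing}, so the play eventually stays in some fixed belief $J$; property (i) guarantees every preceding revealing transition landed in a winning sub-belief, and within the final phase the requirement on $\strat_J$ secures $\Phi$ in the relevant environment. Hence $\stratwin$ is a belief-based winning strategy, completing the non-trivial direction.
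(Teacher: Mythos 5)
Your proposal is correct and follows essentially the same route as the paper: both arguments hinge on restricting to actions all of whose revealing successors are winning sub-beliefs, on the fact that all environments of a fixed belief share one underlying graph so that almost-sure Rabin winning inside a trapping component depends only on that graph, and on monotonic information gain for termination. The only difference is packaging: you organize the construction as an explicit induction on $\lvert J\rvert$ with an attractor/end-component strategy per belief level, whereas the paper builds one global strategy $\stratallow$ that randomizes uniformly over allowed actions and switches to a shared-sub-MDP winning strategy $\stratbscc$ inside the resulting BSCCs.
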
\noindent
The remainder of this section is dedicated to the necessary ingredients to prove \Cref{thm:beliefsufficient}.

\begin{definition}[Allowed actions]\label{def:allowed:actions}
The set of \emph{allowed actions} for a winning belief $\tuple{s, J}$ is
$
\allow(\tuple{s,J}) \defeq \{ a \in \actionset(s) \mid \forall \tuple{s', J'} \in \beliefupd(\tuple{s, J}, a)\colon \tuple{s', J'} \models^{\memdp} \Phi \}$.
\end{definition}
That is, an action at a winning belief is allowed if all possible resulting successor beliefs are still winning.
Using allowed actions we define the belief-based strategy $\stratallow \colon S \times \powerset(I) \to \dist{A}$:
\[\textstyle
\stratallow(\tuple{s, J}) \defeq \begin{cases}
    \unif(\allow(\tuple{s,J})) & \text{ if } \allow(\tuple{s,J}) \neq \emptyset, \\
    \unif(A(s)) & \text{ otherwise.}
\end{cases}
\]
The strategy $\stratallow$ randomizes uniformly over all allowed actions when the successor beliefs are still winning and over all actions when the belief cannot be winning.
We now sketch how to use $\stratallow$ to construct a winning belief-based strategy. See \Cref{appx:constructing:belief:based:strategies} for the details.

When playing $\stratallow$, the induced MEMC $\memdp[\stratallow]$ will almost-surely end up in a BSCC. 
Given belief $\tuple{s,J}$, we compute all environments $j \in J$ for which $\memdp_j[\stratallow]$ is in a BSCC~$\Sbscc$:
$
\Jbscc \defeq \{ j \in J \mid \exists \Sbscc \subseteq S \times \powerset(I)\colon \Sbscc \text{ is a BSCC in MC } \memdp[\stratallow]_j \wedge \tuple{s,J} \in \Sbscc \}$.

As a consequence, since $\stratallow$ is a belief-based strategy, every BSCC of the MEMC $\memdp[\stratallow]$ has a fixed set of environments that cannot change anymore.
As $\stratallow$ uniformly randomizes over all allowed actions, and it remains possible to win, there has to exist a strategy $\stratbscc \colon \Sbscc \to \dist{A}$ that is a sub-strategy of $\stratallow$, \ie, it only plays a subset of actions that are also played by $\stratallow$.
We construct appropriate sub-MDPs to compute $\stratbscc$ for a belief $\tuple{s,J}$ that is almost-surely winning for $\Phi$.
Using $\stratallow$ and $\stratbscc$, we construct the following belief-based strategy and show it is indeed winning. 
The other direction follows since beliefs are based on paths, which proves \cref{thm:beliefsufficient}. 
\[\textstyle
\stratwin(\tuple{s,J}) \defeq \begin{cases}
    \stratallow(\tuple{s,J}) & \text{ if } \Jbscc = \emptyset,\\
    \stratbscc(\tuple{s,J}) & \text{ otherwise}.  
\end{cases}
\]
\Cref{thm:beliefsufficient} shows that belief-based strategies are sufficient for almost-sure Rabin objectives in MEMDPs, which is not true on more general POMDPs~\cite{DBLP:journals/jcss/ChatterjeeCT16}.
Key is the monotonic information gain in MEMDPs, a property that POMDPs do not have in general~\cite{DBLP:conf/aips/ChatterjeeCK0R20}.

\begin{remark*}
    For the remainder of this paper, we assume all strategies are finite-memory strategies, and the induced (ME)MCs are defined via the product construction from  \Cref{subsec:properties}.
\end{remark*}

\section{Explicitly Adding Belief to MEMDPs}\label{sec:bomdps} %
Above, we showed that it is sufficient for a strategy to reason over the beliefs.
Now, we show how to add beliefs to the MEMDP, yielding a \emph{belief observation MDP} (BOMDP). We discuss their construction (\cref{subsec:bomdps}) and then algorithms on BOMDPs for reachability (\cref{subsec:reachabilitybomdps}) and so-called Safe \Bu objectives (\cref{sec:safe_buchi}). 
We discuss Rabin objectives in \cref{sec:pspace:algorithm}. 

\subsection{Belief-Observation MDPs}
\label{subsec:bomdps}
We create a product construction between the MEMDP and the beliefs $\powerset(I)$ such that the beliefs of the MEMDP $\memdp$ are directly encoded in the state space:
\begin{definition}[BOMDP]\label{def:bomdp}
  The \emph{belief observation MDP} (BOMDP) of MEMDP $\memdp = \tuple{\stateset, \actionset, \initdist, \transitionfuncs}$ 
  is a MEMDP $\bomdp = \tuple{\stateset', \actionset, \initdist', \{p'_i\}_{i \in i}}$
  with states $\stateset' = S \times \powerset(I)$,
  initial state $\initdist' = \tuple{\initdist, I}$, and
  partial transition functions that are defined when $a \in A(s)$ such that
  \[\textstyle
  p'_j(\tuple{s, J}, a, \tuple{s', J'}) = 
  \begin{cases}
      p_{j}(s, a, s') & \text{ if } \tuple{s', J'} = \beliefupd(\tuple{s, J},{a},{s'}) \wedge j \in J,\\
      0 & \text{ otherwise}.
  \end{cases} %
  \]
\end{definition}
BOMDPs are special MEMDPs; hence, all definitions for MEMDPs apply to BOMDPs.
Due to the product construction, a belief-support-based strategy for $\memdp$ can be turned into a memoryless strategy for $\bomdp$, and vice versa.
In BOMDPs, the belief $J$ is already part of the state, so we simplify the satisfaction notation to %
$\tuple{s, J} \models^{\bomdp} \Phi$ instead of $\tuple{\tuple{s, J}, J} \models^{\bomdp} \Phi$.

\begin{definition}[Lifted strategy]
   Given MEMDP $\memdp$ and a belief-based strategy $\strat$. %
   The \emph{lifted memoryless strategy} $\liftedstrat\colon (S \times \powerset(I)) \to \dist{A}$ on $\bomdp$ is  $\liftedstrat(s, J) \defeq \strat(\tuple{s, J})$.
\end{definition}
This lifting ensures that belief-based strategies and their liftings to BOMDPs coincide.

\begin{restatable}[]{lemma}{lemstratcorrespondence}\label{lem:strat_correspondence}
    Given a MEMDP $\memdp$, its BOMDP $\bomdp$, a belief-based strategy $\strat$
    for $\memdp$ and its lifted strategy $\liftedstrat$ for $\bomdp$, we have that the two induced MEMCs coincide: $\memdp[\strat] = \bomdp[\liftedstrat]$.
\end{restatable}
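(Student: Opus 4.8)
The plan is to verify the equality of the two MEMCs \emph{environment by environment}, by expanding both induced chains through their respective product constructions and matching transition probabilities. Fix an environment $j \in I$. On the BOMDP side, $\liftedstrat$ is memoryless, so its finite-state controller has a single memory node and the product construction collapses: the chain $\bomdp[\liftedstrat]_j$ has state space $S \times \powerset(I)$, initial state $\tuple{\initdist, I}$, and one-step probability
\[
\textstyle
\sum_{a \in A} \liftedstrat(\tuple{s,J})(a) \cdot p'_j(\tuple{s,J}, a, \tuple{s', J'}).
\]
Substituting $\liftedstrat(\tuple{s,J}) = \strat(\tuple{s,J})$ and \Cref{def:bomdp} for $p'_j$, this reduces to $\strat(\tuple{s,J})(a) \cdot p_j(s,a,s')$ summed over those $a$ with $\tuple{s',J'} = \beliefupd(\tuple{s,J},a,s')$, and it is $0$ unless $j \in J$.

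On the MEMDP side I would use the FSC $\fsc = \tuple{N, n_\iota, \alpha, \eta}$ guaranteed by \Cref{lem:beliefstrat:fsc} and compute $\memdp[\strat]_j = \memdp[\fsc]_j$ from the product-construction formula of \Cref{subsec:properties}. The core step is a correspondence between the reachable product states $\tuple{s,n}$ and the BOMDP states $\tuple{s,J}$: by induction on path length, using the belief-update definition and the monotonicity recorded in \Cref{lemma:belief-stabalizing}, the memory node reached after a finite path $\Path$ is exactly the belief support $\projbelief{\pathbelief(\Path)}$, while the state component is $\last(\Path)$. Under the resulting bijection $n \leftrightarrow J$, the action mapping $\alpha(s,n,\cdot)$ becomes $\strat(\tuple{s,J})$ and the memory update $\eta$ selects precisely the successor belief $\beliefupd(\tuple{s,J},a,s')$, so the FSC product probability reduces to the same sum as above. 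Matching the two expressions for every $j$, every source, and every target state will then give $\memdp[\strat]_j = \bomdp[\liftedstrat]_j$ for all $j$, i.e., the MEMCs coincide.

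The main obstacle is a signature mismatch hidden in the FSC formalism: the memory-update function has type $\eta \colon N \times A \times S \to \dist{N}$ and therefore does not see the \emph{current} state $s$, whereas the belief update $\beliefupd(\tuple{s,J},a,s')$ genuinely depends on $s$, since different predecessors of $s'$ may eliminate different environments. Consequently the FSC of \Cref{lem:beliefstrat:fsc} cannot use $N = \powerset(I)$ alone; it must carry enough state information in its memory for $\eta$ to recompute the surviving environments, and the asserted equality is read on the reachable fragments through the natural state-and-memory/belief identification. Once this memory-equals-belief invariant is in place, the remainder is a routine substitution of the two product-construction formulas, with the per-environment restriction $j \in J$ matching the corresponding clause in \Cref{def:bomdp}.
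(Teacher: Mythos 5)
Your proposal is correct and follows essentially the same route as the paper: expand both induced chains through their respective product constructions, identify reachable memory nodes with belief supports, and match the resulting transition probabilities environment by environment, yielding $\sum_{a} \strat(\tuple{s,J})(a)\cdot p_j(s,a,s')$ on both sides exactly when $\tuple{s',J'}=\beliefupd(\tuple{s,J},a,s')$ and $j\in J$. The ``signature mismatch'' you flag is real and is in fact a defect in the paper's own auxiliary construction: \Cref{lem:beliefstrat:fsc} takes $N=\powerset(I)$ and defines $\eta(J,a,s')$ via $\beliefupd(\tuple{s,J},a,s')$, which depends on the current state $s$ that $\eta$ does not receive, so that FSC is ill-defined whenever two states share an $(a,s')$-successor but eliminate different environments. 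Your remedy --- carrying the state in the memory (e.g.\ $N=S\times\powerset(I)$, or reading the equality on reachable fragments under the state-and-memory/belief identification) so that the memory-equals-belief invariant holds by induction on path length --- is exactly what is needed, after which the remainder is the same routine substitution the paper performs.
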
\noindent

\noindent Consequently, satisfaction of objectives is preserved by the transformation. 
\begin{restatable}[]{theorem}{thmobjcorrespondence}\label{thm:obj_correspondence}
Let $\memdp$ be a MEMDP with state space $S$ and $\rabinObj$ a Rabin objective. 
Let $\liftedobj$ be the lifted Rabin objective to $S \times \powerset(I)$ by \Cref{def:lifted:rabin:objective}.
    A belief-based strategy $\strat$ for $\memdp$ is winning the Rabin objective $\rabinObj$ iff the lifted strategy $\liftedstrat$ is winning the lifted objective $\liftedobj$ for $\bomdp$:
$
    \forall \strat\colon \tuple{s, J} \models^{\memdp[\strat]} \rabinObj \Leftrightarrow \tuple{s, J} \models^{\bomdp[\liftedstrat]} \liftedobj
$.
\end{restatable}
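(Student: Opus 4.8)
The plan is to reduce the statement essentially to \Cref{lem:strat_correspondence}, which already establishes the equality of the two induced MEMCs, $\memdp[\strat] = \bomdp[\liftedstrat]$. I would first fix an arbitrary belief-based strategy $\strat$ (whose lift $\liftedstrat$ is well-defined precisely because $\strat$ is belief-based) and an arbitrary environment $j \in I$, and invoke the lemma to conclude that the two induced MCs $\memdp[\strat]_j$ and $\bomdp[\liftedstrat]_j$ are one and the same: they have the same (identified) state space, the same transition probabilities, and hence the same probability measure over infinite paths. The identification in play is the canonical one between the FSC-product state $\tuple{s,n}$ of $\memdp[\strat]$, whose memory node $n \in N = \powerset(I)$ tracks the belief, and the state $\tuple{\tuple{s,J},n_\iota}$ of $\bomdp[\liftedstrat]$, where the belief $J$ is part of the BOMDP state and the memoryless $\liftedstrat$ carries the trivial memory node $n_\iota$.

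Next I would check that, under this identification, the two objectives describe the same set of accepting paths. On the right-hand side the objective is the explicit lift $\liftedobj = \{\tuple{\rabinbuchi_i \times \powerset(I), \rabincobuchi_i \times \powerset(I)} \mid 1 \le i \le k\}$ of \Cref{def:lifted:rabin:objective}. On the left-hand side, $\rabinObj$ is a Rabin objective over $S$ interpreted on the product state space of $\memdp[\strat]$ via the same implicit lifting, yielding the pairs $\tuple{\rabinbuchi_i \times N, \rabincobuchi_i \times N}$ with $N = \powerset(I)$. The key observation is that both liftings are \emph{full} products: membership of a state in a lifted Büchi set $\rabinbuchi_i \times N$ or co-Büchi set $\rabincobuchi_i \times N$ depends only on its $S$-component. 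Since the state identification preserves the $S$-component, a state lies in $\rabinbuchi_i \times N$ on the left iff the identified state lies in $\rabinbuchi_i \times \powerset(I)$ on the right, and likewise for the co-Büchi sets.

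With these two facts in hand, Rabin acceptance matches path by path: for any path $\Path$ in the common MC, the set $\infset(\Path)$ of infinitely-often-visited states is the same on both sides, and it satisfies some pair $\tuple{\rabinbuchi_i \times N, \rabincobuchi_i \times N}$ exactly when it satisfies $\tuple{\rabinbuchi_i \times \powerset(I), \rabincobuchi_i \times \powerset(I)}$. As the probability measures also coincide, the almost-sure (respectively, possible) satisfaction value of the two objectives is identical in $\memdp[\strat]_j$ and $\bomdp[\liftedstrat]_j$. I would then lift this from single environments to the MEMCs: a MEMC wins $\Phi$ iff every constituent MC does, so taking the conjunction over all $j \in I$ yields $\tuple{s,J} \models^{\memdp[\strat]} \rabinObj \Leftrightarrow \tuple{s,J} \models^{\bomdp[\liftedstrat]} \liftedobj$, for any initial belief $\tuple{s,J}$, obtained by starting the common MC at the corresponding state.

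I do not expect a genuine obstacle here, since the substantive content is carried by \Cref{lem:strat_correspondence}. The only point that needs care is the bookkeeping that the lifting of the Rabin objective commutes with the FSC-product construction -- that interpreting $\rabinObj$ on $\memdp[\strat]$ and interpreting its explicit lift $\liftedobj$ on $\bomdp[\liftedstrat]$ pick out the same accepting paths -- which, as noted, reduces to the fact that both liftings ignore the belief/memory component entirely and therefore agree on the $S$-projection that the $\infset$ condition actually inspects.
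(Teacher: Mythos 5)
Your proposal is correct and follows essentially the same route as the paper: both reduce the claim to \Cref{lem:strat_correspondence} for the equality of the induced MEMCs and then observe that, because the FSC memory of a belief-based strategy is $\powerset(I)$, the lifting of $\rabinObj$ to the product state space coincides with $\liftedobj$ on the BOMDP. Your additional path-by-path verification of Rabin acceptance and the per-environment decomposition are just a more explicit rendering of the same argument.
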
\noindent
As a result of \cref{thm:obj_correspondence}, we will implicitly lift strategies and objectives.

\subsection{An Algorithm for Reachability in BOMDPs}
\label{subsec:reachabilitybomdps}
\begin{algorithm}[t]
\begin{algorithmic}[1]
    \Function{Reach}{BOMDP $\bomdp$, $T \subseteq S$}
        \Do \label{alg:reach:outerloop}
            \For{$i \in I$} \label{alg:reach:innerloop}
                \State $S_i \assign \{ \tuple{s, J} \in S \times \powerset(I) \mid i \in J \}$
                \For{$\tuple{s, J} \in S_i \setminus \winregion{{\bomdp}_i}{\reachobj}$} \Comment{Iterate over all losing states}
                    \State $\bomdp \assign \stateremove{\bomdp}{\tuple{s, J}}$ \Comment{See \cref{def:stateremove}}
                \EndFor
            \EndFor
        \doWhile{$\bigwedge_{i \in I} S_i \ne \winregion{{\bomdp}_i}{\reachobj}$} \Comment{Check if stable}
        \State \Return $S^{\bomdp}$\label{line:reachreturn}
    \EndFunction
\end{algorithmic}
\caption{Reachability algorithm for a BOMDP $\bomdp$ of MEMDP $\memdp$.}%
\label{alg:reachability}%
\end{algorithm}%

In this subsection, we establish an algorithm for computing the winning region for reachability objectives in a BOMDP. 
The winning region of a BOMDP $\bomdp$ is precisely the set of winning beliefs of its MEMDP $\memdp$: $\winregion{\bomdp}{\Phi} = \{ \tuple{s,J} \in S \times \powerset(I) \mid \tuple{s,J} \models^{\memdp} \Phi \}$.
The algorithm specializes a similar fixed-point computation for POMDPS~\cite{DBLP:journals/jcss/ChatterjeeCT16} to BOMDPs.

\cref{alg:reachability} computes these winning regions. It relies on a state-remove operation defined below. 
Intuitively, the algorithm iteratively removes losing states, which does not affect the winning region until all states that remain in $\bomdp$ are winning.

Removing state $s$ from a BOMDP removes the state and disables outgoing action from any state where that action that could reach $s$ with positive probability.
This operation thus also removes any action and its transitions that could reach the designated state.

\begin{definition}[State removal]
\label{def:stateremove}
Let $\bomdp = \tuple{S \times \powerset(I), A, \initdist, {\{p_i\}}_{i \in I}}$ be a BOMDP, and $\perp \not\in S \times \powerset(I)$ a sink state.  
The BOMDP $\stateremove{\bomdp}{\tuple{s, J}}$ for $\bomdp$ and state $\tuple{s,J} \in S \times \powerset(I)$ is given by $\tuple{\{\perp\} \cup S \times \powerset(I) \setminus \{\tuple{s, J}\}, A, \initdist', {\{p_i'\}}_{i \in I}}$,
where $\initdist' = \perp$ if $\tuple{s,J} = \initdist$, and $\initdist'=\initdist$ otherwise, and
for all states $\tuple{s',J'} \neq \tuple{s,J}$ and environments $i \in I$ we have
\[\textstyle
p'_i(\tuple{s',J'},a) = \begin{cases}
    p_i(\tuple{s',J'},a) & \text{ if } \tuple{s,J} \not\in \supp(p_i(\tuple{s',J'}, a)),\\
    \dirac(\perp) & \text{ if } \tuple{s,J} \in \supp(p_i(\tuple{s',J'}, a)).
\end{cases}
\]
\end{definition}

\noindent 
The main results in this section are the correctness and the complexity of \cref{alg:reachability}:

\begin{restatable}[]{theorem}{thmreachabilitycorrect}\label{thm:reachability_correct}
    For BOMDP $\bomdp$ and targets $T$:  $\winregion{\bomdp}{\reachobj} = \algreach(\bomdp, T)$ in \cref{alg:reachability}.
\end{restatable}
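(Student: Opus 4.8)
The plan is to prove the two inclusions $\algreach(\bomdp,T) \subseteq \winregion{\bomdp}{\reachobj}$ (soundness) and $\winregion{\bomdp}{\reachobj} \subseteq \algreach(\bomdp,T)$ (completeness) separately, viewing the algorithm as a greatest fixed-point computation. Since state removal only ever deletes states and $S \times \powerset(I)$ is finite, the outer loop terminates; when the guard fails we have $S_i = \winregion{{\bomdp}_i}{\reachobj}$ for every $i \in I$, so every remaining proper state $\tuple{s,J}$ is single-environment-winning in each ${\bomdp}_i$ with $i \in J$ (the sink $\perp$ is never winning and can be ignored). Throughout I would use the standard fact that qualitative reachability depends only on transition supports, not on the probability values (which the paper already assumes irrelevant): hence two environments that agree on supports along a play agree on almost-sure reachability of $T$.

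For completeness I would maintain the loop invariant that every state winning in the original MEMDP is still present and, moreover, winning for single-environment reachability in each current ${\bomdp}_i$ with $i \in J$, arguing by induction over removal steps. For the step, suppose $\tuple{s,J}$ is winning in the MEMDP with a common belief-based strategy $\sigma$ (which exists by \Cref{thm:beliefsufficient}); its restriction to any $i \in J$ almost-surely reaches $T$ in that environment. By closure of the MEMDP winning region (after any history, the continuation of $\sigma$ witnesses that the reached belief state $\tuple{s',J'}$ is again MEMDP-winning), every visited state is MEMDP-winning and hence, by the invariant, still present. Thus $\sigma$ restricted to $i$ is a valid reachability strategy in the current ${\bomdp}_i$, so $\tuple{s,J} \in \winregion{{\bomdp}_i}{\reachobj}$ and $\tuple{s,J}$ is never placed in the removal set $S_i \setminus \winregion{{\bomdp}_i}{\reachobj}$. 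Hence the invariant is preserved, giving $\winregion{\bomdp}{\reachobj} \subseteq \algreach(\bomdp,T)$.

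For soundness I would show that at the fixed point every remaining $\tuple{s,J}$ is MEMDP-winning, by induction on $|J|$, exploiting monotonic information gain (\Cref{lemma:belief-stabalizing}). The base case $|J|=1$ is immediate, since a single-environment win is a MEMDP win. For $|J| \geq 2$, each remaining $\tuple{s,J}$ is single-environment-winning in every ${\bomdp}_i$, $i \in J$; fix $i_0 \in J$ with a memoryless winning reachability strategy $\sigma_{i_0}$ on the reduced BOMDP, whose reachable states are present and thus single-environment-winning in all of their environments. I would define a belief-based strategy that plays $\sigma_{i_0}$ while the belief is exactly $J$ and, upon any revealing transition shrinking the belief to $J' \subsetneq J$, switches at the reached state $\tuple{s',J'}$ to the common strategy provided by the induction hypothesis (applicable since $|J'| < |J|$ and $\tuple{s',J'}$, being present, is single-environment-winning in all $j \in J'$). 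This is a well-defined memoryless strategy on the BOMDP because the belief is encoded in the state. Its correctness in an arbitrary environment $j \in J$ follows by splitting plays on the first revealing transition: on the no-information-gain event the play keeps belief $J$ and has the same support as in $i_0$, so it reaches $T$ almost-surely (as $\sigma_{i_0}$ does in $i_0$) by the support fact; on each revealing event the reached belief still contains $j$, and the induction hypothesis gives almost-sure reachability from $\tuple{s',J'}$. Summing over these disjoint events yields almost-sure reachability of $T$ in every $j \in J$, so $\tuple{s,J}$ is MEMDP-winning.

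I expect soundness, and specifically stitching the per-environment strategies into one belief-based strategy, to be the main obstacle. The delicate points are that following $\sigma_{i_0}$ is safe for the other environments until information is actually gained, which is exactly where monotonic information gain enters; that every state reached under $\sigma_{i_0}$ is still present and therefore winning in all of its environments, so that the induction hypothesis applies at the switch; and that the case split on the first revealing transition is exhaustive for plays in each target environment. Completeness is conceptually lighter but still needs the closure of the winning region under the winning strategy; termination, the treatment of $\perp$, and the identification of the returned set $S^{\bomdp}$ with the remaining proper states are routine.
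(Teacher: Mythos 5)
Your completeness direction matches the paper's: it is exactly the content of \cref{lem:removing_losing_states} and \cref{lem:losing_inclusion} (only losing states are ever removed, because the states reachable under a winning strategy are all winning and hence never deleted, so the witness strategy survives every pruning step). The soundness direction, however, has a genuine gap. You claim that, while the belief stays $J$, following a memoryless winning strategy $\sigma_{i_0}$ of a single environment $i_0$ reaches $T$ almost-surely in every other environment $j \in J$ ``by the support fact''. This is false: conditioning on the no-reveal event does not make the induced chains of $j$ and $i_0$ support-equivalent, because $i_0$ may escape towards $T$ (or towards a reveal) through transitions that simply do not exist in $j$. Concretely, take $I = \{1,2\}$, states $\{s_0, t\}$ with $T = \{t\}$ and $t$ absorbing, actions $\{a,b\}$, and $p_1(s_0,a) = \frac12\dirac(s_0) + \frac12\dirac(t)$, $p_1(s_0,b) = \dirac(s_0)$, while $p_2(s_0,a) = \dirac(s_0)$ and $p_2(s_0,b) = \frac12\dirac(s_0) + \frac12\dirac(t)$. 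Here \cref{alg:reachability} removes nothing, and $\tuple{s_0,\{1,2\}}$ is indeed winning (randomize over $a$ and $b$). But ``always play $a$'' is a memoryless winning strategy for environment $1$, and under it environment $2$ loops in $\tuple{s_0,\{1,2\}}$ forever: the self-loop under $a$ is non-revealing, so the switch to the inductive strategy is never triggered and $T$ is never reached. Your stitched strategy therefore fails precisely on the no-reveal event, which is the case you dismissed; the induction on $|J|$ does not help because in this failure mode the belief never shrinks.

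The paper avoids committing to any single environment's strategy: it plays the \emph{uniform distribution over all actions surviving the fixed point}. Since every surviving state is, in each of its environments, winning for single-environment reachability in the pruned BOMDP, from every reachable state and in every environment there is a positive-probability path to $T$ along surviving actions; uniform randomization gives each such path probability at least some $p_{\min} > 0$, and the standard $\lim_{n\to\infty}(1-p_{\min})^n = 0$ argument yields almost-sure reachability in all environments simultaneously. If you wish to keep your decomposition by first reveal, you must replace ``follow $\sigma_{i_0}$'' by a strategy that keeps all environments of $J$ making progress at once (e.g., randomizing over all surviving actions), at which point you have essentially reproduced the paper's argument and the case split becomes unnecessary.
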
\noindent
Towards a proof, the notions of losing states and strategies as defined for MDPs also apply to BOMDP states and strategies.
For BOMDPs, we additionally define \emph{losing actions} as state-action pairs that lead with positive probability to a losing state.
It follows that a BOMDP state is losing iff every action from that state is losing, and a single environment where a BOMDP state is losing suffices as a witness that the state is losing in the BOMDP (see the \cref{sec:appendix:bomdps}). 
Finally, the following lemma is the key ingredient to the main theorem.
\begin{restatable}[]{lemma}{lemremovinglosingstates}\label{lem:removing_losing_states}
Removing losing states from $\bomdp$ does not affect the winning region, \ie,
$
\tuple{s, J} \not\in \winregion\bomdp\reachobj \text{ implies } \winregion{\stateremove{\bomdp}{\tuple{s, J}}}{\reachobj} = \winregion\bomdp\reachobj.
$
\end{restatable}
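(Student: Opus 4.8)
The plan is to prove the set equality by two inclusions, viewing both winning regions as subsets of $(S \times \powerset(I)) \setminus \{\tuple{s,J}\}$: the removed belief $\tuple{s,J}$ is losing in $\bomdp$ by assumption, and the fresh sink $\perp$ is absorbing and outside $T$, hence losing in $\stateremove{\bomdp}{\tuple{s,J}}$, so neither occurs in the respective winning region and the two regions can be compared over the same set of beliefs.

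For the inclusion $\winregion{\stateremove{\bomdp}{\tuple{s,J}}}{\reachobj} \subseteq \winregion{\bomdp}{\reachobj}$, I would start from a winning strategy in the modified BOMDP. Since $\perp$ is an absorbing sink outside $T$, any positive probability of reaching $\perp$ violates almost-sure reachability; hence the winning strategy never plays an action that \cref{def:stateremove} redirects to $\dirac(\perp)$. On all remaining actions the transition functions of $\stateremove{\bomdp}{\tuple{s,J}}$ and $\bomdp$ agree and their supports never contain $\tuple{s,J}$, so replaying the same strategy in $\bomdp$ induces the identical MEMC on every reachable belief and is therefore winning in $\bomdp$.

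The substantive direction is $\winregion{\bomdp}{\reachobj} \subseteq \winregion{\stateremove{\bomdp}{\tuple{s,J}}}{\reachobj}$, and the key auxiliary claim I would establish first is that, from any winning belief, a winning strategy only plays \emph{safe} actions: actions $a$ for which every successor belief $\tuple{s'', J''} \in \beliefupd(\tuple{s', J'}, a)$ is itself winning. This follows by a continuation argument. Fix a strategy winning for $\tuple{s', J'}$ in every environment $j \in J'$, and suppose playing $a$ reaches $\tuple{s'', J''}$ with positive probability in some environment $j$. By \Cref{lemma:belief-stabalizing} we have $J'' \subseteq J'$, and for each $j \in J''$ the continuation after observing $s''$ must win almost-surely in environment $j$ (otherwise $\tuple{s', J'}$ would fail in $j$), so that single continuation witnesses that $\tuple{s'', J''}$ is winning. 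Since $\tuple{s,J}$ is losing, no safe action from a winning belief can have $\tuple{s,J}$ in its support in any environment; hence the state removal leaves all actions used by a winning strategy on the winning region untouched, and that strategy (restricted to the surviving state space) remains winning in $\stateremove{\bomdp}{\tuple{s,J}}$, reaching $T$ with unchanged probabilities.

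The main obstacle is precisely this safe-action lemma, and in particular making the per-environment continuation argument rigorous in the MEMDP setting, where a successor belief $\tuple{s'', J''}$ aggregates exactly the environments of $J'$ still consistent with the observation $s''$, and one must produce a \emph{single} continuation strategy winning simultaneously in all of $J''$. Appealing to \Cref{thm:beliefsufficient} to restrict attention to belief-based (hence memoryless on $\bomdp$) strategies keeps this bookkeeping finite, as the continuation is then just the same memoryless strategy evaluated at the successor belief. Once the safe-action lemma is in place, both inclusions are short.
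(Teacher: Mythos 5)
Your proposal is correct and follows essentially the same route as the paper, whose proof is just the four-sentence observation that a winning strategy visits a losing state with probability zero and hence is unaffected by its removal. Your safe-action lemma with the per-environment continuation argument is precisely the justification the paper leaves implicit for the claim that ``any strategy that has a positive probability of reaching a losing state is losing,'' so you have supplied more detail but no different idea.
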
%
\begin{restatable}[]{lemma}{lemreachabilitycomplexity}\label{lem:reachability_complexity}
    \cref{alg:reachability} takes polynomial time in the size of $\bomdp$.
\end{restatable}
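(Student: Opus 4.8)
The plan is to bound both the number of iterations of the outer \texttt{do-while} loop and the cost of each iteration, and argue both are polynomial in $|\bomdp|$, the size of the BOMDP. Note that $|\bomdp| = |S| \cdot 2^{|I|}$ states, so ``polynomial in the size of $\bomdp$'' is the intended yardstick here (it is exponential in the original MEMDP, but that exponential blow-up is exactly what the later recursive $\tPSPACE$ algorithm of \cref{sec:recursive_computation} is designed to avoid by never materialising $\bomdp$ in full). I would make this reading explicit at the start of the proof so the complexity claim is not misread as a polynomial-time algorithm for the MEMDP problem.

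First I would analyse a single pass of the body. For each environment $i \in I$ we compute the set $S_i$ and the MDP winning region $\winregion{\bomdp_i}{\reachobj}$ for almost-sure reachability. Computing an almost-sure reachability winning region in a single MDP is a standard fixed-point (alternating attractor) computation that runs in polynomial time in the size of that MDP~\cite{DBLP:conf/mfcs/ChatterjeeDH10}, and each $\bomdp_i$ has at most $|S| \cdot 2^{|I|}$ states and the same action set $A$. For each losing state $\tuple{s,J} \in S_i \setminus \winregion{\bomdp_i}{\reachobj}$ we invoke $\stateremove{\bomdp}{\tuple{s,J}}$ from \cref{def:stateremove}, which only inspects, for every state and action, whether $\tuple{s,J}$ lies in the support of the corresponding distribution and redirects it to $\perp$; this is linear in the size of the transition function of $\bomdp$. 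Hence one pass of the inner \texttt{for}-loop over $i \in I$ costs $|I|$ times a polynomial, which is polynomial in $|\bomdp|$.

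Next I would bound the number of outer iterations. The key observation is monotonicity: \cref{lem:removing_losing_states} guarantees that removing a losing state does not change the winning region, and every \texttt{StateRemove} call strictly decreases the number of non-$\perp$ states of $\bomdp$ (it deletes $\tuple{s,J}$). Therefore the total number of \texttt{StateRemove} operations across the entire run is at most $|S| \cdot 2^{|I|}$, the initial state count. The termination test $\bigwedge_{i \in I} S_i = \winregion{\bomdp_i}{\reachobj}$ succeeds precisely when a full pass removes no state, so the number of outer iterations exceeds the number of state removals by at most one. Consequently there are at most $|S| \cdot 2^{|I|} + 1$ iterations, each polynomial in $|\bomdp|$, giving an overall polynomial bound.

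The main obstacle is the bookkeeping to make the iteration bound airtight: one must confirm that every iteration \emph{except the last} removes at least one state, so that iterations cannot stall without progress. This follows because the loop guard is exactly the negation of the ``no state removed this pass'' condition, so whenever the body runs again at least one $S_i$ still contains a losing state and that state is removed. I would also be careful that the winning regions $\winregion{\bomdp_i}{\reachobj}$ are recomputed on the \emph{current} $\bomdp$ after the removals of the previous environments in the same pass; this only affects constants, not the polynomial bound, since each recomputation is itself polynomial. With these points settled, the two bounds multiply to yield the claim.
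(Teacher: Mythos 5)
Your proposal is correct and matches the paper's own argument: the paper likewise bounds the outer loop by observing that each iteration removes at least one state (so at most $|S \times \powerset(I)|$ iterations) and notes that each inner iteration computes an almost-sure reachability winning region on an MDP in polynomial time. Your additional remarks on the meaning of ``polynomial in the size of $\bomdp$'' and on the cost of \emph{StateRemove} are accurate elaborations rather than a different route.
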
\noindent
\subsection{Safe \Bu in BOMDPs}\label{sec:safe_buchi}
In this section, we consider winning regions for \emph{safe \Bu objectives} of the form $\csafeobj \rabincobuchi \wedge \cbuchiobj \rabinbuchi$, where $\rabinbuchi \subseteq \rabincobuchi \subseteq S$.
The condition $\rabinbuchi \subseteq \rabincobuchi$ is convenient but does not restrict the expressivity. %
These objectives are essential for our Rabin algorithm in~\cref{sec:pspace:algorithm}. 
The main result is:
\begin{restatable}[]{theorem}{thmrabinpairpoly}\label{thm:rabin_pair_poly}
For BOMDP $\bomdp$, $\winregion\bomdp{\csafeobj \rabincobuchi \wedge \cbuchiobj \rabinbuchi}$ is computable in polynomial time. %
\end{restatable}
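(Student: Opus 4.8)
The plan is to reduce the safe \Bu objective $\csafeobj{\rabincobuchi} \wedge \cbuchiobj{\rabinbuchi}$ to a nested fixed-point computation that repeatedly invokes the reachability algorithm $\algreach$ (\cref{alg:reachability}) as a subroutine. Recall that this objective asks to remain in $\rabincobuchi$ forever while visiting $\rabinbuchi$ infinitely often, and that $\rabinbuchi \subseteq \rabincobuchi$. A first observation is that no winning strategy can ever take a transition leaving $\rabincobuchi$ with positive probability, so we may delete all such transitions; the residual objective is then the pure \Bu objective $\cbuchiobj{\rabinbuchi}$, which is a Rabin objective, and hence \cref{thm:beliefsufficient} guarantees that belief-based strategies suffice. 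By \cref{thm:obj_correspondence} we may therefore reason entirely with memoryless strategies on the BOMDP $\bomdp$.

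Concretely, I would initialize $W \defeq \{ \tuple{s, J} \in S \times \powerset(I) \mid s \in \rabincobuchi \}$, discarding every belief whose state already violates safety, and then iterate the following step until $W$ stabilizes: form the restricted BOMDP $\bomdp|_W$ by applying $\stateremove{\cdot}{\cdot}$ to every belief outside $W$, and set $W \defeq \algreach(\bomdp|_W, \rabinbuchi)$. The point of the state-removal step is that any action which could leave $W$ is redirected to the non-target sink $\perp$; since $\perp$ can never reach $\rabinbuchi$, an almost-surely reaching strategy in $\bomdp|_W$ can never use such an action. Thus $\algreach(\bomdp|_W, \rabinbuchi)$ retains exactly the beliefs from which $\rabinbuchi \cap W$ is almost-surely reachable in every environment of the belief (by correctness of $\algreach$, \cref{thm:reachability_correct}) \emph{without ever leaving} $W$. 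This is the classical repeated-reachability fixed point; the sets $W$ decrease monotonically, so the loop terminates after at most $|S| \cdot 2^{|I|}$ rounds, and I claim the final $W$ equals $\winregion{\bomdp}{\csafeobj{\rabincobuchi} \wedge \cbuchiobj{\rabinbuchi}}$.

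For soundness I would extract from the final $\algreach$ run a memoryless strategy that, at each belief in $W$, plays an action witnessing almost-sure reachability of $\rabinbuchi \cap W$ within $W$, and then play it forever. Staying inside $W$, which is contained in $\{\tuple{s,J} \mid s \in \rabincobuchi\}$, with probability one yields almost-sure safety, and stitching together successive reachability phases visits $\rabinbuchi$ infinitely often almost surely: from every belief in $W$ the strategy reaches $\rabinbuchi$ and lands again in $W$, so by the standard recurrence argument the probability of only finitely many visits is zero, simultaneously in every environment of the belief (this is exactly the per-environment guarantee of $\algreach$). Here monotonic information gain (\cref{lemma:belief-stabalizing}) underwrites the belief-based reasoning, mirroring the BSCC-based construction of \cref{thm:beliefsufficient}. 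For completeness I would argue that any belief winning safe \Bu survives every iteration: by \cref{thm:beliefsufficient} it has a belief-based witness strategy, which keeps the play in the true winning region (hence never needs an escaping action) and reaches $\rabinbuchi$ almost surely, so each intermediate $\algreach$ call preserves it.

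Finally, the complexity is immediate: each iteration performs $O(|S| \cdot 2^{|I|})$ state removals plus one call to $\algreach$, which is polynomial in $|\bomdp|$ by \cref{lem:reachability_complexity}; with polynomially many iterations the whole computation is polynomial in the size of the BOMDP. The main obstacle I anticipate is the soundness argument: I must verify that a single belief-based strategy simultaneously realizes repeated reachability of $\rabinbuchi$ in \emph{all} environments of a belief, which requires combining the per-environment correctness of $\algreach$ with the eventual stabilization of beliefs, rather than naively transplanting the memoryless single-MDP almost-sure \Bu construction.
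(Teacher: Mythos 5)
Your overall strategy (restrict to $\rabincobuchi$, then solve the residual \Bu objective by repeatedly invoking $\algreach$) is a legitimate alternative to the paper's route, which instead reduces safe \Bu to plain \Bu on $\custombomdp{(\staterestrict{\memdp}{\rabincobuchi})}$ and then reduces \Bu to a \emph{single} reachability query via the standard probabilistic gadget that redirects half of every transition into $\rabinbuchi$ to a fresh target $\top$ (\cref{lem:safe_buchi_to_buchi,lem:buchi_poly}). However, as written your nested fixed point has a genuine gap: it is too coarse. The iteration $W \defeq \algreach(\bomdp|_W, \rabinbuchi)$ never eliminates a \emph{target} belief, because \cref{alg:reachability} only removes states that are losing for $\creachobj{\rabinbuchi}$, and a state whose $S$-component lies in $\rabinbuchi$ is trivially winning for reachability no matter what happens afterwards. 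Concretely, take $t \in \rabinbuchi$ such that every action from $t$ leaves $\rabincobuchi$ with positive probability in some environment of the belief. After your preprocessing, $t$ is a deadlock (or all its actions are redirected to $\perp$, which $\algreach$ then removes, disabling them); yet $t$ remains a target and so survives every iteration and ends up in your final $W$, even though $\csafeobj\rabincobuchi \wedge \cbuchiobj\rabinbuchi$ is not almost-surely satisfiable from $t$. Your soundness argument breaks exactly here: the phrase ``reaches $\rabinbuchi$ and lands again in $W$'' tacitly assumes that at the target belief there is still an enabled action whose successors stay in $W$ in every environment, which your fixed point does not enforce (the uniform strategy extracted from $\algreach$ is undefined at such a deadlocked target).

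This is the classical reason the almost-sure \Bu fixed point is $\nu Y.\,\mu X.\,\bigl((\rabinbuchi \cap \mathrm{Pre}(Y)) \cup \mathrm{Pre}(X)\bigr)$ rather than $\nu Y.\,\mu X.\,\bigl((\rabinbuchi \cap Y) \cup \mathrm{Pre}(X)\bigr)$: the reachability target in each round must be $\rabinbuchi$ \emph{intersected with the beliefs that have at least one action all of whose successors, in every environment of the belief, remain in $W$}. With that repair your argument goes through (the recurrence/stitching step then really does land at a target from which the extracted strategy can continue inside $W$, and completeness is unaffected since any truly winning target belief does admit such an action), and the complexity analysis you give is fine. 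The paper's gadget-based reduction avoids this pitfall automatically, because a target with no safe continuation fails to reach $\top$ almost surely and is therefore correctly excluded by the single $\algreach$ call.
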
\noindent
We provide the main ingredients for the proof below. We first consider arbitrary MEMDPs.
\begin{definition}[State restricted (ME)MDP]\label{def:state_restricted}
Let $\mdp = \tuple{\stateset, \actionset, \initdist, \transitionfunc}$ be an MDP and $S' \subseteq S$ a set of states. The MDP $\staterestrict{\mdp}{S'} \defeq \tuple{S' \cup \{\bot\}, \actionset, \initdist', \transitionfunc'}$ is $\mdp$ \emph{restricted to} $S'$, with $\bot$ a sink state,
$\initdist' = \initdist$ if $\initdist \in S'$ and $\bot$ otherwise, and  for $s \in S'$, $a \in A(s)$ and $s' \in S' \cup \{\bot\}$, we define:
\[\textstyle
\transitionfunc'(s, a, s') \defeq
\begin{cases}
\sum_{s'' \in S \setminus S'} \transitionfunc(s, a, s'') & \text{ if } s' = \bot,\\
\transitionfunc(s, a, s') & \text{ otherwise}.\\
\end{cases}
\]
This definition conservatively extends to MEMDPs per environment $i$: $({\staterestrict{\memdp}{S'}})_i = \staterestrict{(\memdp_i)}{S'}$.
\end{definition}

The winning regions of a MEMDP $\memdp$ and $\staterestrict\memdp\rabincobuchi$ coincide as, intuitively, winning strategies must remain in $\rabincobuchi$, thus removing other states does not affect the winning region.
\begin{restatable}[]{lemma}{lemsafebuechitobuechipreservationmemdp}  
\label{lem:safe_buechi_to_buechi_preservation_memdp}
The winning regions for $\csafeobj \rabincobuchi \wedge \cbuchiobj \rabinbuchi$ with $\rabinbuchi \subseteq \rabincobuchi$ in $\staterestrict\memdp\rabincobuchi$ and $\memdp$ coincide.
\end{restatable}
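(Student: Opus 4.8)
The plan is to prove the two inclusions separately, both resting on the observation that the safety conjunct $\csafeobj{\rabincobuchi}$ forces every winning strategy to keep the state in $\rabincobuchi$ with probability one, so that the redirection of outgoing mass to the sink $\bot$ in \Cref{def:state_restricted} only alters events of measure zero. First I would dispose of the easy part: both winning regions are contained in the beliefs $\tuple{s,J}$ whose state $s$ lies in $\rabincobuchi$, since a path starting outside $\rabincobuchi$ immediately violates $\csafeobj{\rabincobuchi}$, and in $\staterestrict\memdp\rabincobuchi$ the added state $\bot \notin \rabincobuchi$ is an absorbing losing state. It therefore suffices to fix a belief $\tuple{s,J}$ with $s \in \rabincobuchi$ and show that it is winning $\csafeobj{\rabincobuchi}\wedge\cbuchiobj{\rabinbuchi}$ in $\memdp$ iff it is winning the same objective in $\staterestrict\memdp\rabincobuchi$.

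The key structural fact I would use is that the two MEMDPs coincide on all transitions internal to $\rabincobuchi$: by \Cref{def:state_restricted}, for states $s,s' \in \rabincobuchi$, action $a \in A(s)$ and each environment $i$ we have $p'_i(s,a,s') = p_i(s,a,s')$, and the only difference is that mass leaving $\rabincobuchi$ is collected in $\bot$ rather than distributed over $S \setminus \rabincobuchi$. Consequently, for every strategy $\strat$ and every environment $j$, the induced Markov chains $\memdp_j[\strat]$ and $(\staterestrict\memdp\rabincobuchi)_j[\strat]$ assign the same probability to every cylinder generated by a finite path that stays inside $\rabincobuchi$, and differ only on paths that leave $\rabincobuchi$, respectively reach $\bot$. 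Since $\rabinbuchi \subseteq \rabincobuchi$, the B\"uchi predicate $\infset(\Path) \cap \rabinbuchi \neq \emptyset$ evaluates identically on any path $\Path$ that remains in $\rabincobuchi$, so on the event ``the path never leaves $\rabincobuchi$'' the two measures, and the satisfaction of the objective, coincide.

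For the forward direction I would take a witness $\strat$ for $\tuple{s,J} \models^{\memdp} \csafeobj{\rabincobuchi}\wedge\cbuchiobj{\rabinbuchi}$; for each $j \in J$, almost-sure satisfaction of $\csafeobj{\rabincobuchi}$ means $\memdp_j[\strat]$ stays in $\rabincobuchi$ with probability one, so every leaving transition is taken with probability zero along the reachable fragment. Applying the same $\strat$ in $\staterestrict\memdp\rabincobuchi$ (extended arbitrarily on the measure-zero paths through $\bot$, on which it is undefined) then reaches $\bot$ with probability zero, and by the previous paragraph it wins the objective there. The backward direction is symmetric: a strategy winning the objective from $\tuple{s,J}$ in $\staterestrict\memdp\rabincobuchi$ must avoid the losing sink $\bot$ almost surely, hence never triggers a redirected transition; because at each step the probability of moving to $S \setminus \rabincobuchi$ in $\memdp$ equals the probability of moving to $\bot$ in $\staterestrict\memdp\rabincobuchi$, the same strategy keeps $\memdp_j[\strat]$ inside $\rabincobuchi$ almost surely and therefore wins the objective in $\memdp$.

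The main obstacle I anticipate is making the measure-transfer step precise: one has to verify that, restricted to the sub-$\sigma$-algebra generated by cylinders over $\rabincobuchi$-paths, the two induced probability measures are literally equal, and that under a winning strategy the complementary event (leaving $\rabincobuchi$, resp.\ reaching $\bot$) genuinely has measure zero in both models. Everything else is bookkeeping: the per-belief winning condition quantifies over all environments $j \in J$, and since the restriction acts pointwise, $(\staterestrict\memdp\rabincobuchi)_j = \staterestrict{(\memdp_j)}{\rabincobuchi}$, so the per-environment arguments combine without interaction.
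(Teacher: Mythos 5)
Your proposal is correct and follows essentially the same route as the paper: both directions rest on the observation that a winning strategy for the safe-B\"uchi objective keeps all probability mass inside $\rabincobuchi$, where the transition functions of $\memdp$ and $\staterestrict{\memdp}{\rabincobuchi}$ agree, so the induced chains (and hence satisfaction of the objective) coincide. The paper states this slightly more bluntly as equality of the reachable fragments of the induced MEMCs, whereas you phrase it via agreement of the measures on cylinders over $\rabincobuchi$-paths, but the argument is the same.
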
\noindent
Satisfying the \Bu objective $\cbuchiobj\rabinbuchi$ inside  $\staterestrict\memdp\rabincobuchi$ implies satisfying the safety condition, thus:

\begin{restatable}[]{lemma}{lemsafebuechitobuechipreservationmemdptwo}
\label{lem:safe_buechi_to_buechi_preservation_memdp_two}
The winning regions for $\csafeobj \rabincobuchi \wedge \cbuchiobj \rabinbuchi$ with $\rabinbuchi \subseteq \rabincobuchi$ and $\cbuchiobj \rabinbuchi$ in $\staterestrict\memdp{\rabincobuchi}$  coincide. 
\end{restatable}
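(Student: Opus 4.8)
The plan is to show that inside $\staterestrict{\memdp}{\rabincobuchi}$ the two objectives $\cbuchiobj \rabinbuchi$ and $\csafeobj \rabincobuchi \wedge \cbuchiobj \rabinbuchi$ are satisfied by exactly the same infinite paths. The equality of the almost-sure winning regions then follows immediately, since on each induced Markov chain the two objectives denote the very same measurable event.

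One inclusion is trivial: any path satisfying the conjunction $\csafeobj \rabincobuchi \wedge \cbuchiobj \rabinbuchi$ satisfies its \Bu conjunct, so every strategy that wins the conjunction almost-surely also wins $\cbuchiobj \rabinbuchi$, giving $\winregion{\staterestrict{\memdp}{\rabincobuchi}}{\csafeobj \rabincobuchi \wedge \cbuchiobj \rabinbuchi} \subseteq \winregion{\staterestrict{\memdp}{\rabincobuchi}}{\cbuchiobj \rabinbuchi}$. For the converse I would exploit the shape of the state-restricted MEMDP from \Cref{def:state_restricted}: its state space is $\rabincobuchi \cup \{\bot\}$, where $\bot$ is the unique sink state that absorbs every transition leaving $\rabincobuchi$. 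Since $\rabinbuchi \subseteq \rabincobuchi$ and $\bot \notin \rabincobuchi$, we have $\bot \notin \rabinbuchi$. Now take any infinite path $\Path$ of $\staterestrict{\memdp}{\rabincobuchi}$ with $\Path \models \cbuchiobj \rabinbuchi$, i.e.\ $\infset(\Path) \cap \rabinbuchi \neq \emptyset$. If $\Path$ ever visited $\bot$, then, $\bot$ being absorbing, $\Path$ would remain in $\bot$ forever and, as $\bot \notin \rabinbuchi$, visit $\rabinbuchi$ only finitely often, contradicting $\Path \models \cbuchiobj \rabinbuchi$. Hence $\Path$ never leaves $\rabincobuchi$, so $\Path \models \csafeobj \rabincobuchi$ and therefore $\Path \models \csafeobj \rabincobuchi \wedge \cbuchiobj \rabinbuchi$.

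Combining both directions, the two objectives coincide as sets of infinite paths of $\staterestrict{\memdp}{\rabincobuchi}$. Consequently, for every strategy $\strat$ and every environment $j$ the corresponding events in the induced Markov chain $(\staterestrict{\memdp}{\rabincobuchi})_j[\strat]$ are identical measurable sets and hence carry equal probability; in particular one has probability $1$ exactly when the other does. By the definition of almost-sure winning for MEMDPs (\Cref{def:winning} extended per environment), a belief $\tuple{s, J}$ is winning for $\cbuchiobj \rabinbuchi$ iff it is winning for $\csafeobj \rabincobuchi \wedge \cbuchiobj \rabinbuchi$, so the two winning regions coincide.

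This argument is elementary and I expect no genuine obstacle; the only point requiring a little care is verifying that the path-level equivalence transfers to the measure-one (almost-sure) level. This is immediate here precisely because the two objectives define literally the same event in each induced chain, rather than merely two distinct events of equal measure, so no additional probabilistic reasoning is needed.
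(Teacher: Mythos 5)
Your proof is correct and follows essentially the same route as the paper's: both hinge on the observation that $\bot$ is an absorbing state outside $\rabinbuchi$, so any path (hence any almost-surely winning strategy) satisfying $\cbuchiobj\rabinbuchi$ in $\staterestrict{\memdp}{\rabincobuchi}$ must avoid $\bot$ and therefore satisfies $\csafeobj\rabincobuchi$, the converse inclusion being trivial. Your version is merely more explicit in carrying out the argument at the level of individual paths before transferring it to the probability-one level.
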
\noindent
We can lift these lemmas to the BOMDP associated with a MEMDP.
\begin{restatable}[]{lemma}{lemsafebuchitobuchi}\label{lem:safe_buchi_to_buchi}
The winning regions for $\csafeobj \rabincobuchi \wedge \cbuchiobj \rabinbuchi$ with $\rabinbuchi \subseteq \rabincobuchi$ in $\bomdp$ and $\custombomdp{(\staterestrict{\memdp}{\rabincobuchi})}$ coincide.
\end{restatable}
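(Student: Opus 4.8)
The plan is to \emph{lift} the MEMDP-level preservation result \Cref{lem:safe_buechi_to_buechi_preservation_memdp} to the BOMDP level, rather than re-proving the coincidence by a fresh fixed-point argument. The bridge is the winning-region correspondence between a BOMDP and its underlying MEMDP: for a Rabin objective---and $\csafeobj{\rabincobuchi}\wedge\cbuchiobj{\rabinbuchi}$ with $\rabinbuchi\subseteq\rabincobuchi$ is one---the winning region of the BOMDP is exactly the set of winning beliefs of the MEMDP. For reachability this is stated explicitly; for the safe-\Bu case it follows from \Cref{thm:obj_correspondence} together with \Cref{thm:beliefsufficient}.

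First I would make this correspondence precise for both MEMDPs at hand. Applying \Cref{thm:obj_correspondence} and \Cref{thm:beliefsufficient} gives
\[
\winregion{\bomdp}{\csafeobj{\rabincobuchi}\wedge\cbuchiobj{\rabinbuchi}} = \{\tuple{s,J} \mid \tuple{s,J} \models^{\memdp} \csafeobj{\rabincobuchi}\wedge\cbuchiobj{\rabinbuchi}\},
\]
and, since $\staterestrict{\memdp}{\rabincobuchi}$ is itself a MEMDP, the identical statement with $\memdp$ replaced by $\staterestrict{\memdp}{\rabincobuchi}$ and $\bomdp$ by $\custombomdp{(\staterestrict{\memdp}{\rabincobuchi})}$. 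It then remains to show that the winning beliefs of $\memdp$ and of $\staterestrict{\memdp}{\rabincobuchi}$ coincide. Here I use that a belief $\tuple{s,J}$ is winning in $\memdp$ precisely when $s$ is winning in the environment-restricted MEMDP $\restrictenv{\memdp}{J}$; since \Cref{lem:safe_buechi_to_buechi_preservation_memdp} holds for \emph{every} MEMDP, it applies in particular to each $\restrictenv{\memdp}{J}$ and to $\staterestrict{(\restrictenv{\memdp}{J})}{\rabincobuchi} = \restrictenv{(\staterestrict{\memdp}{\rabincobuchi})}{J}$ (the state- and environment-restrictions commute, as they act on orthogonal components). This yields belief-wise coincidence on every state $s \in \rabincobuchi$. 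For states $s \notin \rabincobuchi$ in $\bomdp$ and the sink $\bot$ in $\custombomdp{(\staterestrict{\memdp}{\rabincobuchi})}$, both are losing because the safety conjunct $\csafeobj{\rabincobuchi}$ already fails, so neither enters its winning region. Chaining the three equalities gives the coincidence of the two BOMDP winning regions over the shared states $\rabincobuchi \times \powerset(I)$.

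The main obstacle I expect is the bookkeeping around the mismatched state spaces: $\bomdp$ ranges over $S \times \powerset(I)$ whereas $\custombomdp{(\staterestrict{\memdp}{\rabincobuchi})}$ ranges over $(\rabincobuchi \cup \{\bot\}) \times \powerset(I)$, and one must verify that the belief support $J$ attached to a state $s \in \rabincobuchi$ agrees in the two constructions. This holds because the state restriction of \Cref{def:state_restricted} keeps the transition probabilities inside $\rabincobuchi$ unchanged, only redirecting the mass leaving $\rabincobuchi$ to $\bot$; hence $\beliefupd$ produces identical supports along any path that stays within $\rabincobuchi$, and the transitions that differ touch only the $\bot$ / out-of-$\rabincobuchi$ region, which is losing on both sides and therefore irrelevant to the winning region. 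Making this ``irrelevance of the bad region'' argument airtight, so that the claimed coincidence is pinned down exactly on $\rabincobuchi \times \powerset(I)$, is the one place that needs care.
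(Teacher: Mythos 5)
Your proposal is correct and follows essentially the same route as the paper: both reduce the BOMDP-level claim to the MEMDP level via the strategy/objective correspondence (\Cref{thm:obj_correspondence}, with \Cref{thm:beliefsufficient} ensuring belief-based strategies suffice) and then invoke \Cref{lem:safe_buechi_to_buechi_preservation_memdp}. Your additional bookkeeping --- the commutation of state- and environment-restriction, the agreement of belief supports along paths inside $\rabincobuchi$, and the losing status of $\bot$ and of states outside $\rabincobuchi$ --- only makes explicit what the paper's two-line proof sketch leaves implicit.
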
\noindent
Almost-sure \Bu objectives can be reduced to almost-sure reachability objectives using a construction similar to the one in~\cite{DBLP:journals/jacm/BaierGB12}, see the proof in the \cref{sec:appendix:bomdps} for details.
\begin{restatable}[]{lemma}{lembuchipoly}\label{lem:buchi_poly}
\Bu in BOMDPs is decidable in polynomial time.
\end{restatable}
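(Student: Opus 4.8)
The plan is to reduce almost-sure \Bu to a sequence of almost-sure reachability computations, reusing \algreach (\Cref{thm:reachability_correct}) together with the state restriction of \Cref{def:state_restricted}, and then to bound the number of reachability calls polynomially. Concretely, I would compute the \Bu winning region as the limit of a decreasing sequence of belief-state sets $W_0 \supseteq W_1 \supseteq \dots$ defined by $W_0 \defeq S \times \powerset(I)$ and $W_{i+1} \defeq \algreach(\staterestrict{\bomdp}{W_i}, T) \setminus \{\bot\}$. Intuitively, $W_{i+1}$ is the set of belief-states from which one can, while staying inside $W_i$, almost-surely reach a $T$-state of $W_i$: restricting to $W_i$ routes every exit to the losing sink $\bot$, so almost-sure reachability of $T$ in $\staterestrict{\bomdp}{W_i}$ captures exactly ``stay in $W_i$ and reach $T \cap W_i$ with probability one'' in every environment of the belief. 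The sequence stabilizes at some $W$, which I claim equals $\winregion{\bomdp}{\buchiobj}$.

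For soundness ($W \subseteq \winregion{\bomdp}{\buchiobj}$) I would use the standard recurrence argument. Since $W$ is a fixpoint, there is a single memoryless strategy on $\staterestrict{\bomdp}{W}$ that almost-surely reaches a $T$-state of $W$ from every belief-state of $W$, simultaneously in all environments of the belief (uniform memoryless almost-sure reachability strategies exist, as usual). Because $T \cap W \subseteq W$, after each visit to $T$ the play is again in $W$, so the same strategy again almost-surely reaches $T$; by a Borel--Cantelli-style recurrence this yields infinitely many visits to $T$ almost surely. The point needing care is that this must hold per environment: for a fixed true environment $j$, the belief along any play always contains $j$ (by \Cref{lemma:belief-stabalizing}, the true environment is never removed from the belief), so every belief-state visited lies in $W$ with $j$ in its belief, and the reachability guarantee applies to environment $j$ at each round. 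Hence $T$ is hit infinitely often almost-surely in $\memdp_j$ for every $j$.

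For completeness ($\winregion{\bomdp}{\buchiobj} \subseteq W$) I would show by induction that the \Bu winning region is contained in every $W_i$. Containment in $W_0$ is trivial. Assuming containment in $W_i$, take a \Bu-winning belief-state and a winning strategy; such a strategy can be taken to keep the play almost-surely inside the \Bu winning region (leaving it would forfeit the objective with positive probability), which is $\subseteq W_i$ by hypothesis. A \Bu-winning play visits $T$ infinitely often, so in particular it almost-surely reaches a $T$-state of the winning region while staying in $W_i$; thus the belief-state survives into $W_{i+1}$. Taking the limit gives containment in $W$, and together with soundness, $W = \winregion{\bomdp}{\buchiobj}$.

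Finally, for the complexity: the sequence $W_0 \supseteq W_1 \supseteq \cdots$ strictly decreases until it stabilizes, so it reaches the fixpoint after at most $|S \times \powerset(I)|$ iterations, which is polynomial in the size of $\bomdp$. Each iteration performs one state restriction and one call to \algreach, both polynomial in the size of $\bomdp$ by \Cref{lem:reachability_complexity}. The main obstacle I expect is the soundness step, specifically making the per-environment recurrence argument precise: one has to argue that a single uniform memoryless almost-sure reachability strategy on $W$ induces, in each environment of each belief, a recurrent return to $T$, and that this relies essentially on monotonic information gain keeping the true environment in the belief. The completeness claim that a \Bu-winning strategy can be assumed to stay within the winning region, while standard, also warrants a short justification in the belief setting.
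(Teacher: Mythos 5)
Your proposal is correct, but it takes a genuinely different route from the paper. The paper's proof is a one-shot probabilistic gadget reduction in the style of Baier--Gr\"o{\ss}er--Bertrand: it introduces a fresh absorbing target $\top$, halves the probability of every transition entering $T$ and redirects the removed mass to $\top$, so that a strategy reaches $\top$ almost-surely iff it visits $T$ infinitely often; a single call to the reachability machinery then decides \Bu. You instead compute the greatest fixpoint of the operator ``almost-surely reach $T$ while staying inside the current set,'' iterating $W_{i+1} \defeq \algreach(\staterestrict{\bomdp}{W_i}, T)\setminus\{\bot\}$. Your soundness and completeness arguments do go through in the BOMDP setting: the uniform memoryless reachability strategy you need is exactly the one constructed in the paper's proof of \Cref{thm:reachability_correct} (uniform randomization over the surviving actions, which also keeps the play inside the surviving states), and the completeness step that a winning strategy never leaves the winning region is \Cref{lem:winning_reachable_winning}. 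What each approach buys: the paper's reduction is shorter, needs only one reachability call, and sidesteps having to re-establish the fixpoint characterization of almost-sure \Bu for the multi-environment setting; your iteration avoids altering transition probabilities (staying purely graph-theoretic, in the spirit of the rest of the paper) at the cost of up to $|S \times \powerset(I)|$ reachability calls --- still polynomial in $\bomdp$ by \Cref{lem:reachability_complexity} --- and of the recurrence/containment arguments you sketch. One cosmetic wrinkle: the sink $\bot$ produced by $\staterestrict{\bomdp}{W_i}$ carries no belief component, so to feed the restricted model back into \algreach{} one should formally tag it with some belief (or treat it as losing in every environment); this is routine and not a gap.
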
\noindent
Now, to prove \cref{thm:rabin_pair_poly}, $\winregion\bomdp{\csafeobj \rabincobuchi \wedge \cbuchiobj \rabinbuchi}$ is computable as \Bu objective on a polynomially larger MEMDP (\cref{lem:safe_buchi_to_buchi}) in polynomial time (\cref{lem:buchi_poly}).

\section{A Recursive PSPACE Algorithm for Rabin Objectives}\label{sec:pspace:algorithm}
We now show how to exploit the structure of BOMDPs to arrive at our $\tPSPACE$ algorithm for Rabin objectives in MEMDPs. 
We first discuss the \emph{non-local} behavior of Rabin objectives, and in particular, why the standard approach for almost-sure Rabin objectives for MDPs fails on BOMDPs.
Then, in \cref{sec:localview}, we introduce \emph{$J$-local MEMDPs}, which are MEMDPs where the belief $J$ does not change. 
These $J$-local MEMDPs also occur as fragments of the BOMDPs.
In  $J$-local MEMDPs, whenever a transition is made that would cause a belief update to a strict subset of $J$, we transition to dedicated sink states, which we refer to as \emph{frontier states}. 
These frontier states reflect transitioning into a different fragment of the BOMDP, from which all previously accessed BOMDP states are unreachable due to the monotonicity of the belief update operator.
Next, in \cref{sec:local_rabin_alg}, we present an algorithm for efficiently computing the winning region of Rabin objectives on $J$-local MEMDPs.
Finally, in \cref{sec:recursive_computation}, we prove that frontier states can be summarized as being either winning or losing, ultimately leading to a $\tPSPACE$ algorithm for deciding Rabin objectives in MEMDPs.

\subsection{Non-Local Behavior of Rabin Objectives}
\label{sec:nonlocalrabin}
The traditional approach for checking almost-sure Rabin objectives on MDPs, see \eg~\cite{Baier2008}, computes for each state $s \in S$, whether there is a strategy that \emph{immediately} satisfies a Rabin pair $\rabinObj_i = \rabinPair i$, \ie, satisfying  $\csafeobj\rabincobuchi_i \wedge \cbuchiobj\rabinbuchi_i$, and is a stronger condition.
A state satisfies the Rabin condition $\rabinObj$ iff it almost-surely reaches the set of immediately winning states (the \emph{win set}).
The example below illustrates why this approach fails to generalize to MEMDPs.
\begin{example}\label{ex:non_local}
\begin{figure}[t]
    \caption{Example of a BOMDP fragment with Rabin objective $\rabinObj=\{\tuple{\{s_1\}, \{s_1\}}, \tuple{\{s_2\}, \{s_2\}}\}$.}
    \label{fig:non_local_example}
    \centering
    \begin{tikzpicture}[node distance=3cm, on grid, auto]
\begin{scope}
    \node[state, inner sep = 0, outer sep = 0, initial, initial text=$\memdp_1$, text width=0.7cm, text centered] (s1) {$s_1$ {\scriptsize$\{1, 2\}$}};

    \path[->]
        ($ (s1.east) + (0.5cm, 0) $)
            edge[-]             node{$a$} (s1)
            edge[out=90, in=30]           (s1)

        ;
\end{scope}

\begin{scope}[xshift=6cm]
    \node[state,inner sep = 0, outer sep = 0, initial, initial text=$\memdp_2$, text width=0.7cm, text centered] (s1) {$s_1$ {\scriptsize$\{1, 2\}$}};
    \node[state, inner sep = 0, outer sep = 0] (s2) [right=of s1, text width=0.7cm, text centered] {$s_2$ {\scriptsize$\{2\}$}};
    
    \path[->]
        ($ (s1.east) + (0.5cm, 0) $)
            edge[-]                   node{$a$}       (s1)
            edge[out=90, in=30, swap] node[right]{$\,\nicefrac12$} (s1)
            edge                      node[above]{$\quad\nicefrac12$} (s2)
            
        (s2)
            edge[loop right, looseness=3, swap] node{$a$}   (s2);
\end{scope}
\end{tikzpicture}
\end{figure}
In \cref{fig:non_local_example}, we see a BOMDP for which the `MDP approach' does \emph{not} work. First, note that the only strategy that always plays $a$ is winning in every state. 
Now, consider the algorithm and the first Rabin pair $\rabinObj_1 = \tuple{\{s_1\}, \{s_1\}}$. %
State $\tuple{s_2, \{2\}}$ does not satisfy $\csafeobj \{s_1\} \wedge \cbuchiobj \{s_1\}$.
State $\tuple{s_1, \{1, 2\}}$ also does not belong to the win set, as in $\memdp_2$ there is a $\nicefrac{1}{2}$ probability of reaching the sink state $\tuple{s_2, \{ 2 \}}$.
For the second Rabin pair, (only) state $\tuple{s_2, \{ 2 \}}$ is immediately winning.
Thus, the win set is the singleton set containing $\tuple{s_2, \{ 2 \}}$.
From the initial state $\tuple{s_1, \{1, 2\}}$, it is not possible to almost-surely reach the state $\tuple{s_2, \{2\}}$, due to $\memdp_1$.
Therefore, a straightforward adaption of the traditional algorithm for MDPs would yield that the initial state is losing.
\end{example}
The difficulty in the example above lies in the fact that in the different environments, a different Rabin pair is satisfied.
However, taking the self-loop in $s_1$ does not update the belief and it remains unclear whether we will eventually satisfy $\rabinObj_1$ or $\rabinObj_2$.

\subsection{Local View on BOMDPs}
\label{sec:localview}
We formalize $J$-local MEMDPs, that transition into frontier states if the belief updates.

\begin{definition}[$J$-local MEMDPs]
Given a MEMDP $\memdptuple$, the \emph{$J$-local MEMDP} $\localbomdp{\memdp}{J} = \tuple{S \sqcup F, A, \{p_j'\}_{j \in J}, \iota}$ is a MEMDP, with as state space the disjoint union of the (original) states $S$ and the \emph{frontier states} $F \defeq S \times A \times S$.
The transition functions $\{p_j' \colon S \sqcup F \times A \rightharpoonup \dist{S \sqcup F}\}_{j \in J}$ are defined
s.t.\ (1)~$p'_j(f,a,f) = 1$ for all $f \in F$, (2)~$p'_j(s,a)$ is undefined if $p_j(s,a)$ is undefined, and (3)~for every state $s\in S$ and $a \in A(s)$, we define $p'_j(s,a,\tuple{s,a,s'}) = p_j(s,a,s')$ if $\beliefupd(\tuple{s, J}, a, s') \neq \tuple{s', J}$ and  $p'_j(s,a,s') = p_j(s, a, s')$ otherwise.

\end{definition}
By definition of the transition functions $\{p_j'\}_{j \in J}$ of a $J$-local MEMDP $\localbomdp{\memdp}{J}$, all environments of $\localbomdp{\memdp}{J}$ share the same underlying graph within the states of $S$. Transitions to the frontiers may, however, differ (made formal in \cref{sec:appendix:local}).
\noindent
As both $\memdp$ and $\localbomdp{\memdp}{J}$ have states in $S$, a Rabin objective $\rabinObj$ can readily be applied to both.
To give meaning to the frontier states $F$ in $\localbomdp{\memdp}{J}$, we introduce \emph{localized Rabin objectives}:
\begin{definition}[Localized rabin objective, winning frontier]
Given Rabin objective $\rabinObjDef$ and some subset of frontier state $\winningfrontier \subseteq F$, the \emph{localized Rabin objective} for $J$-local MEMDP $\localbomdp{\memdp}{J}$ is $
\locRabinObj(\winningfrontier) \defeq \{ \tuple{\rabincobuchi_i \cup \winningfrontier, \rabinbuchi_i \cup \winningfrontier} \mid 1 \le i \le k \}$.
We call $\winningfrontier$ the \emph{winning frontier}, as any path that reaches a state in $\winningfrontier$ is winning.
\end{definition}

\subsection{An Algorithm for Localized Rabin Objectives}\label{sec:local_rabin_alg}
Below, we present an algorithm to compute the winning region of a localized Rabin objective on a $J$-local MEMDP,
using some auxiliary definitions on winning in a $J$-local MEMDPs.
\begin{definition}[Immediately winning Rabin pair/state]
A $J$-local MEMDP state $s \in S \sqcup F$ has an \emph{immediately winning Rabin pair} $\rabinObj_i = \rabinPair{i}$ when $s \models^{\localbomdp{\memdp}{J}} \csafeobj\rabincobuchi_i \wedge \cbuchiobj\rabinbuchi_i$.
A state $s \in S \sqcup F$ is \emph{immediately winning} if it has an immediately winning Rabin pair.
\end{definition}
Immediately winning states are, in particular, also winning states (see \cref{lem:immediately_winning_is_winning}, \cref{appx:algorithm:loc:rabin}).
\noindent
It is natural also to consider specialized winning regions for just immediately winning states:
\begin{definition}%
The \emph{Rabin win set} $\rabinWinSet$ is $\{ s \in S \sqcup F \mid s \text{ is immediately winning } \}$.
\end{definition}
The crux of our algorithm is that in $J$-local MEMDPs, as in MDPs but unlike in BOMDPs, winning a Rabin objective is equivalent to almost-surely reaching the Rabin win set.
\begin{restatable}[]{lemma}{lemcommonpairreachable}\label{lem:common_pair_reachable}   
    A state $s$ in a $J$-local MEMDP is winning iff it can almost-surely reach $\rabinWinSet$.
\end{restatable}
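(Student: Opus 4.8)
The plan is to prove both implications after a simplifying reduction to memoryless strategies. Since $\localbomdp{\memdp}{J}$ is a MEMDP over the environment set $J$ whose belief stays equal to $J$ on every state of $S$ (belief-preserving transitions keep all of $J$ possible, and every belief-refining transition leads into a frontier \emph{sink}), \Cref{thm:beliefsufficient} lets me assume any winning strategy is belief-based, hence \emph{memoryless} on $S$ (the action chosen on the sink frontier states is irrelevant). Two structural facts about $\localbomdp{\memdp}{J}$ will be used throughout: all environments share the same underlying graph on $S$, so under a fixed memoryless $\strat$ the reachability relation \emph{within} $S$ is identical in every environment; and the only environment-specific behaviour is the escape from an $S$-state into a frontier state through a belief-refining (revealing) transition.

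For the direction ``$\Leftarrow$'', suppose $s$ almost-surely reaches $\rabinWinSet$ under some strategy $\strat_R$, simultaneously in all environments of $J$. Every immediately winning $S$-state $s'$ is winning by \Cref{lem:immediately_winning_is_winning}, with a witness $\tau_{s'}$ that almost-surely satisfies one localized safe-\Bu pair $\csafeobj(\rabincobuchi_i \cup \winningfrontier) \wedge \cbuchiobj(\rabinbuchi_i \cup \winningfrontier)$ in every environment, and every winning frontier state is a winning sink. I compose these: play $\strat_R$ until $\rabinWinSet$ is hit, then switch to the corresponding witness (or simply stay, if the state hit is a frontier). In each environment this reaches $\rabinWinSet$ almost-surely after a finite prefix and then wins; since a finite prefix does not alter $\infset(\Path)$, the composed strategy satisfies $\locRabinObj(\winningfrontier)$ almost-surely in every environment, so $s$ is winning.

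The direction ``$\Rightarrow$'' is the crux. Fix a memoryless winning strategy $\strat$. In every environment $j \in J$ the induced finite Markov chain almost-surely absorbs into a bottom SCC, which is either a single frontier sink or a set $B \subseteq S$. Winning of $\strat$ forces each reachable frontier sink to lie in $\winningfrontier \subseteq \rabinWinSet$, and each reachable $B \subseteq S$ to satisfy some localized pair $i$, i.e.\ $B \subseteq \rabincobuchi_i$ and $B \cap \rabinbuchi_i \neq \emptyset$. It therefore suffices to show $B \subseteq \rabinWinSet$, that is, that every $s'' \in B$ is immediately winning for pair $i$; then absorption into any bottom SCC already means reaching $\rabinWinSet$, and almost-sure absorption yields almost-sure reachability of $\rabinWinSet$ in every environment, as required.

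The main obstacle is exactly this last claim, because the witness I want to use for $s''$ — namely $\strat$ itself, which in environment $j$ stays in $B \subseteq \rabincobuchi_i$ and visits $\rabinbuchi_i$ infinitely often — may in another environment $j'$ \emph{escape} $B$ through a belief-refining transition into a frontier state, violating the safety part $\csafeobj\rabincobuchi_i$. Here the \emph{localization} of the objective together with the shared-graph property rescue the argument. Because $B$ is reachable and strongly connected under $\strat$ in environment $j$, the shared graph on $S$ makes $B$ reachable-within-$S$ and strongly connected in \emph{every} environment, and from a $B$-state the only non-$B$ successors available in any environment are frontier states. For any escape target $f = \tuple{s_1, a, s'}$ with $s_1 \in B$ and $\strat(s_1)(a) > 0$, the state $s_1$ is reachable from the initial state within $S$ in environment $j'$ by shared reachability, so $\strat$ reaches $s_1$ and then $f$ with positive probability in $j'$; since $\strat$ is globally winning in $j'$, the sink $f$ must be winning, hence $f \in \winningfrontier$. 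Consequently, from any $s'' \in B$ and in any environment, $\strat$ almost-surely either stays in $B$ forever (visiting $\rabinbuchi_i$ infinitely often while remaining in $\rabincobuchi_i$) or is absorbed in a winning frontier of $\winningfrontier$; both outcomes satisfy the localized pair $\csafeobj(\rabincobuchi_i \cup \winningfrontier) \wedge \cbuchiobj(\rabinbuchi_i \cup \winningfrontier)$. Thus $s''$ is immediately winning, $B \subseteq \rabinWinSet$, and the proof is complete.
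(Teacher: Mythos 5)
Your proof is correct and follows essentially the same route as the paper: the $\Leftarrow$ direction by composing the reaching strategy with the witnesses of immediately winning states, and the $\Rightarrow$ direction by almost-sure absorption into BSCCs together with the claim that every state of a reachable BSCC is immediately winning. The only difference is presentational --- you reduce to memoryless strategies via \Cref{thm:beliefsufficient} and inline the content of \Cref{lem:common_rabin_pair} (shared graphs on $S$, escapes only into frontier states that must be winning, no strict sub-BSCCs), whereas the paper factors these steps into the auxiliary \Cref{lem:local_graph_isomorphism,lem:reachable_from_bscc,lem:no_sub_bscc,lem:winning_reachable_winning} and proves the claim for arbitrary finite-memory winning strategies.
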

\noindent
We sketch the proof ingredients later.
We first introduce~\cref{alg:local_rabin}, which lifts the MDP approach (\cref{sec:nonlocalrabin}) to $J$-local MEMDPs.
The set $\winset$ on line~\ref{line:winset} stores states for which an immediately winning Rabin pair has been found.
For each Rabin pair $\rabinObj_i$, the algorithm computes the localized Rabin pair $\locRabinObj_i$.
Next, in line~\ref{line:safebuchi}, it compute the winning region $\winregion{ \genericlocal }{ \csafeobj{\rabincobuchi_i'} \wedge \cbuchiobj{\rabinbuchi_i'} }$ using the approach described in~\cref{sec:safe_buchi}.
These are exactly the states that have $\locRabinObj_i$ as an immediately winning Rabin pair, \ie, they constitute the win set  $\winset$.
Finally, the algorithm outputs the winning region by computing states that almost-surely reach $\winset$ using \cref{alg:reachability}.

\begin{algorithm}[t]
\begin{algorithmic}[1]
  \Function{\rabinalgname}{Local MEMDP $\genericlocal = \localbomdp{\memdp}{J}, \winningfrontier, \rabinObj = \{ \tuple{\rabincobuchi_1, \rabinbuchi_1}, \cdots, \tuple{\rabincobuchi_n, \rabinbuchi_n}\}$ }

  \State $\winset \assign \emptyset$\label{line:winset}
  \For{$1 \le i \le n$}\label{line:forloop}
    \State $\rabincobuchi_i' \assign \rabincobuchi_i \cup \winningfrontier$ \, ; \, $\rabinbuchi_i' \assign \rabinbuchi_i \cup \winningfrontier$

    \State $\winset \assign \winset \cup \winregion{ \genericlocal }{ \csafeobj{\rabincobuchi_i'} \wedge \cbuchiobj{\rabinbuchi_i'} }$
    \Comment{See~\cref{thm:rabin_pair_poly}}\label{line:safebuchi}
  \EndFor

  \State \Return{$\winregion{ \genericlocal }{ \creachobj{\winset} }$}\label{line:return}
  
  \EndFunction
\end{algorithmic}
\caption{Local Rabin Algorithm.}%
\label{alg:local_rabin}%
\end{algorithm}%

\begin{restatable}[]{theorem}{thmrabinalgcorrect}\label{thm:rabin_alg_correct}
\Cref{alg:local_rabin}
yields winning regions for local MEMDPs and localized Rabin objectives.
\end{restatable}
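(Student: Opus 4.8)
The plan is to show that \cref{alg:local_rabin} returns exactly $\winregion{\genericlocal}{\locRabinObj(\winningfrontier)}$ by splitting its correctness into three links that chain together: (i)~after the for-loop the set $\winset$ equals the Rabin win set $\rabinWinSet$ of immediately winning states; (ii)~by~\cref{lem:common_pair_reachable}, a state of $\genericlocal = \localbomdp{\memdp}{J}$ is winning for $\locRabinObj(\winningfrontier)$ iff it almost-surely reaches $\rabinWinSet$; and (iii)~the returned set $\winregion{\genericlocal}{\creachobj{\winset}}$ is exactly the almost-sure reachability region of $\winset$, computed correctly by the reachability routine. As (ii) is the already-stated crux lemma, the theorem reduces to verifying (i), justifying that the subroutines apply to $\genericlocal$, and composing.

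For (i), I would unfold ``immediately winning'' for the \emph{localized} objective: a state $s \in S \sqcup F$ is immediately winning iff for some index $i$ it satisfies the localized safe-\Bu objective $\csafeobj(\rabincobuchi_i \cup \winningfrontier) \wedge \cbuchiobj(\rabinbuchi_i \cup \winningfrontier)$, that is, iff $s$ lies in $\winregion{\genericlocal}{\csafeobj{\rabincobuchi_i'} \wedge \cbuchiobj{\rabinbuchi_i'}}$ with $\rabincobuchi_i' = \rabincobuchi_i \cup \winningfrontier$ and $\rabinbuchi_i' = \rabinbuchi_i \cup \winningfrontier$. The loop accumulates precisely the union over $i$ of these regions, each computed correctly by~\cref{thm:rabin_pair_poly}, so $\winset = \bigcup_i \winregion{\genericlocal}{\csafeobj{\rabincobuchi_i'} \wedge \cbuchiobj{\rabinbuchi_i'}} = \rabinWinSet$. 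I would separately confirm that the frontier states are classified as intended: a frontier state in $\winningfrontier$ belongs to every $\rabincobuchi_i' \cap \rabinbuchi_i'$ and, being an absorbing self-loop, trivially satisfies each localized safe-\Bu objective, hence is immediately winning; a frontier state outside $\winningfrontier$ is absorbing and lies in no $\rabinbuchi_i'$ (since $\rabinbuchi_i \subseteq S$), so it can never visit $\rabinbuchi_i'$ infinitely often and is correctly excluded.

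Composing the three links, \cref{lem:common_pair_reachable} gives $\winregion{\genericlocal}{\locRabinObj(\winningfrontier)} = \winregion{\genericlocal}{\creachobj{\rabinWinSet}} = \winregion{\genericlocal}{\creachobj{\winset}}$, and the final line returns exactly this set, correct by~\cref{thm:reachability_correct} (the almost-sure reachability routine of~\cref{alg:reachability}). One technical point to discharge is that~\cref{thm:rabin_pair_poly,thm:reachability_correct} are phrased for BOMDPs while we apply them to the $J$-local MEMDP $\genericlocal$. I would argue this transfer is sound: $\genericlocal$ is a MEMDP whose environments share the underlying graph on $S$ and whose frontier states are absorbing, so passing to its BOMDP leaves the belief component frozen at $J$ throughout $S$, and the qualitative fixed-point arguments underlying both routines rely only on this graph structure and the absorbing frontier, not on the particular state shape $S \times \powerset(I)$.

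The main obstacle is not the composition above but the crux it invokes, \cref{lem:common_pair_reachable}: the reduction of winning to almost-sure reachability of $\rabinWinSet$, which is exactly the step that fails on general BOMDPs, as~\cref{ex:non_local} witnesses. Granting it, the theorem is routine bookkeeping; to prove it from scratch I would use the shared-graph structure of $\genericlocal$. Under any winning (finite-memory) strategy and in each environment $j \in J$, the induced chain almost-surely settles in a bottom component; a frontier component must be a winning-frontier state (else some environment loses), hence lies in $\rabinWinSet$, while a component $C \subseteq S$ is common to all environments because belief-preserving $S$-transitions are shared. In such a $C$ one has $\infset(\Path) = C$ almost surely in every environment, so the witnessing Rabin index $i$ with $C \subseteq \rabincobuchi_i'$ and $C \cap \rabinbuchi_i' \neq \emptyset$ is forced to be the same across environments; thus every state of $C$ is immediately winning and $C \subseteq \rabinWinSet$. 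The delicate part is precisely this uniformity of the winning pair across environments, which the frozen belief of a $J$-local MEMDP guarantees but a genuine BOMDP does not.
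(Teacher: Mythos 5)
Your main argument is exactly the paper's proof of \cref{thm:rabin_alg_correct}: it composes (1)~$\winset = \rabinWinSet$ (the paper's \cref{lem:winset_correct}), (2)~winning iff almost-surely reaching $\rabinWinSet$ (\cref{lem:common_pair_reachable}), and (3)~correctness of the final reachability call, so on that level the proposal is correct and takes the same route.

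One caveat concerns your from-scratch sketch of the crux \cref{lem:common_pair_reachable}. You claim that a bottom component $C \subseteq S$ of the induced chain in one environment ``is common to all environments'' and that $\infset(\Path) = C$ almost surely \emph{in every} environment. That is not true: the environments of $\localbomdp{\memdp}{J}$ share the underlying graph only on transitions that stay inside $S$ (\cref{lem:local_graph_isomorphism}); transitions into the frontier may differ per environment, so in another environment $j'$ a path starting in $C$ may escape to a frontier state with positive probability, and $C$ need not be a BSCC there. Consequently your argument that the witnessing Rabin index is ``forced to be the same across environments'' does not go through as stated. The paper's proof of \cref{lem:common_rabin_pair} handles exactly this case: using \cref{lem:reachable_from_bscc,lem:no_sub_bscc,lem:winning_reachable_winning} it shows that in environment $j'$ every reachable state lies in $C$ or in the \emph{winning} frontier, that the only BSCCs there are (winning) frontier states, and that such frontier states satisfy every localized Rabin pair because $\winningfrontier$ is added to both components of each pair --- which is what rescues the safety part $\csafeobj \rabincobuchi_i'$ and the \Bu part $\cbuchiobj \rabinbuchi_i'$ simultaneously. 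Since your proof of the theorem itself only cites the lemma, the theorem-level argument stands, but the lemma sketch as written has a genuine gap.
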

\noindent
The remainder of this subsection discusses the ingredients for proving \cref{lem:common_pair_reachable} and the theorem above. 
Therefore, we consider the induced Markov chain $\mc$ of environment $j$ under any strategy, \ie, $\mc = \localbomdp\memdp J[\strat]_j$. 
In any state that is in a BSCC of $\mc$, we notice that the reachable states \emph{in any environment} are contained by the BSCC and the frontier states. 
Furthermore, we observe that in any environment, either the BSCCs in those states are the original BSCC or are (trivial) BSCCs in the frontier.
Formal statements are given in \Cref{appx:algorithm:loc:rabin}.
The next lemma shows that states that are (under a winning strategy and in some environment) in a BSCC are immediately winning with some Rabin pair. 
The main challenge is that this BSCC may not be a BSCC in every environment. 
Using the observations above, if the states do not constitute a BSCC, they will almost surely reach (winning) frontier states, which allows us to derive the following formal statement:
\begin{restatable}[]{lemma}{lemcommonrabinpair}\label{lem:common_rabin_pair}
Given a $J$-local MEMDP $\localbomdp{\memdp}{J}$ and a winning strategy $\strat$.  
Every state that is in a BSCC 
$\Sbsccj j$ of $\localbomdp{\memdp}{J}[\strat]_j$
of some environment $j \in J$,
is in $\rabinWinSet$.
\end{restatable}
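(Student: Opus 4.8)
The plan is to show that a state $s^* \in \Sbsccj{j}$ admits a \emph{single} localized Rabin pair that is immediately winning across \emph{all} environments simultaneously, and that this pair is witnessed by (the tail of) $\strat$ itself. I may assume $\Sbsccj j$ is reachable from the initial state in environment $j$, as unreachable states of the induced chain are irrelevant to the winning of $\strat$. Since frontier states are absorbing, a BSCC is either a singleton $\{f\}$ with $f \in F$ or a subset of $S$. In the first case, $f$ is reached with positive probability and traps every such path; since $\strat$ wins $\locRabinObj(\winningfrontier)$ in $j$, the constant path at $f$ must win, forcing $f \in \winningfrontier$ and hence $f \in \rabinWinSet$.

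Now assume $\Sbsccj j \subseteq S$. Paths trapped in $\Sbsccj j$ satisfy $\infset(\Path) = \Sbsccj j$ almost surely, so because $\strat$ wins $\locRabinObj(\winningfrontier)$ in $j$ there is a pair index $i$ with $\Sbsccj j \subseteq \rabincobuchi_i \cup \winningfrontier$ and $\Sbsccj j \cap (\rabinbuchi_i \cup \winningfrontier) \neq \emptyset$; as $\Sbsccj j \cap F = \emptyset$ this simplifies to $\Sbsccj j \subseteq \rabincobuchi_i$ and $\Sbsccj j \cap \rabinbuchi_i \neq \emptyset$. I would claim the tail of $\strat$ from $s^*$ witnesses that $s^*$ wins $\csafeobj{(\rabincobuchi_i \cup \winningfrontier)} \wedge \cbuchiobj{(\rabinbuchi_i \cup \winningfrontier)}$ in every environment $j' \in J$, with this same pair $i$. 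Fix such a $j'$. Because all environments share the underlying graph on $S$, the $S$-to-$S$ transitions inside $\Sbsccj j$ have identical support in $j$ and $j'$; hence $\Sbsccj j$ is still strongly connected under $\strat$ in $j'$, and the prefix from the initial state to $s^*$ witnessing reachability in $j$ (which stays in $S$, since frontier states are sinks) is positive also in $j'$.

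The crux is the Büchi condition in $j' \neq j$: there $\Sbsccj j$ need not be bottom, since transitions to frontier states may carry positive probability in $j'$ even when they carry probability $0$ in $j$. I would invoke the observation that in any environment the BSCCs among these states are either $\Sbsccj j$ itself or trivial frontier BSCCs, giving two subcases. If $\Sbsccj j$ leaks no probability to the frontier in $j'$, it remains a BSCC there, so by strong connectivity the $\rabinbuchi_i$-state of $\Sbsccj j$ is visited infinitely often; thus $\cbuchiobj{(\rabinbuchi_i \cup \winningfrontier)}$ holds, while $\csafeobj{(\rabincobuchi_i \cup \winningfrontier)}$ holds since the path stays in $\Sbsccj j \subseteq \rabincobuchi_i$. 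Otherwise $\Sbsccj j$ is transient in $j'$ and (by strong connectivity) almost surely escapes to some frontier state $f$. Any such $f$ is reachable from the initial state in $j'$ (prefix to $s^*$, a run inside $\Sbsccj j$, then the escaping edge), is absorbing, and occurs with positive probability; since $\strat$ wins $\locRabinObj(\winningfrontier)$ in $j'$, this forces $f \in \winningfrontier$, so the path stays in $\rabincobuchi_i \cup \winningfrontier$ and visits $f \in \rabinbuchi_i \cup \winningfrontier$ infinitely often. Either way the objective is met in $j'$, and as $j'$ was arbitrary, $s^* \models^{\localbomdp{\memdp}{J}} \csafeobj{(\rabincobuchi_i \cup \winningfrontier)} \wedge \cbuchiobj{(\rabinbuchi_i \cup \winningfrontier)}$, \ie, $s^* \in \rabinWinSet$.

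The main obstacle is precisely this last step: $\Sbsccj j$ is a genuine BSCC only in environment $j$, whereas in the other environments its probability may leak into frontier states, so the Büchi guarantee does not transfer for free. The resolution hinges on combining the shared-graph property (to transfer both strong connectivity and the reachability of $s^*$ from $j$ to $j'$) with the global winning property of $\strat$ (to force every reachable absorbing frontier into $\winningfrontier$). A minor technical point to treat carefully is the finite-memory lifting: $\Sbsccj j$ lives in the product with the FSC memory nodes, so the inclusions above should be read on the $S$-projection, and the witnessing strategy should be started in the memory node attached to $s^*$ inside the BSCC.
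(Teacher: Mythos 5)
Your proposal is correct and follows essentially the same route as the paper's proof: split on whether the BSCC is a frontier singleton or contained in $S$, extract the witnessing Rabin pair from environment $j$ via the a.s.\ absorption into BSCCs, and then transfer it to every other environment $j'$ by combining the shared underlying graph on $S$ (which gives strong connectivity of $\Sbsccj j$ in $j'$ and rules out sub-BSCCs) with the fact that any frontier state reached with positive probability under the winning strategy must lie in $\winningfrontier$. The paper packages these steps as \cref{lem:reachable_from_bscc,lem:no_sub_bscc,lem:winning_reachable_winning,lem:bscc_determines_winning,lem:local_graph_isomorphism}, which you re-derive inline; your explicit remarks on the reachability assumption and the FSC-memory projection are consistent with how the lemma is actually used in \cref{lem:common_pair_reachable}.
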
\noindent
With this statement, we can now prove \cref{lem:common_pair_reachable} as under any winning strategy, we almost-surely end up in BSCCs.
We return to the proof of the main theorem about the correctness of \rabinalgname{}.
First, we observe that we correctly identify the immediately winning states.
\begin{restatable}[]{lemma}{lemwinsetcorrect}\label{lem:winset_correct}
   \rabinalgname{}  computes the set of states that are immediately winning, $\rabinWinSet$.
\end{restatable}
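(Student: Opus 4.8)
The plan is to prove the set equality $\winset = \rabinWinSet$ for the value of $\winset$ reached after the \textbf{for}-loop of \rabinalgname{} (lines~\ref{line:forloop}--\ref{line:safebuchi}), since the lemma asserts exactly that this intermediate variable holds the immediately winning states. I would argue the equality by chaining equivalences, so that no separate inclusion-by-inclusion case analysis is needed.

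First I would record what the loop computes. After $i$ ranges from $1$ to $n$, line~\ref{line:safebuchi} accumulates $\winset = \bigcup_{i=1}^{n} \winregion{\genericlocal}{\csafeobj{\rabincobuchi_i'} \wedge \cbuchiobj{\rabinbuchi_i'}}$, where $\rabincobuchi_i' = \rabincobuchi_i \cup \winningfrontier$ and $\rabinbuchi_i' = \rabinbuchi_i \cup \winningfrontier$ are the co-B\"uchi and B\"uchi sets of the $i$-th pair of the localized objective $\locRabinObj(\winningfrontier)$. To invoke \cref{thm:rabin_pair_poly} on each of these safe-B\"uchi objectives, I first check its precondition $\rabinbuchi_i' \subseteq \rabincobuchi_i'$: since the input Rabin objective satisfies $\rabinbuchi_i \subseteq \rabincobuchi_i$ by definition, taking the union with $\winningfrontier$ on both sides preserves the inclusion, so $\rabinbuchi_i' = \rabinbuchi_i \cup \winningfrontier \subseteq \rabincobuchi_i \cup \winningfrontier = \rabincobuchi_i'$. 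Hence, by \cref{thm:rabin_pair_poly}, each term $\winregion{\genericlocal}{\csafeobj{\rabincobuchi_i'} \wedge \cbuchiobj{\rabinbuchi_i'}}$ is exactly the set of states satisfying the $i$-th localized safe-B\"uchi objective in $\genericlocal = \localbomdp{\memdp}{J}$.

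The core step then matches this to the definition of immediately winning. By definition, the $i$-th pair of $\locRabinObj(\winningfrontier)$ is an immediately winning Rabin pair for a state $s$ precisely when $s \models^{\genericlocal} \csafeobj{\rabincobuchi_i'} \wedge \cbuchiobj{\rabinbuchi_i'}$, and $s \in \rabinWinSet$ iff some pair is immediately winning for $s$. Chaining these equivalences yields
\[
s \in \rabinWinSet \iff \exists i\colon s \models^{\genericlocal} \csafeobj{\rabincobuchi_i'} \wedge \cbuchiobj{\rabinbuchi_i'} \iff s \in \bigcup_{i=1}^{n} \winregion{\genericlocal}{\csafeobj{\rabincobuchi_i'} \wedge \cbuchiobj{\rabinbuchi_i'}} = \winset,
\]
which is the claimed equality.

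I expect the only genuine subtlety --- rather than a hard obstacle --- to be the bookkeeping around the winning frontier $\winningfrontier$. Concretely, one must read ``immediately winning'' against the \emph{localized} pairs $\tuple{\rabincobuchi_i', \rabinbuchi_i'}$ rather than the original $\tuple{\rabincobuchi_i, \rabinbuchi_i}$: adding $\winningfrontier$ to the co-B\"uchi set makes reaching a winning frontier state \emph{safe}, and adding it to the B\"uchi set makes such a state count as \emph{accepting}, so that looping forever in a winning frontier state correctly certifies the localized safe-B\"uchi objective. Once this alignment between line~\ref{line:safebuchi} and the definition of immediately winning is made explicit, the remainder is purely definitional and requires no probabilistic reasoning beyond the correctness of the safe-B\"uchi subroutine supplied by \cref{thm:rabin_pair_poly}.
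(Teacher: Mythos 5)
Your proposal is correct and follows essentially the same route as the paper's own (much terser) proof sketch: unfold the loop into the union $\bigcup_i \winregion{\genericlocal}{\csafeobj{\rabincobuchi_i'} \wedge \cbuchiobj{\rabinbuchi_i'}}$ and observe that this is definitionally $\rabinWinSet$ for the localized pairs. Your explicit check that $\rabinbuchi_i' \subseteq \rabincobuchi_i'$ is preserved under union with $\winningfrontier$, and your remark that ``immediately winning'' must be read against the localized pairs, are useful details the paper leaves implicit.
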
\noindent
\cref{lem:common_pair_reachable,lem:winset_correct} together prove~\cref{thm:rabin_alg_correct}. Finally, we remark:
\begin{restatable}[]{lemma}{lemlocalrabinpoly}
\label{lem:local_rabin_poly}
\rabinalgname{} (\Cref{alg:local_rabin}) is a polynomial time algorithm.
\end{restatable}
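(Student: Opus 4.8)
The plan is to establish \cref{lem:local_rabin_poly} by a direct accounting of the operations performed in \rabinalgname{} (\cref{alg:local_rabin}), arguing that each is polynomial in the size of the $J$-local MEMDP $\genericlocal = \localbomdp{\memdp}{J}$ and that only polynomially many such operations occur. The key observation is that the size of $\genericlocal$ is itself polynomial in the input: its state space is $S \sqcup F$ with $F = S \times A \times S$, so $|S \sqcup F| = |S| + |S|^2|A|$, and the number of environments is $|J| \le |I|$. Crucially, the algorithm operates on the $J$-local MEMDP \emph{directly}, not on the exponentially large BOMDP, so we never pay the $2^{|I|}$ blowup here; that blowup is deferred to the recursive outer computation of \cref{sec:recursive_computation}.

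First I would bound the number of loop iterations: the \textbf{for}-loop on line~\ref{line:forloop} runs once per Rabin pair, i.e.\ $n = k$ times, which is linear in the size of the objective $\rabinObj$. Inside each iteration, forming $\rabincobuchi_i' = \rabincobuchi_i \cup \winningfrontier$ and $\rabinbuchi_i' = \rabinbuchi_i \cup \winningfrontier$ is a linear-time set union over a state space of size $|S \sqcup F|$. Next, I would appeal to \cref{thm:rabin_pair_poly}, which guarantees that computing the safe-\Bu winning region $\winregion{\genericlocal}{\csafeobj{\rabincobuchi_i'} \wedge \cbuchiobj{\rabinbuchi_i'}}$ takes polynomial time in the size of the (here $J$-local, viewed as a BOMDP-style) MEMDP. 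Since the state space is polynomial and $|J|$ is polynomial, each such call costs polynomial time, and accumulating the results into $\winset$ by union is again polynomial. Summing over the $k$ iterations keeps the total polynomial.

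Then I would handle the final return on line~\ref{line:return}: the call $\winregion{\genericlocal}{\creachobj{\winset}}$ invokes the reachability procedure \algreach{} of \cref{alg:reachability}, which by \cref{lem:reachability_complexity} runs in polynomial time in the size of the BOMDP it is given. Applied here to $\genericlocal$, whose size is polynomial, this is one polynomial-time call. Composing the three phases — the loop body repeated $k$ times and the single reachability computation — yields a polynomial overall running time, establishing the claim.

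The main subtlety, and the one point I would flag explicitly, is the choice of complexity parameter: the lemma asserts polynomiality in the size of the $J$-local MEMDP, and the honest statement is that \algreach{} and the safe-\Bu routine are polynomial in the size of the \emph{object they receive}, which is $\genericlocal$ with its quadratically-many frontier states and its $|J|$ environments, rather than the full BOMDP $\bomdp$ with its $2^{|I|}$ beliefs. I would make sure the bound is stated relative to $|S \sqcup F|$ and $|J|$ so that it composes correctly with the recursive framework of \cref{sec:recursive_computation}, where invoking \rabinalgname{} polynomially many times along a recursion of polynomial stack depth is what ultimately delivers the $\tPSPACE$ bound. No genuinely hard step is expected; the work is entirely in correctly invoking \cref{thm:rabin_pair_poly} and \cref{lem:reachability_complexity} and confirming that the surrounding bookkeeping (set unions, loop control) stays polynomial.
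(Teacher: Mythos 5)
Your proposal is correct and follows essentially the same route as the paper's proof: bound the loop by the number of Rabin pairs, invoke \cref{thm:rabin_pair_poly} for each safe-\Bu{} computation, and invoke \cref{lem:reachability_complexity} for the final reachability call. The subtlety you flag — that the polynomial bound is relative to the $J$-local MEMDP and not the exponentially large BOMDP — is exactly the point the paper handles with its remark that the BOMDP of a $J$-local MEMDP is identical to it up to renaming of states, so no gap remains.
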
\noindent

\subsection{Recursive Computation of Winning Regions}\label{sec:recursive_computation}
We now detail how to combine the local computations of winning regions towards a global winning region.
Furthermore, we show that to obtain the winning region at the root (\ie, $I$-local), we can forget about the winning regions below and, consequently, present a recursive approach (akin to \cite{DBLP:conf/tacas/VegtJJ23}) to decide almost-sure Rabin objectives for MEMDPs in $\tPSPACE$.

\begin{restatable}[]{theorem}{thmrabinpspace}\label{thm:rabin_pspace}
Winning almost-sure Rabin objectives in MEMDPs is decidable in $\tPSPACE$.
\end{restatable}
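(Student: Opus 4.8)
The plan is to combine the $J$-local machinery from \Cref{sec:local_rabin_alg} with the monotonic information gain (\Cref{lemma:belief-stabalizing}) to obtain a recursive procedure whose recursion depth is bounded by $|I|$ and whose stack frames each use only polynomial space. First I would observe that by \Cref{thm:beliefsufficient} and \Cref{thm:obj_correspondence}, deciding the MEMDP is equivalent to deciding whether the initial belief $\tuple{\initdist, I}$ is winning in the BOMDP, and that the BOMDP decomposes into $J$-local fragments indexed by the beliefs $J \subseteq I$. Because the belief update is monotone, any transition leaving the $I$-local fragment lands in a frontier state $\tuple{s,a,s'}$ corresponding to a \emph{strictly smaller} belief $J' \subsetneq I$; crucially, once the belief shrinks it can never return to $I$, so the frontier states of the $I$-local MEMDP are genuinely ``one level down'' in a lattice of beliefs ordered by inclusion.

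The core of the recursion is the claim that each frontier state can be \emph{summarized} as winning or losing independently of the fragment above it, so that \Cref{alg:local_rabin} only needs a Boolean label per frontier. I would define the recursive procedure $\genericalgoname$ that, on input a belief $J$ (and the implied $J$-local MEMDP $\localbomdp{\restrictenv{\memdp}{J}}{J}$), first recursively computes, for every frontier state $\tuple{s,a,s'}$ with associated successor belief $J' \subsetneq J$, whether the belief $\tuple{s',J'}$ is winning by calling $\genericalgoname$ on $J'$; it collects the winning frontier states into $\winningfrontier$, and then runs \Cref{alg:local_rabin} (which by \Cref{thm:rabin_alg_correct} is correct and by \Cref{lem:local_rabin_poly} runs in polynomial time) on the $J$-local MEMDP with winning frontier $\winningfrontier$ to obtain the local winning region. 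The correctness of this summarization rests on the fact that a frontier state is a sink in the $J$-local MEMDP and, by monotonicity, its winning status depends only on the sub-MEMDP reachable from $\tuple{s',J'}$, which is exactly what the recursive call on $J'$ decides; I would prove this by a downward induction on $|J'|$, using \Cref{lem:common_pair_reachable} to relate winning in a local fragment to almost-sure reachability of the (locally) immediately winning states together with the winning frontiers.

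For the space bound, the key accounting is that the recursion only ever descends to strictly smaller beliefs, so the recursion depth is at most $|I|$; along a single root-to-leaf branch the stack holds at most $|I|$ frames, and each frame stores a belief $J$ together with the intermediate data of one run of \Cref{alg:local_rabin} on a $J$-local MEMDP whose state space is polynomial in $|S|$ and $|A|$ (the frontier set is $S \times A \times S$), hence polynomial in the input size. Since $\tPSPACE = \tTypeface{NPSPACE}$ and polynomially many polynomial-size frames give a polynomial total, the whole procedure runs in polynomial space, establishing membership in $\tPSPACE$. Together with the matching $\tPSPACE$-hardness already noted for almost-sure safety (and hence for the more general Rabin objectives), this yields $\tPSPACE$-completeness.

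I expect the main obstacle to be justifying the frontier summarization rigorously: one must show that replacing an entire lower fragment by a single winning/losing label neither over- nor under-counts winning strategies. The subtlety is that a winning strategy in the full BOMDP may use different continuations in different environments after the belief splits, and one must argue that the localized objective $\locRabinObj(\winningfrontier)$ faithfully captures ``win locally, or escape to a provably winning frontier.'' Handling the interaction between BSCCs that are common to all environments in $J$ and those that only appear after the belief shrinks—so that a path staying in the $J$-local part forever must win some common Rabin pair (\Cref{lem:common_rabin_pair}), while a path escaping must hit a winning frontier—is where the care is concentrated, and it is exactly the place where the monotonic information gain of MEMDPs (absent in general POMDPs) is indispensable.
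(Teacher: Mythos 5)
Your proposal is correct and follows essentially the same route as the paper: the same recursive descent over the belief lattice with per-belief runs of \Cref{alg:local_rabin}, the same frontier-summarization claim (which the paper isolates as \Cref{thm:local_winning_means_global} and proves by a path-measure decomposition), and the same space accounting via recursion depth at most $|I|$ with polynomial-size frames. The obstacle you flag — that a single winning/losing label per frontier state suffices despite strategies behaving differently after the belief splits — is exactly the content of the paper's key theorem, so your assessment of where the difficulty lies is accurate.
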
\noindent
In the remainder, we show this by providing a recursive algorithm and proving its correctness. 
An important construction is to project the winning region into a particular set of beliefs.
\begin{definition}[Belief-restricted winning regions]\label{def:restricted_winning_regions}
For a Rabin objective $\rabinObj$, we define the following restrictions of the winning region: (1)~$\winregion{\memdp}{\rabinObj}_J \defeq \winregion{\memdp}{\rabinObj} \cap (S \times \{J\})$, (2)~$\winregion{\memdp}{\rabinObj}_{\subset J} \defeq \bigcup_{J' \subset J} \winregion{\memdp}{\rabinObj}_{J'}$, and (3)~$\winregion{\memdp}{\rabinObj}_{\subseteq J} \defeq \winregion{\memdp}{\rabinObj}_{\subset J} \cup \winregion{\memdp}{\rabinObj}_J$.
\end{definition}
We now define the localized Rabin objective where we determine the winning frontiers based on the actual winning states in a BOMDP. We use the following auxiliary notation: We define the reachable frontier $\mathit{RF}\defeq\reachable{S} \cap F$. Then, we can determine where a local transition $s\xrightarrow{a} s'$ leads in the global system, $ \mathit{ToGlob}_J(\tuple{s, a, s'}) \defeq \beliefupd(\tuple{s, J}, a, s')$ and finally consider $\mathit{WinLocal}_J(F, B) \defeq \{ f \in F \mid \mathit{ToGlob}_J(f) \in B \}$.\sj{wip}
\begin{definition}[Correct localized Rabin objective]
For belief $J$, the \emph{correct localized Rabin objective} is $\clocRabinObj J \defeq \locRabinObj(\mathit{WinLocal}_J(\mathit{RF}, \winregion{\memdp}{\rabinObj}_{\subset J}))$.
\end{definition}
The notion of correctness in the definition above is justified by the following theorem, which says that computing the correct localized Rabin objective provides the belief-restricted winning region. 
That is, the winning region of the $J$-local MEMDP $\localbomdp{\memdp}{J}$ with its correct localized Rabin objective is equal to the global winning region restricted to $J$.
\begin{restatable}[]%
{theorem}{thmlocalwinningmeansglobal}\label{thm:local_winning_means_global}%
For Rabin objective $\rabinObj$:
$(\winregion{\localbomdp{\memdp}{J}}{\clocRabinObj J} \cap S) \times \{J\} = \winregion{\memdp}{\rabinObj}_{J}$.
\end{restatable}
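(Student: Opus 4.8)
The plan is to prove the set equality by establishing both inclusions between the winning region of the $J$-local MEMDP $\localbomdp{\memdp}{J}$ (under its correct localized Rabin objective $\clocRabinObj J$) and the global winning region $\winregion{\memdp}{\rabinObj}_J$. The central intuition is that the correct localized objective is designed precisely so that a frontier state $f = \tuple{s,a,s'}$ is declared winning exactly when the global BOMDP belief $\beliefupd(\tuple{s,J},a,s')$ it corresponds to is a genuinely winning belief with a strictly smaller support $J' \subset J$. Thus a local strategy that wins $\clocRabinObj J$ either stays inside the $J$-support region and wins some Rabin pair there, or escapes to a winning frontier, which by monotonicity of the belief update (\Cref{lemma:belief-stabalizing}) corresponds to a permanent transition into a strictly-smaller-belief fragment of the BOMDP from which the agent is already known to win.

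First I would set up the correspondence between paths in $\localbomdp{\memdp}{J}$ and paths in the BOMDP $\bomdp$ that stay within the $J$-support states until (possibly) crossing into a strictly smaller belief. Concretely, a path in the $J$-local MEMDP that never hits a frontier state maps bijectively to a BOMDP path whose belief component is constantly $J$; and a path that reaches frontier state $f$ at step $n$ corresponds to a BOMDP path whose belief stays $J$ for the first $n$ steps and then updates to $\mathit{ToGlob}_J(f) = \beliefupd(\tuple{s,J},a,s')$, which has strictly smaller support by construction of $F$ (the frontier is only entered when the belief genuinely updates). This is where I would invoke \Cref{thm:obj_correspondence} and \Cref{lem:strat_correspondence} to pass freely between belief-based strategies on $\memdp$ and memoryless strategies on $\bomdp$.

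For the inclusion $\winregion{\localbomdp{\memdp}{J}}{\clocRabinObj J} \cap S \subseteq \projstate{\winregion{\memdp}{\rabinObj}_J}$, I would take a winning strategy $\strat$ for the localized objective and lift it to a strategy on the BOMDP: within the $J$-support region it mimics $\strat$, and upon entering a frontier $f$ with $\mathit{ToGlob}_J(f) \in \winregion{\memdp}{\rabinObj}_{\subset J}$, it switches to a winning strategy for that strictly-smaller belief (which exists by definition of the winning frontier used in $\clocRabinObj J$). By \Cref{lem:common_pair_reachable}, winning $\clocRabinObj J$ locally means almost-surely reaching $\rabinWinSet$; a path that reaches a winning frontier then wins globally by the recursively-supplied smaller-belief strategy, while a path that stays in the $J$-support and satisfies a local Rabin pair $\rabinPair i$ also satisfies the global pair $\rabinPair i$ because the frontier states added to both $\rabincobuchi_i$ and $\rabinbuchi_i$ are never visited infinitely often on such a path (they are sink states). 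For the reverse inclusion, I would take a global winning belief-based strategy, restrict attention to its behavior while the belief equals $J$, and argue that this induces a local strategy winning $\clocRabinObj J$: any BSCC in the induced local MC either lies in the $J$-support — where by \Cref{lem:common_rabin_pair} it is immediately winning for some Rabin pair — or almost-surely exits to a frontier, and the global winning guarantee forces that exit frontier to land in a globally winning smaller belief, i.e.\ into the winning frontier of $\clocRabinObj J$.

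The main obstacle I anticipate is the careful handling of the frontier states in the Rabin acceptance condition, ensuring that adding $\winningfrontier$ to both the $\rabincobuchi_i$ and $\rabinbuchi_i$ components of every pair faithfully encodes ``reaching a winning frontier is a win'' without spuriously satisfying or violating the $\infset$-based Rabin condition on infinite paths. Since frontier states are absorbing sinks, any infinite path either is eventually trapped in a single frontier state $f$ (so $\infset = \{f\}$, which satisfies the pair iff $f \in \winningfrontier$, exactly as intended) or never reaches a frontier and behaves purely within $S$. Verifying that these two cases partition all infinite behavior, and that the $\infset \subseteq \rabincobuchi_i \wedge \infset \cap \rabinbuchi_i \neq \emptyset$ bookkeeping works out in both, is the delicate accounting step; the rest follows from the path correspondence and the monotonic information gain that guarantees the smaller-belief fragments are entered irrevocably.
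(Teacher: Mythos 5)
Your proposal is correct and follows essentially the same route as the paper's proof: both directions are handled by the same path correspondence (local paths that stay in $S$ versus those that hit the frontier) and by composing the local strategy with recursively obtained winning strategies at the winning frontier states. The paper carries out this decomposition via explicit integrals over the path measure, whereas you argue the same partition qualitatively, but the underlying argument — including the observation that absorbing frontier states make the augmented Rabin pairs behave correctly on both classes of paths — is the one the authors use.
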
\noindent
The theorem immediately leads to the following characterization of the winning region.
\begin{corollary}\label{cor:memdp_win_region_characterisation}
For Rabin objective $\rabinObj$:
$\winregion{\memdp}{\rabinObj} = \bigcup_{J} (\winregion{\localbomdp{\memdp}{J}}{\clocRabinObj J} \cap S) \times \{J\} $.
\end{corollary}\noindent
\begin{algorithm}[t]
\begin{algorithmic}[1]
  \Function{\genericalgoname}{MEMDP $\memdptuple, \rabinObj$ }
    \State $\genericlocal \assign \localbomdp{\memdp}{I}$
    \State $\mathit{RF} \assign \reachable{S^{\genericlocal}} \cap F^\genericlocal$ \Comment{Compute the reachable frontier states}
    \State $\winningfrontier \assign \{ \tuple{s, a, s'} \in \mathit{RF} \mid \tuple{s', J'}=\beliefupd(\tuple{s, J}, a, s') \wedge \genericalgoname(\restrictenv{\newinit{\memdp}{{s'}}}{J'}, \rabinObj) \}$\label{line:wf}
    \State \Return{$\initdist \in \rabinalgname{}(\genericlocal, \winningfrontier, \rabinObj)$} \Comment{Compute winning set with winning frontier}\label{line:returngeneric}
  \EndFunction
\end{algorithmic}
\caption{Generic recursive algorithm for MEMDPs.}%
\label{alg:genericalgorithm}%
\end{algorithm}%
\Cref{cor:memdp_win_region_characterisation} suggests computing the winning region from local MEMDPs. 
The computation can go bottom-up, as the winning region of MEMDP restricted to a belief $J$ only depends on the $J$-local MEMDP $\localbomdp\memdp J$ and the winning regions of beliefs $J' \subset J$.
These observations lead us to \Cref{alg:genericalgorithm}.
We construct the $J$-local MEMDP, recursively determine the winning status of all its frontier states, and
then compute the local winning region of $\localbomdp\memdp J$.
\begin{restatable}[]{theorem}{thmcheckcorrect}
\label{thm:check_correct}
In \Cref{alg:genericalgorithm} with Rabin objective $\rabinObj$:  $\genericalgoname(\memdp, \rabinObj) \text{ iff } \initdist \in \winregion{\memdp}{\rabinObj}.$
\end{restatable}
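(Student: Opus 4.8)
The plan is to prove the theorem by induction on the number of environments $|I|$. Monotonic information gain (\cref{lemma:belief-stabalizing}) guarantees well-foundedness: every recursive call on \cref{line:wf} is issued for a belief $J' \subset I$ that is a \emph{strict} subset of $I$, hence on a MEMDP $\restrictenv{\newinit{\memdp}{{s'}}}{J'}$ with strictly fewer environments. The invariant I want to maintain is that $\genericalgoname(\memdp, \rabinObj)$ returns true exactly when the initial belief $\tuple{\initdist, I}$ lies in $\winregion{\memdp}{\rabinObj}$; since $\memdp \models \rabinObj$ is defined as $\tuple{\initdist, I}$ being winning, this is precisely the claim. The base case $|I| = 1$ and the inductive step share the same concluding argument and differ only in how the winning frontier is justified; I treat them uniformly, noting that for $|I| = 1$ no belief update can leave the singleton, so no frontier state is reachable and $\winningfrontier = \emptyset$ vacuously.

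The first genuine step of the inductive argument is to show that the set $\winningfrontier$ built on \cref{line:wf} equals the winning frontier $\mathit{WinLocal}_I(\mathit{RF}, \winregion{\memdp}{\rabinObj}_{\subset I})$ appearing in the correct localized Rabin objective $\clocRabinObj I$. Fix a reachable frontier state $\tuple{s, a, s'} \in \mathit{RF}$ and let $\tuple{s', J'} = \beliefupd(\tuple{s, I}, a, s')$, so that $J' \subset I$ because reaching a frontier state witnesses a revealing transition. By the induction hypothesis, the recursive call $\genericalgoname(\restrictenv{\newinit{\memdp}{{s'}}}{J'}, \rabinObj)$ returns true iff $\restrictenv{\newinit{\memdp}{{s'}}}{J'} \models \rabinObj$, i.e.\ iff the initial belief $\tuple{s', J'}$ of that restricted MEMDP is winning. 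Because a strategy wins the belief $\tuple{s', J'}$ exactly when it wins in every environment $j \in J'$ started from $s'$, this coincides with $\tuple{s', J'} \models^{\memdp} \rabinObj$, that is, $\tuple{s', J'} \in \winregion{\memdp}{\rabinObj}_{J'} \subseteq \winregion{\memdp}{\rabinObj}_{\subset I}$. Therefore $\winningfrontier = \{ f \in \mathit{RF} \mid \mathit{ToGlob}_I(f) \in \winregion{\memdp}{\rabinObj}_{\subset I} \} = \mathit{WinLocal}_I(\mathit{RF}, \winregion{\memdp}{\rabinObj}_{\subset I})$, and hence $\locRabinObj(\winningfrontier) = \clocRabinObj I$.

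With this identification, the conclusion follows by chaining the two results established earlier. By \cref{thm:rabin_alg_correct}, the call $\rabinalgname{}(\genericlocal, \winningfrontier, \rabinObj)$ returns the winning region $\winregion{\localbomdp{\memdp}{I}}{\clocRabinObj I}$, so \cref{line:returngeneric} evaluates to true iff $\initdist \in \winregion{\localbomdp{\memdp}{I}}{\clocRabinObj I}$. Applying \cref{thm:local_winning_means_global} with $J = I$ yields $(\winregion{\localbomdp{\memdp}{I}}{\clocRabinObj I} \cap S) \times \{I\} = \winregion{\memdp}{\rabinObj}_I$, whence $\initdist \in \winregion{\localbomdp{\memdp}{I}}{\clocRabinObj I}$ iff $\tuple{\initdist, I} \in \winregion{\memdp}{\rabinObj}$. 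This is exactly the invariant, which closes the induction.

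I expect the main obstacle to be the correspondence invoked in the inductive step between the recursive call on $\restrictenv{\newinit{\memdp}{{s'}}}{J'}$ and the winning status of the global belief $\tuple{s', J'}$. Making this rigorous requires showing that winning a belief (a notion defined over paths of the whole MEMDP $\memdp$) and winning the restricted MEMDP with initial state $s'$ and environment set $J'$ are the same, essentially that the frontier construction faithfully cuts the dynamics so that the future from $\tuple{s', J'}$ depends only on $(s', J')$ and not on how that belief was reached. Everything else is bookkeeping that uses \cref{thm:local_winning_means_global} and \cref{thm:rabin_alg_correct} as black boxes, together with the simple observation that the recursion bottoms out once the belief becomes a singleton.
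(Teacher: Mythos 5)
Your proposal is correct and follows essentially the same route as the paper's proof: a well-founded induction on the strictly shrinking belief, using the induction hypothesis to show that the recursively computed frontier equals $\mathit{WinLocal}_I(\mathit{RF}, \winregion{\memdp}{\rabinObj}_{\subset I})$, and then concluding via \Cref{thm:rabin_alg_correct} and \Cref{thm:local_winning_means_global}. Your write-up is in fact more explicit than the paper's about the induction structure and about the definitional correspondence between winning the belief $\tuple{s',J'}$ and winning the restricted MEMDP $\restrictenv{\newinit{\memdp}{{s'}}}{J'}$, which the paper leaves implicit.
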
%
\begin{restatable}[]{lemma}{thmcheckspace}\label{thm:check_space}
\Cref{alg:genericalgorithm}  runs in polynomial space.
\end{restatable}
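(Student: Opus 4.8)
The plan is to bound two quantities separately: the space used by a single invocation of \genericalgoname (excluding the space consumed inside recursive calls), and the maximum depth of the recursion. The key structural observation is that the recursive calls for the distinct frontier states on line~\ref{line:wf} are performed sequentially, and only a single bit (winning or losing) needs to be retained per frontier state while the set $\winningfrontier$ is assembled. Consequently, at most one recursive branch is ever active on the call stack at a time, the space of a returned call being reclaimed before the next is issued. The total space is therefore bounded by the per-frame space multiplied by the recursion depth, so it suffices to show that each of these is polynomial and that their product is polynomial in $|\memdp|$.

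First I would argue that each frame uses only polynomial space. The $J$-local MEMDP $\genericlocal = \localbomdp{\memdp}{J}$ has state space $S \sqcup F$ with $F = S \times A \times S$, hence size polynomial in $|\memdp|$, and at most $|I|$ transition functions, so constructing and storing it is polynomial. Computing the reachable frontier $\mathit{RF}$ is a graph-reachability computation of polynomial cost, and $\winningfrontier \subseteq \mathit{RF}$ is of polynomial size. The final call to \rabinalgname{} runs in polynomial time by \cref{lem:local_rabin_poly}, hence in polynomial space. Moreover, the restricted MEMDP $\restrictenv{\newinit{\memdp}{s'}}{J'}$ passed to each recursive call is no larger than the original $\memdp$, since it only discards transition functions and relabels the initial state; thus the input to every recursive call remains polynomially sized, and no blow-up accumulates down the stack.

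The crux is bounding the recursion depth, for which I would invoke the monotonic information gain. Every recursive call on line~\ref{line:wf} is issued for a frontier state $\tuple{s,a,s'} \in \mathit{RF}$ with belief $J' = \projbelief{\beliefupd(\tuple{s, J}, a, s')}$. By construction of the $J$-local MEMDP, a transition reaches a frontier state precisely when $\beliefupd(\tuple{s, J}, a, s') \neq \tuple{s', J}$, which together with \cref{lemma:belief-stabalizing} forces $J' \subsetneq J$. Hence each level of recursion strictly shrinks the belief, and starting from $|I|$ environments the recursion bottoms out after at most $|I|$ levels: at a singleton belief no revealing transition exists, so every positive-probability transition keeps the belief fixed and $\mathit{RF}$ contains no state triggering a further recursive call. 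The recursion depth is therefore at most $|I|$.

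Combining the two bounds, the call stack holds at most $|I|$ frames, each of size polynomial in $|\memdp|$, so the overall space usage is $|I| \cdot \mathrm{poly}(|\memdp|) = \mathrm{poly}(|\memdp|)$, establishing the lemma. I expect the main obstacle to be making the depth argument fully rigorous: in particular, arguing carefully that the belief passed to each recursive call is a \emph{strict} subset of the current belief via \cref{lemma:belief-stabalizing}, and confirming that the sequential structure of the frontier-state loop genuinely permits space to be reused across recursive calls rather than accumulated, so that only the depth (and not the total number of calls, which may be exponential) governs the space bound.
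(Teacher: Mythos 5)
Your proposal is correct and follows essentially the same argument as the paper: bound the per-frame space (polynomial, since the $J$-local MEMDP and the winning frontier are polynomially sized) and multiply by the recursion depth, which is at most $|I|$ because every frontier state reached forces a strict shrinkage of the belief-support. Your additional care about sequential reuse of stack space and the non-growth of the recursive inputs is a welcome tightening of the same proof, not a different route.
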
\noindent
This lemma follows from observing that a local MEMDP and thus its frontier is polynomial and that the recursion depth is limited by $|I|$.
\cref{thm:check_correct,thm:check_space} together prove the main theorem \cref{thm:rabin_pspace}: The decision problem of almost-sure Rabin objectives in MEMDPs is in $\tPSPACE$.
Thus, almost-sure safety, \Bu, \CoB, and parity are in $\tPSPACE$ too~\cite{DBLP:conf/icalp/ChatterjeeAH05}.

\begin{restatable}[]{theorem}{thmalgbigo}\label{thm:alg_big_o}
The time complexity of \Cref{alg:genericalgorithm}  is in $O((|S|^2\cdot|A|)^{|I|} \cdot \mathit{poly}(|\memdp|, |\rabinObj|))$.
\end{restatable}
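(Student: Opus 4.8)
The plan is to express the total running time as the number of recursive invocations of $\genericalgoname$ multiplied by the work done within a single invocation, refining the depth argument behind the space bound (\cref{thm:check_space}) into a count over the entire recursion tree. First I would verify that the non-recursive work inside one call to $\genericalgoname(\memdp, \rabinObj)$ is $\mathit{poly}(|\memdp|, |\rabinObj|)$: the local MEMDP $\genericlocal = \localbomdp{\memdp}{I}$ has state space $S \sqcup F$ with frontier $F = S \times A \times S$, hence polynomial size; computing $\mathit{RF} = \reachable{S^{\genericlocal}} \cap F^\genericlocal$ is a graph-reachability computation; and the call to $\rabinalgname$ runs in polynomial time by \cref{lem:local_rabin_poly}. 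Assembling the winning frontier $\winningfrontier$ and performing the final membership test are likewise polynomial once the recursive answers are available.

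Next I would bound the shape of the recursion tree. Each invocation issues at most one recursive call per reachable frontier state, so its branching factor is at most $|\mathit{RF}| \le |F| = |S|^2 \cdot |A|$. The crucial point is that every recursive call $\genericalgoname(\restrictenv{\newinit{\memdp}{s'}}{J'}, \rabinObj)$ operates on a strictly smaller belief, $J' \subset J$: frontier states are reached exactly when the belief update is nontrivial, and by monotonic information gain (\Cref{lemma:belief-stabalizing}) a nontrivial update strictly shrinks the belief. Since the restriction leaves $S$ and $A$ unchanged, the branching bound $|S|^2 \cdot |A|$ is identical at every level while the number of environments drops by at least one per level; hence the recursion depth is at most $|I|$, and a single environment admits no belief change and so terminates the recursion.

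Finally I would combine these into a recurrence. Writing $T(k)$ for the running time on a MEMDP with $k$ environments, the two bounds above give $T(k) \le \mathit{poly}(|\memdp|, |\rabinObj|) + (|S|^2 \cdot |A|) \cdot T(k-1)$, and unrolling over the at-most-$|I|$ levels yields $T(|I|) = O((|S|^2 \cdot |A|)^{|I|} \cdot \mathit{poly}(|\memdp|, |\rabinObj|))$, as claimed. The only step that is not a routine complexity count is the strict-decrease argument that caps the depth at $|I|$, which is precisely where monotonic information gain is indispensable, exactly as in the space bound; I expect no further obstacle. I would note that this is a worst-case tree count that ignores the fact that identical subproblems $\restrictenv{\newinit{\memdp}{s'}}{J'}$ can recur on different branches, so memoization could improve the constants without being needed for the stated bound.
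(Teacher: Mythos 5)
Your proposal is correct and matches the paper's own argument: both bound the branching factor by the number of reachable frontier states $|S|^2\cdot|A|$, cap the recursion depth at $|I|$ via the strictly shrinking belief, and multiply by the polynomial per-call cost from \cref{lem:local_rabin_poly}. The recurrence formulation is just a slightly more explicit packaging of the same count.
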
\noindent
The bound in \Cref{thm:alg_big_o} is conservative\footnote{A more precise bound can likely be obtained from the number of revealing transitions in the MEMDP.}, and it shows that deciding almost-sure Rabin objectives for $2$-MEMDPs is in $\tP$.
Almost-sure parity objectives for $2$-MEMDPs were already known to be in $\tP$~\cite{DBLP:conf/fsttcs/RaskinS14}. Indeed, it establishes the complexity for any fixed number of constants\footnote{That is, the decidability problem is in $\tXP$ with parameter \emph{number of environments} $k$.}.

\begin{restatable}[]{corollary}{cortwomemdps}\label{cor:2memdps}
For constant $k$, deciding almost-sure Rabin for $k$-MEMDPs is in $\tP$.
\end{restatable}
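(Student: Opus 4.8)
The plan is to obtain this corollary directly from the running-time analysis of \Cref{alg:genericalgorithm}, i.e., from \Cref{thm:alg_big_o}, together with its correctness established in \Cref{thm:check_correct}. First I would recall that, by \Cref{thm:check_correct}, the call $\genericalgoname(\memdp, \rabinObj)$ decides whether $\initdist \in \winregion{\memdp}{\rabinObj}$, and hence whether $\memdp$ is winning for the almost-sure Rabin objective $\rabinObj$. It therefore suffices to argue that, once the number of environments is fixed to a constant, this same procedure runs in time polynomial in the size of the input.

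The key step is a substitution into the bound of \Cref{thm:alg_big_o}, namely $O\bigl((|S|^2\cdot|A|)^{|I|}\cdot \mathit{poly}(|\memdp|, |\rabinObj|)\bigr)$. For a $k$-MEMDP we have $|I| = k$, so fixing $k$ fixes the exponent $|I|$ to a constant. Consequently the factor $(|S|^2\cdot|A|)^{|I|}$ is a polynomial in $|S|$ and $|A|$ of constant degree at most $3k$; since both $|S|$ and $|A|$ are bounded by $|\memdp|$, this factor is itself $\mathit{poly}(|\memdp|)$. Multiplying by the remaining $\mathit{poly}(|\memdp|, |\rabinObj|)$ factor preserves polynomiality, so the whole algorithm runs in time polynomial in $|\memdp|$ and $|\rabinObj|$, placing the decision problem in $\tP$.

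I do not expect any genuine obstacle here: all the difficulty has already been discharged in \Cref{thm:alg_big_o} (the exponential-in-$|I|$ accounting of the recursion over beliefs) and in the correctness argument of \Cref{thm:check_correct}. The only point deserving a line of care is that the exponent of the dominant factor is exactly the number of environments $|I|$, rather than a quantity that also scales with $|S|$ or $|A|$; this is precisely what makes fixing $k$ collapse the exponential term into a polynomial one. For $k = 2$ this recovers, and strictly extends to full Rabin objectives, the known polynomial-time result for parity on $2$-MEMDPs of~\cite{DBLP:conf/fsttcs/RaskinS14}.
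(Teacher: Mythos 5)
Your proposal is correct and follows essentially the same route as the paper, which derives the corollary immediately from \Cref{thm:alg_big_o} by substituting $|I|$ with $k$; your additional remarks on correctness via \Cref{thm:check_correct} and on the constant-degree polynomial factor merely spell out what the paper leaves implicit.
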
\noindent

\section{Conclusion}
We have presented a $\tPSPACE$ algorithm for almost-sure Rabin objectives in MEMDPs.
This result establishes $\tPSPACE$-completeness for many other almost-sure objectives, including parity, and completes the complexity landscape for MEMDPs.
We additionally showed that all objectives under the possible semantics we consider in MEMDPs belong to the same complexity classes as MDPs.
Interesting directions for future work are to investigate whether the constructions used in this paper can also be of benefit for quantitative objectives in MEMDPs or more expressive subclasses of POMDPs, for example, a form of MEMDPs where the environments may change over time.

\newpage
\bibliography{refs}

\begin{thebibliography}{10}

\bibitem{DBLP:conf/aaai/AlshiekhBEKNT18}
Mohammed Alshiekh, Roderick Bloem, R{\"{u}}diger Ehlers, Bettina
  K{\"{o}}nighofer, Scott Niekum, and Ufuk Topcu.
\newblock Safe reinforcement learning via shielding.
\newblock In {\em {AAAI}}, pages 2669--2678. {AAAI} Press, 2018.

\bibitem{DBLP:conf/cav/AndriushchenkoC21}
Roman Andriushchenko, Milan Ceska, Sebastian Junges, Joost{-}Pieter Katoen, and
  Simon Stupinsk{\'{y}}.
\newblock {PAYNT:} {A} tool for inductive synthesis of probabilistic programs.
\newblock In {\em {CAV} {(1)}}, volume 12759 of {\em {LNCS}}, pages 856--869.
  Springer, 2021.

\bibitem{DBLP:conf/qest/ArmingBCKS18}
Sebastian Arming, Ezio Bartocci, Krishnendu Chatterjee, Joost{-}Pieter Katoen,
  and Ana Sokolova.
\newblock Parameter-independent strategies for p{MDP}s via {PO{MDP}}s.
\newblock In {\em {QEST}}, volume 11024 of {\em {LNCS}}, pages 53--70.
  Springer, 2018.

\bibitem{DBLP:conf/fossacs/BaierBG08}
Christel Baier, Nathalie Bertrand, and Marcus Gr{\"{o}}{\ss}er.
\newblock On decision problems for probabilistic b{\"{u}}chi automata.
\newblock In {\em FoSSaCS}, volume 4962 of {\em {LNCS}}, pages 287--301.
  Springer, 2008.

\bibitem{DBLP:journals/jacm/BaierGB12}
Christel Baier, Marcus Gr{\"{o}}{\ss}er, and Nathalie Bertrand.
\newblock Probabilistic {\(\omega\)}-automata.
\newblock {\em J. {ACM}}, 59(1):1:1--1:52, 2012.

\bibitem{Baier2008}
Christel Baier and Joost-Pieter Katoen.
\newblock {\em {Principles of model checking}}.
\newblock MIT Press, 2008.

\bibitem{DBLP:conf/fsttcs/BerwangerD08}
Dietmar Berwanger and Laurent Doyen.
\newblock On the power of imperfect information.
\newblock In {\em {FSTTCS}}, volume~2 of {\em LIPIcs}, pages 73--82. Schloss
  Dagstuhl - Leibniz-Zentrum f{\"{u}}r Informatik, 2008.

\bibitem{DBLP:conf/aaai/ChadesCMNSB12}
Iadine Chades, Josie Carwardine, Tara~G. Martin, Samuel Nicol, R{\'{e}}gis
  Sabbadin, and Olivier Buffet.
\newblock Mo{MDP}s: {A} solution for modelling adaptive management problems.
\newblock In {\em {AAAI}}. {AAAI} Press, 2012.

\bibitem{DBLP:journals/ai/ChatterjeeCGK16}
Krishnendu Chatterjee, Martin Chmelik, Raghav Gupta, and Ayush Kanodia.
\newblock Optimal cost almost-sure reachability in {PO{MDP}}s.
\newblock {\em Artif. Intell.}, 234:26--48, 2016.

\bibitem{DBLP:conf/aips/ChatterjeeCK0R20}
Krishnendu Chatterjee, Martin Chmel{\'{\i}}k, Deep Karkhanis, Petr
  Novotn{\'{y}}, and Am{\'{e}}lie Royer.
\newblock Multiple-environment {M}arkov decision processes: Efficient analysis
  and applications.
\newblock In {\em {ICAPS}}, pages 48--56. {AAAI} Press, 2020.

\bibitem{DBLP:journals/jcss/ChatterjeeCT16}
Krishnendu Chatterjee, Martin Chmelik, and Mathieu Tracol.
\newblock What is decidable about partially observable {M}arkov decision
  processes with {\(\omega\)}-regular objectives.
\newblock {\em J. Comput. Syst. Sci.}, 82(5):878--911, 2016.

\bibitem{DBLP:conf/icalp/ChatterjeeAH05}
Krishnendu Chatterjee, Luca de~Alfaro, and Thomas~A. Henzinger.
\newblock The complexity of stochastic rabin and streett games'.
\newblock In {\em {ICALP}}, volume 3580 of {\em {LNCS}}, pages 878--890.
  Springer, 2005.

\bibitem{DBLP:conf/mfcs/ChatterjeeDH10}
Krishnendu Chatterjee, Laurent Doyen, and Thomas~A. Henzinger.
\newblock Qualitative analysis of partially-observable {M}arkov decision
  processes.
\newblock In {\em {MFCS}}, volume 6281 of {\em {LNCS}}, pages 258--269.
  Springer, 2010.

\bibitem{DBLP:conf/soda/ChatterjeeJH04}
Krishnendu Chatterjee, Marcin Jurdzinski, and Thomas~A. Henzinger.
\newblock Quantitative stochastic parity games.
\newblock In {\em {SODA}}, pages 121--130. {SIAM}, 2004.

\bibitem{DBLP:conf/lics/ChatterjeeT12}
Krishnendu Chatterjee and Mathieu Tracol.
\newblock Decidable problems for probabilistic automata on infinite words.
\newblock In {\em {LICS}}, pages 185--194. {IEEE} Computer Society, 2012.

\bibitem{DBLP:conf/icra/ChenFHL16}
Min Chen, Emilio Frazzoli, David Hsu, and Wee~Sun Lee.
\newblock {PO{MDP}}-lite for robust robot planning under uncertainty.
\newblock In {\em {ICRA}}, pages 5427--5433. {IEEE}, 2016.

\bibitem{DBLP:journals/tac/CubuktepeJJKT22}
Murat Cubuktepe, Nils Jansen, Sebastian Junges, Joost{-}Pieter Katoen, and Ufuk
  Topcu.
\newblock Convex optimization for parameter synthesis in {MDP}s.
\newblock {\em {IEEE} Trans. Autom. Control.}, 67(12):6333--6348, 2022.

\bibitem{DBLP:phd/us/Alfaro97}
Luca de~Alfaro.
\newblock {\em Formal verification of probabilistic systems}.
\newblock PhD thesis, Stanford University, {USA}, 1997.

\bibitem{DBLP:journals/corr/Fijalkow14}
Nathana{\"{e}}l Fijalkow.
\newblock What is known about the value 1 problem for probabilistic automata?
\newblock {\em CoRR}, abs/1410.3770, 2014.

\bibitem{DBLP:journals/corr/abs-2305-10546}
Nathana{\"{e}}l Fijalkow, Nathalie Bertrand, Patricia Bouyer{-}Decitre, Romain
  Brenguier, Arnaud Carayol, John Fearnley, Hugo Gimbert, Florian Horn, Rasmus
  Ibsen{-}Jensen, Nicolas Markey, Benjamin Monmege, Petr Novotn{\'{y}}, Mickael
  Randour, Ocan Sankur, Sylvain Schmitz, Olivier Serre, and Mateusz Skomra.
\newblock Games on graphs.
\newblock {\em CoRR}, abs/2305.10546, 2023.

\bibitem{DBLP:journals/corr/FijalkowGKO15}
Nathana{\"{e}}l Fijalkow, Hugo Gimbert, Edon Kelmendi, and Youssouf Oualhadj.
\newblock Deciding the value 1 problem for probabilistic leaktight automata.
\newblock {\em Log. Methods Comput. Sci.}, 11(2), 2015.

\bibitem{DBLP:conf/lics/FijalkowGO12}
Nathana{\"{e}}l Fijalkow, Hugo Gimbert, and Youssouf Oualhadj.
\newblock Deciding the value 1 problem for probabilistic leaktight automata.
\newblock In {\em {LICS}}, pages 295--304. {IEEE} Computer Society, 2012.

\bibitem{DBLP:conf/sfm/ForejtKNP11}
Vojtech Forejt, Marta~Z. Kwiatkowska, Gethin Norman, and David Parker.
\newblock Automated verification techniques for probabilistic systems.
\newblock In {\em {SFM}}, volume 6659 of {\em {LNCS}}, pages 53--113. Springer,
  2011.

\bibitem{DBLP:journals/jmlr/GarciaF15}
Javier Garc{\'{\i}}a and Fernando Fern{\'{a}}ndez.
\newblock A comprehensive survey on safe reinforcement learning.
\newblock {\em J. Mach. Learn. Res.}, 16:1437--1480, 2015.

\bibitem{DBLP:conf/icalp/GimbertO10}
Hugo Gimbert and Youssouf Oualhadj.
\newblock Probabilistic automata on finite words: Decidable and undecidable
  problems.
\newblock In {\em {ICALP} {(2)}}, volume 6199 of {\em {LNCS}}, pages 527--538.
  Springer, 2010.

\bibitem{DBLP:journals/mor/Iyengar05}
Garud~N. Iyengar.
\newblock Robust dynamic programming.
\newblock {\em Math. Oper. Res.}, 30(2):257--280, 2005.

\bibitem{DBLP:conf/birthday/0001JK22}
Nils Jansen, Sebastian Junges, and Joost{-}Pieter Katoen.
\newblock Parameter synthesis in {M}arkov models: {A} gentle survey.
\newblock In {\em Principles of Systems Design}, volume 13660 of {\em {LNCS}},
  pages 407--437. Springer, 2022.

\bibitem{DBLP:conf/cav/JungesJS20}
Sebastian Junges, Nils Jansen, and Sanjit~A. Seshia.
\newblock Enforcing almost-sure reachability in {PO{MDP}}s.
\newblock In {\em {CAV} {(2)}}, volume 12760 of {\em {LNCS}}, pages 602--625.
  Springer, 2021.

\bibitem{DBLP:journals/ai/KaelblingLC98}
Leslie~Pack Kaelbling, Michael~L. Littman, and Anthony~R. Cassandra.
\newblock Planning and acting in partially observable stochastic domains.
\newblock {\em Artif. Intell.}, 101(1-2):99--134, 1998.

\bibitem{DBLP:conf/uai/MeuleauKKC99}
Nicolas Meuleau, Kee{-}Eung Kim, Leslie~Pack Kaelbling, and Anthony~R.
  Cassandra.
\newblock Solving {PO{MDP}}s by searching the space of finite policies.
\newblock In {\em {UAI}}, pages 417--426. Morgan Kaufmann, 1999.

\bibitem{DBLP:journals/ior/NilimG05}
Arnab Nilim and Laurent~El Ghaoui.
\newblock Robust control of {M}arkov decision processes with uncertain
  transition matrices.
\newblock {\em Oper. Res.}, 53(5):780--798, 2005.

\bibitem{DBLP:journals/mor/PapadimitriouT87}
Christos~H. Papadimitriou and John~N. Tsitsiklis.
\newblock The complexity of {M}arkov decision processes.
\newblock {\em Math. Oper. Res.}, 12(3):441--450, 1987.

\bibitem{DBLP:books/wi/Puterman94}
Martin~L. Puterman.
\newblock {\em {M}arkov Decision Processes: Discrete Stochastic Dynamic
  Programming}.
\newblock Wiley Series in Probability and Statistics. Wiley, 1994.

\bibitem{DBLP:journals/lmcs/RaskinCDH07}
Jean{-}Fran{\c{c}}ois Raskin, Krishnendu Chatterjee, Laurent Doyen, and
  Thomas~A. Henzinger.
\newblock Algorithms for omega-regular games with imperfect information.
\newblock {\em Log. Methods Comput. Sci.}, 3(3), 2007.

\bibitem{DBLP:conf/fsttcs/RaskinS14}
Jean{-}Fran{\c{c}}ois Raskin and Ocan Sankur.
\newblock Multiple-environment {M}arkov decision processes.
\newblock In {\em {FSTTCS}}, volume~29 of {\em LIPIcs}, pages 531--543. Schloss
  Dagstuhl - Leibniz-Zentrum f{\"{u}}r Informatik, 2014.

\bibitem{DBLP:conf/aaai/RigterLH21}
Marc Rigter, Bruno Lacerda, and Nick Hawes.
\newblock Minimax regret optimisation for robust planning in uncertain {M}arkov
  decision processes.
\newblock In {\em {AAAI}}, pages 11930--11938. {AAAI} Press, 2021.

\bibitem{DBLP:books/lib/SuttonB98}
Richard~S. Sutton and Andrew~G. Barto.
\newblock {\em Reinforcement learning - an introduction}.
\newblock Adaptive computation and machine learning. {MIT} Press, 1998.

\bibitem{DBLP:conf/tacas/VegtJJ23}
Marck van~der Vegt, Nils Jansen, and Sebastian Junges.
\newblock Robust almost-sure reachability in multi-environment {MDP}s.
\newblock In {\em {TACAS} {(1)}}, volume 13993 of {\em {LNCS}}, pages 508--526.
  Springer, 2023.

\bibitem{DBLP:conf/concur/WinklerJPK19}
Tobias Winkler, Sebastian Junges, Guillermo~A. P{\'{e}}rez, and Joost{-}Pieter
  Katoen.
\newblock On the complexity of reachability in parametric {M}arkov decision
  processes.
\newblock In {\em {CONCUR}}, volume 140 of {\em LIPIcs}, pages 14:1--14:17.
  Schloss Dagstuhl - Leibniz-Zentrum f{\"{u}}r Informatik, 2019.

\end{thebibliography}

\newpage
\appendix

\appendix

{\section*{Appendix}}
\noindent
This appendix contains all the proofs, auxiliary definitions, and lemmas needed.
We follow the same structure as the main paper. 
The proof of each statement in the main text can be found in its respective (sub)section of this appendix.
For convenience, we restate each statement from the main text before its proof.
\medskip

\section{Auxiliary Material of \Cref{sec:background}}\label{appx:aux:sec:background}

\subsubsection*{Objectives}
We formally define reachability, safety, \Bu, \CoB, and parity objectives.
Let $T \subseteq \stateset$ be some set of target states, $\Path \in \paths(\mc)$ a path, and $\infset(\Path) \subseteq S$ again the set of states reached infinitely often following path $\Path$. 
Finally, let $l \colon \stateset \to \mathbb{N}$ be a priority function.
We use LTL-like notation as short-hand for most of the objectives.
\begin{description}
    \item[Reachability.] By $\reachobj$ we denote the objective to eventually reach a target set $T \subseteq S$. 
    The Markov chain $\mc$ satisfies $\reachobj$ almost-surely if and only if $\mathbb{P}_{\mc}(\reachobj) = 1$ (or possibly when $> 0$).
    \item[Safety.] By $\safeobj$ we denote the objective to remain within the a set of safe states $T \subseteq S$. 
    $\mc$ satisfies $\safeobj$ almost-surely if and only if $\mathbb{P}_{\mc}(\safeobj) = 1$ (or possibly when $> 0$).
    \item[\Bu.] By $\buchiobj$ we denote \Bu objectives. $\mc$ satisfies a \Bu objective $\buchiobj$ almost-surely if and only if 
    $\mathbb{P}_{\mc}(\Path \in \paths(\mc) \mid T \cap \infset(\Path) \ne \emptyset) = 1$ (or possibly when $> 0$).
    \item[\CoB.] By $\cobuchiobj$ we denote \CoB objectives. $\mc$ satisfies a \CoB objective $\cobuchiobj$ almost-surely if and only if $\mathbb{P}_{\mc}(\Path \in \paths(\mc) \mid \infset(\Path) \subseteq T) = 1$ (or possibly when $> 0$).
    \item[Parity.] We denote parity objectives by their priority function $l$. 
    $\mc$ satisfies the parity objective $l$ almost-surely if and only if $\mathbb{P}_{\mc}(\Path \in \paths(\mc) \mid \max\{l(\infset(\Path))\} \equiv 0 \pmod2) = 1$ (or possibly when $> 0$).
\end{description}

\section{Proofs and Auxiliary Material of \cref{subsec:memdps}}

\paragraph*{Possible Objectives}

\begin{lemma}\label{lemma:memdp:possible:winning}
    Given a MEMDP $\memdp = \{M_i\}_{i\in I}$ and possible Rabin objective $\Phi$, the MEMDP is winning iff all environments $M_i$ are winning.
\end{lemma}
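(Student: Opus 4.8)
The plan is to prove the two directions separately; the implication $\Rightarrow$ is immediate, while $\Leftarrow$ is where the randomization over environment-wise strategies does the work. For $\Rightarrow$, suppose $\memdp$ is winning, i.e.\ there is a single strategy $\strat$ with $\mathbb{P}_{\memdp_i[\strat]}(\Phi) > 0$ for every $i \in I$. Then $\strat$ itself already witnesses that each individual $M_i$ is possibly winning, so all environments are winning; no construction is needed here.

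For $\Leftarrow$, I would fix for each environment $i \in I$ a strategy $\strat_i$ with $p_i \defeq \mathbb{P}_{\memdp_i[\strat_i]}(\Phi) > 0$, and build a single strategy $\strat$ that conceptually tosses a $|I|$-sided coin once at the start and thereafter follows $\strat_i$ whenever the coin showed $i$. Concretely, $\strat$ is the behavioral strategy obtained from the uniform mixture of the $\strat_i$ by the standard posterior-weighting (Kuhn) construction: for a finite path $\Path \in \pathsfin$, the distribution $\strat(\Path) \in \dist{A}$ is the average of the $\strat_i(\Path)$ weighted by the posterior probability that branch $i$ was chosen given that $\Path$ has been produced. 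The point making this legitimate for \emph{all} environments at once is that, along a fixed $\Path$, the environment's transition probabilities occur as a common factor in the likelihood of every branch and therefore cancel in the posterior weights; hence the weights, and thus $\strat$ itself, are independent of the environment. Consequently, in each environment $i$ simultaneously, $\strat$ is realization-equivalent to first selecting a branch $i'$ with probability $\nicefrac{1}{|I|}$ and then playing $\strat_{i'}$. Positivity is then immediate: in environment $i$ the branch $i'=i$ is selected with probability $\nicefrac{1}{|I|}$, conditioned on which the induced process is exactly $\memdp_i[\strat_i]$, and the remaining branches contribute non-negative probability, so that
\[
\mathbb{P}_{\memdp_i[\strat]}(\Phi) \;\ge\; \tfrac{1}{|I|}\,\mathbb{P}_{\memdp_i[\strat_i]}(\Phi) \;=\; \tfrac{p_i}{|I|} \;>\; 0 .
\]
As this holds for every $i \in I$, the single strategy $\strat$ is possibly winning in all environments, so $\memdp$ is winning.

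The step I expect to be the main obstacle is realizing the ``randomize once, then commit'' mixture as one genuine strategy: a behavioral strategy $\strat \colon \pathsfin \to \dist{A}$ only reads the observed path, so when two branches prescribe the same actions it cannot directly recover which branch it committed to. The resolution is to let $\strat$ carry the branch identity implicitly through the posterior weights rather than trying to decode it, and the nontrivial thing to verify there is precisely that the transition probabilities cancel, so that the same $\strat$ serves all environments at once. The remainder---that the branch-$i$ contribution to $\memdp_i[\strat]$ reproduces $\memdp_i[\strat_i]$ and hence preserves the measurable event $\Phi$---is a routine computation with the induced-MC construction of \Cref{subsec:properties}.
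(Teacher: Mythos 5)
Your proof is correct, but it takes a genuinely different route from the paper's. The paper first invokes memoryless-deterministic sufficiency for possible Rabin objectives in MDPs and then forms the \emph{per-state} uniform mixture $\strat_{\memdp}(s) = \unif(\{\strat_i(s)\}_{i \in I})$, arguing that randomizing over winning actions keeps the winning probability positive. You instead mix over \emph{whole strategies}: a single branch is (conceptually) drawn once, and the mixture is realized as one behavioral strategy via posterior weighting, with the key observation that the environment's transition probabilities cancel in the posterior, so the same strategy serves every $M_i$; this yields the clean bound $\mathbb{P}_{\memdp_i[\strat]}(\Phi) \ge \nicefrac{p_i}{|I|}$. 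Your version buys two things. First, it needs no memoryless-determinacy result---it works for arbitrary witnesses $\strat_i$. Second, and more importantly, it is robust where per-step re-randomization is delicate: re-drawing an action at every visit can drive the probability of safety-like events (e.g.\ $\csafeobj T$, and hence general Rabin conditions) to zero even when each constituent strategy wins with positive probability, since the product of per-step ``stay on the winning branch'' probabilities may vanish; committing to a branch once, as you do, avoids this entirely because the branch-$i$ component of the induced measure in $M_i$ is exactly $\nicefrac{1}{|I|}$ times the measure of $\memdp_i[\strat_i]$. The only loose end in your write-up is the definition of the posterior on paths to which every branch assigns probability zero, but such paths are unreachable under $\strat$, so any measurable default suffices.
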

\begin{proof}
    Assume $\memdp$ is winning. 
    Then there exists a strategy $\strat$ such that every individual environment $M_i$ is winning under $\strat$, by definition of winning in MEMDPs (see \Cref{subsec:memdps}).

    Assume all environments $\{M_i\}_{i \in I}$ are winning. 
    Then there exist winning strategies, one for each environment, $\{\strat_i\}_{i \in I}$.
    Since memoryless deterministic strategies suffice for possible Rabin objectives in MDPs~\cite{DBLP:conf/icalp/ChatterjeeAH05}, w.l.o.g., all these strategies are of type $\strat_i \colon S \to A$.
    From these strategies, a new winning strategy for $\memdp$ can be constructed by uniformly randomizing over these strategies:
    \[
        \strat_{\memdp}(s) = \unif(\{\strat_i(s)\}_{i \in I}).
    \]
    As every strategy $\strat_i$ is possibly winning, and randomizing over their winning actions ensures that the probability of winning remains greater than zero.
    Hence, $\strat_{\memdp}$ is possibly winning.
\end{proof}

\thmposequaltomdps*
\begin{proof}
    Let $\memdp = \{M_i\}_{i \in I}$ be a MEMDP.
    For each environment $M_i$, check whether $M_i \models \Phi$, sequentially.
    By \Cref{lemma:memdp:possible:winning}, we know that if every environment $M_i$ is winning, the MEMDP is also winning.
    Hence, we only need to keep track of a single counter to keep track of the environment and a flag to store whether the $M_i$ are winning. Thus, the complexity is dominated by the complexity of the subroutine used to check $M_i \models \Phi$, which is in $\tNL$ for reachability and in $\tP$ for the other objectives~\cite{DBLP:conf/mfcs/ChatterjeeDH10}.
\end{proof}

\section{Proofs and Auxiliary Material of \cref{sec:belief:sufficient}}

\subsubsection*{4.2 Belief-Based Strategies}

\lembeliefstratfsc*
\begin{proof}[Proof \Cref{lem:beliefstrat:fsc}]
We construct the following FSC.
Let $\fsc = \tuple{\powerset(I), J_{\iota}, \alpha, \eta}$ be an FSC with memory nodes $\powerset(I)$, initial memory node $J_\iota = I$, the strategy $\strat$ as action mapping, \ie, $\alpha = \strat$, and memory update $\eta$ encodes the belief update function by
$
    \eta(J,a,s') = \dirac(J') \Leftrightarrow \tuple{s',J'} = \beliefupd(\tuple{s,J},a,s').
$
\end{proof}

\subsubsection*{4.3 Constructing Belief-Based Strategies}\label{appx:constructing:belief:based:strategies}

Recall the definition of allowed actions (\Cref{def:allowed:actions}) and the strategy that always remains winning $\stratallow \colon S \times \powerset(I) \to \dist{A}$:
\[
\stratallow(\tuple{s, J}) \defeq \begin{cases}
    \unif(\allow(\tuple{s,J})) & \text{ if } \allow(\tuple{s,J}) \neq \emptyset, \\
    \unif(A(s)) & \text{ otherwise.}
\end{cases}
\]
The strategy $\stratallow$ randomizes uniformly over all allowed actions when the successor beliefs are still winning and overall actions when the belief cannot be winning.

We now examine the behavior of $\stratallow$ on the MEMDP $\memdp$ by looking at the induced MC for each individual environment and the induced MEMC for the MEMDP $\memdp$ as a whole.

Consider the induced MEMC $\memdp[\stratallow] = \{\memdp[\stratallow]_i\}_{i \in I}$.
Since $\stratallow$ is a belief-based strategy, which can be represented by an FSC (\Cref{lem:beliefstrat:fsc}), the state space of the MEMC $\memdp[\stratallow]$ is given by the set of all possible beliefs $S \times \powerset(I)$.
Given a belief $\tuple{s,J}$, we can compute all the environments $j \in J$ for which $s$ is part of a BSCC $\Sbscc \subseteq S \times \powerset(I)$ in MC $\memdp[\stratallow]_j$ as %
\[
\Jbscc = \{ j \in J \mid \exists \Sbscc \subseteq S \times \powerset(I) \colon \Sbscc \text{ is a BSCC in MC } \memdp[\stratallow]_j \wedge \tuple{s,J} \in \Sbscc \}.
\]
An important observation is that for any belief-based strategy $\strat$, and for all the environments $j \in \Jbscc$, the MCs $\memdp_j[\strat]$ 
starting in state $\tuple{s,J}$ have the same underlying graph:

\begin{restatable}[]{lemma}{lemmajbsccgraph}\label{lemma:Jbscc:graph}
Let $\strat$ be a belief-based strategy, and $\tuple{s,J}$ a belief. 
We define the \emph{induced MC for environment $j \in J$ starting in $\tuple{s,J}$} as
$\newinit{\memdp}{{\tuple{s,J}}}_j[\strat] = \tuple{\reachablem{\memdp_j}{s} \times \powerset(J), \tuple{s,J}, p_j'}$, 
where $p_j'(\tuple{s,J},\tuple{s',J'})$ is defined as in the induced MEMC of a MEMDP and FSC.
Then:  
\[ 
\forall j,j' \in \Jbscc, \text{ the MCs } \newinit{\memdp}{{\tuple{s,J}}}_j[\strat] \text{ and } \newinit{\memdp}{{\tuple{s,J}}}_{j'}[\strat]\text{ have the same underlying graph.}
\] 
\end{restatable}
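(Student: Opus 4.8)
The plan is to leverage the monotonic information gain of MEMDPs (\Cref{lemma:belief-stabalizing}) to show that, for the environments collected in $\Jbscc$, the belief component is frozen at $J$ on the relevant region, which in turn forces these environments to share an identical local transition structure. As a first step I would establish the key structural fact: for every $j \in \Jbscc$, the BSCC $\Sbscc$ of $\memdp[\stratallow]_j$ that contains $\tuple{s,J}$ consists solely of states whose belief equals $J$. This follows from part~(1) of \Cref{lemma:belief-stabalizing}: the belief can only shrink along a path, so had it strictly decreased at some state of $\Sbscc$, it could never be restored; yet strong connectivity of a BSCC lets us return to $\tuple{s,J}$, whose belief is $J$, a contradiction. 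Hence the belief is constant on $\Sbscc$.

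Second, a constant belief forces every transition realized inside $\Sbscc$ to be non-revealing. Concretely, if $\tuple{s',J} \to \tuple{s'',J}$ is an edge witnessed by an action $a$ with $\transitionfunc_j(s',a,s'') > 0$, then $\beliefupd(\tuple{s',J},a,s'') = \tuple{s'',J}$ forces $\{ i \in J \mid \transitionfunc_i(s',a,s'') > 0 \} = J$, i.e.\ $\transitionfunc_i(s',a,s'') > 0$ for all $i \in J$ (the contrapositive is exactly part~(2) of \Cref{lemma:belief-stabalizing}). Therefore the presence of an edge in this region is independent of the environment: it depends only on the action played and on all environments of $J$ agreeing, which they do. In particular, the graphs of $\memdp[\stratallow]_j$ and $\memdp[\stratallow]_{j'}$ coincide on $\Sbscc$.

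Third, I would upgrade this to the claimed equality of the induced MCs. Since being a BSCC is a purely graph-theoretic notion and the two graphs agree on $\Sbscc$, the set $\Sbscc$ is simultaneously a BSCC of $\memdp[\stratallow]_{j'}$ for every $j' \in \Jbscc$; thus all environments in $\Jbscc$ share one and the same BSCC, already pinning down equal vertex sets. For the induced MCs $\newinit{\memdp}{\tuple{s,J}}_j[\strat]$ I would then proceed by induction on the distance from $\tuple{s,J}$: because $\strat$ is belief-based and the reachable part stays inside the belief-$J$ region where every transition is non-revealing, the set of outgoing edges at each reachable belief-state is the same in all environments of $\Jbscc$, so the reachable subgraphs are identical.

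The step I expect to be the main obstacle is the careful bookkeeping of beliefs and reachability in the third part. One has to argue that the portion of $\newinit{\memdp}{\tuple{s,J}}_j[\strat]$ reachable from $\tuple{s,J}$ never leaves the belief-$J$ region, so that the non-revealing argument applies throughout rather than only inside the $\stratallow$-BSCC, and one must match vertices $\tuple{s',J'}$ across the two environment-indexed MCs even though their state spaces are a priori written over the possibly different reachable sets $\reachablem{\memdp_j}{s}$ and $\reachablem{\memdp_{j'}}{s}$; the equality of these reachable sets should fall out of the non-revealing property once it is shown to hold on the entire reachable region.
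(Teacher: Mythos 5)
Your proof is correct and rests on the same core fact as the paper's: any difference between the underlying graphs of two environments in $\Jbscc$ would require a revealing transition, which by monotonic information gain (\Cref{lemma:belief-stabalizing}) strictly shrinks the belief and is therefore incompatible with $\tuple{s,J}$ lying in a BSCC. The paper argues this directly by contradiction in three lines, whereas you first establish that the belief is constant on the BSCC (which the paper only records afterwards, as \Cref{col:belief:stabilization}, a \emph{consequence} of this lemma) and then deduce that all realized transitions are non-revealing; both orderings are sound, and yours is the more explicit. The obstacle you flag at the end — that the region reachable under the arbitrary belief-based $\strat$ must stay inside the belief-$J$ region even though $\Jbscc$ is defined via $\stratallow$ — is genuine, but the paper's own proof silently makes the same assumption (it treats a transition of $\memdp[\strat]_j$ as a witness against $\tuple{s,J}$ being in a BSCC of $\memdp[\stratallow]_j$); in every place the lemma is applied, $\strat$ only plays actions also played by $\stratallow$, which is the implicit hypothesis that closes this gap.
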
\noindent 
\begin{proof}[Proof \Cref{lemma:Jbscc:graph}]
 Proof by contradiction. 
    We assume two environments $j, j' \in \Jbscc$ such that $\newinit{\memdp}{{\tuple{s,J}}}_j[\strat]$ and $\newinit{\memdp}{{\tuple{s,J}}_{j'}}[\strat]$ do not have the same underlying graph. 
    Without loss of generality, assume environment $j$ has a transition $(\tuple{s,J},\tuple{s',J'})$ that environment $j'$ does not have. 
    Hence, it is a revealing transition.
    From the monotonic information gain property (\Cref{lemma:belief-stabalizing}) it then follows that $J' \subset J$, which means the state $\tuple{s,J}$ was, in fact, not part of a BSCC $\Sbscc$, and thus $j \notin \Jbscc$, contradicting our assumption.
\end{proof}

As a consequence, since $\stratallow$ is a belief-based strategy, every BSCC of the MEMC $\memdp[\stratallow]$ has a fixed set of environments that cannot change anymore. We call this property \emph{belief stabilization}:
\begin{corollary}[Belief stabilization]\label{col:belief:stabilization}
Given a BSCC $\Sbscc$ of $\memdp[\stratallow]$, for all states $\tuple{s,J}, \tuple{s',J'} \in \Sbscc$, we have $J = J'$.
\end{corollary}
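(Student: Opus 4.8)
The plan is to prove the equality $J = J'$ by combining the monotonic information gain of MEMDPs with the defining property of a BSCC, namely that its states are mutually reachable. The key observation is purely structural and independent of the actual transition probabilities: in the induced MEMC $\memdp[\stratallow]$, every edge can only shrink or preserve the belief component of the state.

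First I would isolate the fact that every transition in $\memdp[\stratallow]$ is belief-decreasing. Since $\stratallow$ is belief-based and hence representable by an FSC (\Cref{lem:beliefstrat:fsc}), the state space of $\memdp[\stratallow]$ is $S \times \powerset(I)$, and a transition from $\tuple{s, J}$ to $\tuple{s', J'}$ exists in any environment only if $\tuple{s', J'} = \beliefupd(\tuple{s, J}, a, s')$ for some action $a$. By the definition of the belief update, $J' = \{ j \in J \mid \transitionfunc_j(s, a, s') > 0 \} \subseteq J$, which is exactly the monotonic information gain recorded in \Cref{lemma:belief-stabalizing}. Extending this along finite paths by transitivity, whenever $\tuple{s', J'}$ is reachable from $\tuple{s, J}$ in $\memdp[\stratallow]$ we obtain $J' \subseteq J$.

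Next I would invoke the strong connectivity of a BSCC. Let $\tuple{s, J}, \tuple{s', J'} \in \Sbscc$. Because $\Sbscc$ is a BSCC, and therefore strongly connected in each environment's MC, there is a finite path from $\tuple{s, J}$ to $\tuple{s', J'}$ and a finite path from $\tuple{s', J'}$ back to $\tuple{s, J}$, both staying inside $\Sbscc$. Applying the belief-decreasing property to the first path yields $J' \subseteq J$, and to the second $J \subseteq J'$; hence $J = J'$, which is the claim.

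The argument is essentially immediate once the belief-decreasing property of edges is made explicit, so I do not expect a genuine obstacle. The only point that needs care is the bookkeeping step that a transition present in the MEMC forces its two belief components into a subset relation; this is where \Cref{lemma:belief-stabalizing} is used, and it relies on the belief being recorded inside the state (via the FSC representation) rather than being an external quantity. Note that the companion \Cref{lemma:Jbscc:graph} is not required for this corollary, though it supplies the complementary fact that all environments surviving in such a BSCC share the same reachable subgraph.
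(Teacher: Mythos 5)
Your proof is correct and matches the paper's (implicit) reasoning: the paper states this corollary as an immediate consequence of monotonic information gain, with the same underlying idea appearing in the proof of \Cref{lemma:Jbscc:graph} (a belief-shrinking transition inside a BSCC would contradict BSCC membership, since by monotonicity the larger belief could never be recovered). Your two-inclusion argument via mutual reachability is the standard way to make this explicit and introduces no gap.
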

Now consider the sub-MDPs $\{\tuple{\Sbscc, A, \initdist', \transitionfunc_j'}\}_{j \in \Jbscc}$ of the MDPs $\{\memdp_j\}_{j \in \Jbscc}$, 
with initial state $\initdist' = \tuple{s, J}$, and transition function $\transitionfunc_j' \colon \Sbscc \times A \rightharpoonup \dist{\Sbscc}$ being 
\[
\transitionfunc_j'(\tuple{s,J},a, \tuple{s',J'}) \defeq
\begin{cases}
    \transitionfunc_j(s,a,s') & \text{ if } a \in \allow(\tuple{s,J}),\\
    \text{undefined} & \text{ otherwise},
\end{cases}
\] 
restricted to BSCC states and allowed actions.

\begin{restatable}[]{lemma}{lemmasubmdpgraph}\label{lemma:submdp:graph}
The sub-MDPs $\{\mdp_j = \tuple{\Sbscc, A,\tuple{s,J}, \transitionfunc_j'}\}_{j \in \Jbscc}$ have the same underlying graph.
\end{restatable}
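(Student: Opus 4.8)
The plan is to reduce the claim to an equality of edge sets: since all the sub-MDPs $\mdp_j$ (for $j \in \Jbscc$) share the state space $\Sbscc$, it suffices to show that for any two environments $j, j' \in \Jbscc$ every edge in the underlying graph of $\mdp_j$ is also present in that of $\mdp_{j'}$; the converse then follows by symmetry. The argument is a close analogue of the one for \Cref{lemma:Jbscc:graph}, and its engine is again the combination of monotonic information gain (\Cref{lemma:belief-stabalizing}) with belief stabilization (\Cref{col:belief:stabilization}).

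Concretely, I would first invoke belief stabilization to note that every state of $\Sbscc$ carries the same belief $J$, so any edge of $\mdp_j$ has the form $(\tuple{s,J}, \tuple{s',J})$. By the definition of the restricted transition function $\transitionfunc'_j$, such an edge is witnessed by an allowed action $a \in \allow(\tuple{s,J})$ with $\transitionfunc_j(s,a,s') > 0$. The next step is the heart of the proof: since $a \in \allow(\tuple{s,J})$ the set $\allow(\tuple{s,J})$ is nonempty, so $\stratallow(\tuple{s,J}) = \unif(\allow(\tuple{s,J}))$ plays $a$ with positive probability, and hence $\beliefupd(\tuple{s,J},a,s')$ is a positive-probability successor of $\tuple{s,J}$ in the induced chain $\memdp[\stratallow]_j$. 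Because $\tuple{s,J}$ lies in $\Sbscc$, which is a BSCC of $\memdp[\stratallow]_j$ for $j \in \Jbscc$, this successor must itself lie in $\Sbscc$, and belief stabilization forces its belief component to equal $J$. Writing $\beliefupd(\tuple{s,J},a,s') = \tuple{s', J''}$, this means $J'' = J$, \ie $\transitionfunc_k(s,a,s') > 0$ for every $k \in J$---in particular for $k = j'$. Hence the same action $a \in \allow(\tuple{s,J})$ witnesses the edge $(\tuple{s,J}, \tuple{s',J})$ in $\mdp_{j'}$, as desired.

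I expect the main obstacle to be articulating cleanly why an allowed action taken from a BSCC state cannot be a revealing transition---that is, the step that turns ``$a$ keeps the chain inside $\Sbscc$'' into ``$a$ preserves the full belief $J$''. This relies on three facts being combined correctly: that allowed actions are exactly the support of $\stratallow$ (so they really do fire inside the induced chain), that $\Sbscc$ is a genuine BSCC in each $\memdp[\stratallow]_j$ with $j \in \Jbscc$ (which is where \Cref{lemma:Jbscc:graph} guarantees a single common $\Sbscc$ across these environments), and that belief stabilization pins every reachable belief to $J$. An equivalent phrasing would instead assume for contradiction that the graphs differ, extract a single revealing action from the differing edge, and derive $J'' \subsetneq J$ via \Cref{lemma:belief-stabalizing}, contradicting belief stabilization; I would present whichever of the two reads more directly.
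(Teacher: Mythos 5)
Your proof is correct and follows essentially the same route as the paper's: the paper argues by contradiction that a differing edge would be witnessed by an allowed action, hence played by $\stratallow$ with positive probability, yielding a discrepancy between the induced chains that contradicts \Cref{lemma:Jbscc:graph} --- which is exactly the contrapositive phrasing you sketch at the end, and your direct version merely unfolds that lemma into the belief-stabilization argument. The extra care you take in showing $J''=J$ (so the transition is present in \emph{all} of $J$, not just $\Jbscc$) is sound and slightly stronger than needed.
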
\noindent
\begin{proof}[Proof \Cref{lemma:submdp:graph}]
 Assume that two sub-MDPs $M_j, M_{j'}$, consisting of BSCC states $\Sbscc$ and allowed actions $\Aallow$, do not have the same underlying graph.
 Without loss of generality, assume $M_j$ contains a transition $(\tuple{s,J}, a, \tuple{s',J'})$ that is not present in $M_{j'}$.
 By construction, the states $\tuple{s,J}, \tuple{s',J'} \in \Sbscc$ and $a \in \Aallow$.
 Hence, $a$ is an allowed action, and thus $\stratallow(\tuple{s,J})(a) > 0$.
 Then the induced MC $M_j[\stratallow]$ also contains a transition $(\tuple{s,J}, \tuple{s',J'})$ that $M_{j'}[\stratallow]$ does not have.
 Since $M_j$ and $M_{j'}$ both consist of states that form a BSCC under $\stratallow$, this missing transition in $M_{j'}[\stratallow]$ is a direct contradiction with \Cref{lemma:Jbscc:graph}.
\end{proof}

\begin{restatable}[]{corollary}{corstratbsccwinning}\label{cor:stratbscc:winning}
    Given a sub-MDP $\mdp_j = \tuple{\Sbscc, A, \tuple{s,J}, \transitionfunc_j'}$ for any $j \in \Jbscc$, we can compute a strategy $\stratbscc \colon \Sbscc \to \dist{A}$ that is almost-surely winning for objective $\Phi$ on $M_j$.
    Then $\stratbscc$ is also winning for all sub-MDPs $\{M_{j}\}_{j \in \Jbscc}$.
\end{restatable}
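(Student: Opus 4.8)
The plan is to compute $\stratbscc$ on the single MDP $M_j$ and then argue that this one memoryless strategy automatically wins in every $M_{j'}$, exploiting that all these MDPs share the same underlying graph and that almost-sure Rabin satisfaction is a purely graph-theoretic property.

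First I would check that $M_j$ is winning for $\Phi$ from $\tuple{s,J}$, so that a winning strategy to compute actually exists. Since $\tuple{s,J}$ is a winning belief, there is a strategy $\strat$ winning in all environments of $J$; I would observe that such a strategy may only ever play allowed actions at the (winning) beliefs it reaches, because playing a non-allowed action would send it, with positive probability, to a losing successor belief in some environment and thus make it fail there. As $\Sbscc$ is a BSCC of $\memdp[\stratallow]_j$ and $\stratallow$ randomizes over \emph{all} allowed actions, no allowed action can escape $\Sbscc$ in environment $j$; hence $\strat$, viewed as a strategy on $M_j$, stays inside $\Sbscc$ and witnesses that $\tuple{s,J}$ is winning there. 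As $M_j$ is a finite MDP, I would then invoke the standard polynomial-time almost-sure Rabin algorithm for MDPs~\cite[Thm.\ 10.127]{Baier2008}, which moreover yields a \emph{memoryless deterministic} winning strategy~\cite[Thm.\ 4]{DBLP:conf/icalp/ChatterjeeAH05}, and take this to be $\stratbscc \colon \Sbscc \to \dist{A}$.

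The second step is the transfer. By \Cref{lemma:submdp:graph} the sub-MDPs $\{M_{j'}\}_{j' \in \Jbscc}$ all have the same underlying graph, so the actions selected by the memoryless $\stratbscc$ are enabled in every $M_{j'}$ and lead to the same supported successors; consequently the induced Markov chains $M_{j'}[\stratbscc]$ all agree with $M_j[\stratbscc]$ up to the exact transition probabilities, i.e.\ on their underlying graph. The crux — and the step I expect to be the main obstacle to state cleanly — is the fact that almost-sure satisfaction of a Rabin objective in a finite Markov chain depends only on its underlying graph: almost surely $\infset(\Path)$ equals some BSCC of the graph, each BSCC is reached with positive probability, and whether a BSCC $B$ fulfils a Rabin pair (some $i$ with $B \subseteq \rabincobuchi_i$ and $B \cap \rabinbuchi_i \neq \emptyset$) is a function of its states alone. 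Since $M_j[\stratbscc]$ and every $M_{j'}[\stratbscc]$ share their BSCC decomposition and $M_j[\stratbscc]$ wins $\Phi$ almost surely, I would conclude that each $M_{j'}[\stratbscc]$ wins $\Phi$ almost surely, which is exactly the claim.
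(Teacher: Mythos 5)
Your proposal is correct and follows essentially the same route as the paper: invoke \Cref{lemma:submdp:graph} to get that all sub-MDPs $\{M_{j'}\}_{j' \in \Jbscc}$ share the same underlying graph, and then use that almost-sure satisfaction of a Rabin objective is a purely graph-theoretic property to transfer the winning strategy from $M_j$ to every $M_{j'}$. The only differences are cosmetic: you additionally spell out why $M_j$ is winning in the first place (via allowed actions and the BSCC of $\memdp[\stratallow]_j$) and you unfold the graph-dependence argument at the level of BSCCs of the induced Markov chains, both of which the paper leaves to the surrounding construction and to citations.
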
\noindent
\begin{proof}[Proof \Cref{cor:stratbscc:winning}]
    By \Cref{lemma:submdp:graph} we know that all sub-MDPs $\{\mdp_j\}_{j \in \Jbscc}$ have the same underlying graph and have finite sets of states and actions.
    Almost-sure Rabin objectives $\Phi$ are qualitative objectives that only depend on the graph to win in an MDP~\cite[Section 10.6.3]{Baier2008}~\cite{DBLP:conf/soda/ChatterjeeJH04,DBLP:conf/icalp/ChatterjeeAH05}.
    Thus, given two finite MDPs $M_1, M_2$ that share the same underlying graph, we have that there exists a strategy $\strat$ such that $M_1[\strat]$ is winning iff $M_2[\strat]$ is winning~\cite{DBLP:conf/soda/ChatterjeeJH04}.
\end{proof}

\thmbeliefsufficient*
\begin{proof}[Proof \cref{thm:beliefsufficient}]
Given a winning belief-based strategy, the existence of a (general) winning strategy is trivial, as belief-based strategies are a subset of all strategies since beliefs are based on paths.

For the other direction, we construct a belief-based strategy that is winning for objective $\Phi$ as follows
\[
\stratwin(\tuple{s,J}) \defeq \begin{cases}
    \stratallow(\tuple{s,J}) & \text{ if } \Jbscc = \emptyset,\\
    \stratbscc(\tuple{s,J}) & \text{ otherwise.}  
\end{cases}
\]
We now show that $\stratwin$ is indeed winning.
For every belief, including the initial belief, we check whether the belief $\tuple{s,J}$ contains at least one environment $j \in J$ where $s$ is in a BSCC in MC $\memdp_j[\stratallow]$.
That is, $\Jbscc \neq \emptyset$.
When that is not the case, and $\Jbscc = \emptyset$, we play allowed actions according to $\stratallow$.

When $\emptyset \neq \Jbscc \subseteq J$, %
we use a winning strategy $\stratbscc$ for any environment $j \in \Jbscc$.
By \cref{cor:stratbscc:winning}
this strategy is winning for all environments $j \in \Jbscc$.
If $\Jbscc = J$, it follows directly that $\stratwin$ is winning for the whole MEMDP.
From \Cref{col:belief:stabilization}, it immediately follows that when $\Jbscc \subset J$, we were, in fact, not in a BSCC, and there exists a revealing transition witnessing this fact and leading to a belief update from $\tuple{s,J}$ to $\tuple{s',J'}$.
If we never witness this revealing transition under $\stratwin$, then $\stratwin$ was already a winning strategy for our Rabin objective, and if we do witness the revealing transition, the belief is updated, and we are back at checking whether $\mathcal{J}_{\tuple{s',J'}}$ contains an environment where $s'$ is in a BSCC.
Due to the monotonic information gain property (\Cref{col:belief:stabilization}), this process terminates, and
hence, $\stratwin$ is indeed winning for MEMDPs with Rabin objectives.%
\end{proof}

\section{Proofs and Auxiliary Material of \cref{sec:bomdps}}\label{sec:appendix:bomdps}

\subsubsection*{5.1 Belief-Observation MDPs}
\lemstratcorrespondence*
\begin{proof}[Proof of \cref{lem:strat_correspondence}]
Let $\memdp = \tuple{\stateset, \actionset, \initdist, \transitionfuncs}$ and $\bomdp = \tuple{S \times \powerset(I), \actionset, \initdist', \{p'_i\}_{i \in i}}$.
We use \cref{lem:beliefstrat:fsc} to turn $\strat$ into an FSC to compute the induced MEMC. 
The FSC has the form $\fsc = \tuple{\powerset(I), J_{\iota}, \strat, \eta}$, so the induced MEMC $\memdp[\fsc]$ has the form $\tuple{S \times \powerset(I), \initdist', \{ \transitionfunc^a_i \}_{i \in I}}$ with 
$\initdist' = \tuple{\initdist, I}$ and the transition function $\transitionfunc^a$ by
\begin{align*}
\transitionfunc^a_i(\tuple{s, J},\tuple{s',J'}) &= \sum_{a \in A} \strat(\tuple{s,J})(a) \cdot \transitionfunc_i(s,a,s') \cdot \eta(J,a,s',J')\\
&= \begin{cases}
    \sum_{a \in A} \strat(\tuple{s,J})(a) \cdot \transitionfunc_i(s, a,s') & \text{ if } \tuple{s', J'}=\beliefupd(\tuple{s, J}, a, s'),\\
    0 & \text{ otherwise}.
\end{cases}%
\end{align*}%
By definition, it is clear that $\bomdp[\liftedstrat]$ has the same state space and initial distribution.
Applying $\liftedstrat$ as a memoryless strategy (using an FSC with a single memory node) yields
\begin{align*}
    \transitionfunc^b_i(\tuple{s, J},\tuple{s',J'}) =&
        \sum_{a \in A} \strat(\tuple{s, J})(a) \cdot \transitionfunc'_i(\tuple{s, J}, a,\tuple{s', J'})\\
        =&\ \begin{cases}
             \sum_{a \in A} \strat(\tuple{s, J})(a) \cdot \transitionfunc_i(s, a,s') & \text{ if } \tuple{s',J'} = \beliefupd(\tuple{s, J}, a, s'),\\
             0 & \text{ otherwise},
        \end{cases}
\end{align*}
which is equal to $\transitionfunc^a$.
\end{proof}

\thmobjcorrespondence*
\begin{proof}[Proof of \cref{thm:obj_correspondence}]
By~\cref{lem:strat_correspondence}, we know that $\memdp[\strat] = \bomdp[\liftedstrat]$.
As the memory of the FSC representing belief-based policy $\strat$ is equal to $\powerset(I)$~(\cref{lem:beliefstrat:fsc}), the FSC lifted objective as defined in~\cref{def:lifted:rabin:objective} is equal to the BOMDP lifted objective $\liftedobj$.
\end{proof}

\subsubsection*{5.2 An Algorithm for Reachability in BOMDPs}
We first give a formal definition of losing actions.
\begin{definition}[Losing action]
We say that in a BOMDP an action $a \in A(\tuple{s, J})$ is losing in state $\tuple{s, J}$ when $(\tuple{s,J}, a)$ has a losing successor state in some environment $j \in J$:
\[
\supp(p_j(\tuple{s, J}, a)) \cap ((S \times \powerset(I)) \setminus \winregion\bomdp\reachobj) \ne \emptyset.
\]
\end{definition}

We start with two additional lemmas.
\begin{lemma}\label{lem:losing_state_action}
A BOMDP state $\tuple{s, J} \not\in T$ is losing iff every action $a \in A(\tuple{s, J})$ is a losing action.
\end{lemma}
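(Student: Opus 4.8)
The plan is to prove both implications by contraposition, reducing each to a statement about a \emph{single} action. For the direction ``$\tuple{s,J}$ losing $\Rightarrow$ every action losing'' I would show the contrapositive ``some action non-losing $\Rightarrow$ $\tuple{s,J}$ winning'', and for ``every action losing $\Rightarrow$ $\tuple{s,J}$ losing'' I would show ``$\tuple{s,J}$ winning $\Rightarrow$ some action non-losing''. Recall that an action $a \in A(\tuple{s,J})$ is non-losing exactly when every successor $\tuple{s',J'} = \beliefupd(\tuple{s,J},a,s')$ reachable via $a$ in some environment $j \in J$ lies in $\winregion{\bomdp}{\reachobj}$. The recurring technical fact I will use is that conditioning an almost-sure reachability strategy on a positive-probability one-step prefix again yields an almost-sure reachability strategy from the new state; since the winning relation for beliefs quantifies over \emph{all} environments of the belief, the real work is to make this conditioning uniform across environments.

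For ``some non-losing action $a$ $\Rightarrow$ $\tuple{s,J}$ winning'', I would build a history-dependent winning strategy directly. Because winning only requires the existence of \emph{some} strategy, I can use memory to avoid conflicts between the continuations of distinct successors: at $\tuple{s,J}$ play $a$, and upon the observed transition to a successor belief $\tuple{s',J'}$ (determined by $s'$, since the belief update is deterministic) switch to a winning strategy $\strat_{\tuple{s',J'}}$ for that belief, using the path prefix to remember which branch was taken. To verify this wins in each environment $j \in J$: after playing $a$, environment $j$ moves to some successor $\tuple{s',J'}$ with $j \in J'$, since $j$ is consistent with the transition it just made and the belief update keeps exactly the consistent environments. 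As $a$ is non-losing, $\tuple{s',J'}$ is winning, so $\strat_{\tuple{s',J'}}$ reaches $T$ almost surely in environment $j$ (because $j\in J'$); and as $\tuple{s,J}\notin T$ forces progress through the first step, the combined strategy reaches $T$ almost surely in every $j \in J$, so $\tuple{s,J}$ is winning.

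For the converse, suppose $\tuple{s,J}$ is winning via a strategy $\strat$, and let $a$ be any action in the support of $\strat$ at $\tuple{s,J}$. I claim $a$ is non-losing. Fix a successor $\tuple{s',J'}=\beliefupd(\tuple{s,J},a,s')$ and an environment $j' \in J'$; then $j' \in J$ and $p_{j'}(s,a,s')>0$, so the one-step prefix $\tuple{s,J}\,a\,\tuple{s',J'}$ has positive probability in $\memdp_{j'}[\strat]$. Conditioning the almost-sure event ``reach $T$'' on this prefix, and using $\tuple{s,J}\notin T$ (so that $T$ is already reached within the prefix only if $s'\in T$, in which case $\tuple{s',J'}$ is trivially winning), the continuation of $\strat$ after the prefix reaches $T$ almost surely from $\tuple{s',J'}$ in environment $j'$. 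This continuation is a \emph{single} strategy working for every $j'\in J'$, so $\tuple{s',J'}\in\winregion{\bomdp}{\reachobj}$; as the successor was arbitrary, $a$ is non-losing, giving the desired contrapositive.

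The main obstacle, and the place where the MEMDP structure (as opposed to a plain MDP) matters, is ensuring that one continuation strategy certifies winning of the successor belief $\tuple{s',J'}$ \emph{simultaneously in all} of its environments $J'$, rather than only in the environment that realized the transition. This is handled by two facts: the belief update is deterministic and retains precisely the environments consistent with the observed transition (\Cref{lemma:belief-stabalizing}), so every $j'\in J'$ genuinely sees the prefix with positive probability; and in the gluing direction the freedom to use history-dependent strategies sidesteps any conflict between the winning strategies of distinct successor beliefs whose reachable belief-sets may overlap.
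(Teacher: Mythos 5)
Your proof is correct and follows essentially the same route as the paper's: both directions reduce to the observation that a winning strategy cannot put positive probability on a losing action, and that a non-losing action lets one glue the successor beliefs' winning strategies into a winning strategy for $\tuple{s,J}$. You spell out the conditioning argument and the history-dependent gluing more explicitly than the paper does, but the underlying decomposition is identical.
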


\begin{proof}[Proof of \cref{lem:losing_state_action}]
First, we prove that if every action $a \in A(\tuple{s, J})$ is losing, then necessarily $\tuple{s, J}$ is losing.
Assume for contradiction that $\tuple{s, J} \in \winregion\bomdp\reachobj$, then there exists a winning strategy $\strat$.
Thus, $\strat(\tuple{s, J})(a) > 0$ for some $a \in A(\tuple{s, J})$, which is a losing action.
Therefore, there exists an environment $j \in J$ such that we can reach a losing state $\tuple{s', J'}$.
As $\tuple{s', J'} \not\in \winregion\bomdp\reachobj$, expanding the definition of losing, we know that $\forall\strat'\colon \exists {j'} \in J'\colon \mathbb{P}_{{\bomdp}_j[\strat']}(\reachobj \mid s_0 = s') < 1$.
As $\tuple{s', J'}$ is reached with positive probability from $\tuple{s, J}$, we have that $\mathbb{P}_{{\bomdp}_j[\strat']}(\reachobj \mid s_0 = s) < 1$.
This contradicts the assumption that $\strat$ is winning.

As for the other direction of the proof, we need to show that if a state $\tuple{s, J}$ is losing, then necessarily every action $a \in A(\tuple{s, J})$ is losing. 
Suppose for contradiction that there exists an action $a \in A(\tuple{s,J})$ that is not losing. 
By definition of losing actions, the action is thus winning.
Then, no losing state can be reached in any environment $j \in J$. 
Hence, all states reachable by playing $a$ in $\tuple{s,J}$ are winning in all environments $j \in J$.
Hence, the current state $\tuple{s,J}$ also has to be winning. 
A contradiction.
\end{proof}

\begin{lemma}\label{lem:losing_inclusion}
If there exists an environment $j \in J$ such that $\tuple{s, J} \not\in \winregion{{\bomdp}_j}\reachobj$, then $\tuple{s,J} \not\in \winregion\bomdp\reachobj$.
\end{lemma}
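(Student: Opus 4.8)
The plan is to prove the contrapositive: I would assume $\tuple{s,J} \in \winregion{\bomdp}{\reachobj}$ and derive that $\tuple{s,J} \in \winregion{{\bomdp}_j}{\reachobj}$ holds for \emph{every} $j \in J$. This directly contradicts the hypothesis that some $j \in J$ satisfies $\tuple{s,J} \notin \winregion{{\bomdp}_j}{\reachobj}$, and hence proves the lemma.

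First I would unfold the characterization of the BOMDP winning region as the set of winning beliefs of $\memdp$, so that $\tuple{s,J} \in \winregion{\bomdp}{\reachobj}$ means $\tuple{s,J} \models^{\memdp} \reachobj$. By \Cref{thm:beliefsufficient} a witnessing strategy may be taken belief-based, and by \Cref{lem:strat_correspondence} its lift $\liftedstrat$ is a \emph{single, memoryless} strategy on $\bomdp$ under which the induced chain ${\bomdp}_j[\liftedstrat]$ almost-surely reaches $T$ from $\tuple{s,J}$, for \emph{all} $j \in J$ at once. Fixing any $j \in J$, the same $\liftedstrat$ is in particular a strategy for the single MDP ${\bomdp}_j$ that almost-surely reaches $T$ from $\tuple{s,J}$; hence $\tuple{s,J} \in \winregion{{\bomdp}_j}{\reachobj}$. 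Since $j$ was arbitrary, this holds throughout $J$, giving the contradiction.

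I expect no genuine obstacle here: the statement merely formalizes that winning in the joint belief-observation model requires one strategy winning in \emph{every} environment, so a BOMDP-winning state is a fortiori winning in each fixed environment. The only points needing (routine) care are that the quantification ranges exactly over $j \in J$, which by \Cref{def:bomdp} is the set of environments on which the transition functions $p'_j(\tuple{s,J},\cdot)$ are supported, and that passing from the joint BOMDP strategy to a strategy for the single MDP ${\bomdp}_j$ loses no information, which is immediate since we merely restrict the already-fixed strategy $\liftedstrat$ to one environment.
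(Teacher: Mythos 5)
Your proposal is correct and is essentially the paper's own argument: winning in $\bomdp$ (a MEMDP) means a \emph{single} strategy wins in every environment $j' \in J$, so in particular it wins in the fixed ${\bomdp}_j$, contradicting the hypothesis. The detour through \Cref{thm:beliefsufficient} and \Cref{lem:strat_correspondence} is harmless but unnecessary, since any witnessing strategy for the MEMDP $\bomdp$ already restricts to a winning strategy for each single MDP ${\bomdp}_j$.
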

\begin{proof}[Proof of \Cref{lem:losing_inclusion}]
    For contradiction, assume that $\tuple{s, J} \not\in \winregion{{\bomdp}_j}\reachobj$, but $\tuple{s,J} \in \winregion\bomdp\reachobj$.
    $\tuple{s,J} \in \winregion\bomdp\reachobj$ implies that there is a strategy $\strat$ such that for all environments $j' \in J$, $\strat$ wins in the MDP ${\bomdp}_{j'}$, which includes $j$, which is a contradiction.
\end{proof}

\thmreachabilitycorrect*
\begin{proof}[Proof of~\cref{thm:reachability_correct}]
We show that $\tuple{s, J} \in \algreach(\bomdp, T) \Rightarrow \tuple{s, J} \in \winregion\bomdp\reachobj$ and that $\tuple{s, J} \not\in \algreach(\bomdp, T) \Rightarrow \tuple{s, J} \not\in \winregion\bomdp\reachobj$.

First we consider ($\tuple{s, J} \in \algreach(\bomdp, T) \Rightarrow \tuple{s, J} \in \winregion\bomdp\reachobj$).
All states $\tuple{s, J}$ returned by $\algreach(\bomdp, T)$ on line~\ref{line:reachreturn} have never been removed. %
Therefore, they have the property that $\forall j \in J\colon \tuple{s, J} \in \winregion{{\bomdp}_j}\reachobj$.
This is only possible if $A(\tuple{s, J})$ is non-empty.
During the algorithm, we modify the input BOMDP by removing states.
As the modified BOMDP is defined using fixpoint iteration, this availability of actions also holds for any belief $\tuple{s', J'}$ reached after playing an action in $A(\tuple{s, J})$. %
The loop terminates when there are no more losing states to remove.
Using this fact, we can construct a memoryless strategy $\strat$ which uniformly randomizes over the remaining actions in the modified BOMDP. 
In each belief we reach, we know that there exists a winning strategy for each environment and, thus, a path to reach the target.
This means that the probability of taking such a path is bounded by some positive probability $p_{\min} > 0$.
As we remain in $S$ for an infinite amount of steps, the probability of never taking such a path is $\lim_{n \to \infty} (1-p_{\min})^n = 0$, so we almost-surely reach the target.

Now consider ($\tuple{s, J} \not\in \algreach(\bomdp, T) \Rightarrow \tuple{s, J} \not\in \winregion\bomdp\reachobj$).
By~\cref{lem:removing_losing_states}, we know that removing losing states does not affect the winning region. 
We want to use this invariant by showing that we remove only losing states and show that this holds inductively. 
Initially, the modified BOMDP coincides with the original BOMDP.
We add only losing states via~\cref{lem:losing_inclusion} to $L$, and thus, we remove only losing states.
In later iterations, the losing states of the modified and the original BOMDP still coincide by our induction hypothesis, and thus, in that iteration, we again only remove losing states. 
Then, if $\tuple{s, J} \not\in \algreach(\bomdp, T)$, it was removed from the BOMDP by line 7 during some iteration and it is a losing state, so it is not in the winning region $\winregion\bomdp\reachobj$.
\end{proof}

\lemremovinglosingstates*
\begin{proof}[Proof of \cref{lem:removing_losing_states}]
By definition of a losing state, any strategy played will be losing.
Any strategy that has a positive probability of reaching a losing state is, therefore, losing.
The probability of a winning strategy visiting a losing state is, therefore, necessarily $0$.
Thus, removing losing states does not affect winning strategies, and therefore the winning region.
\end{proof}

\lemreachabilitycomplexity*
\begin{proof}[Proof of \cref{lem:reachability_complexity}]
    The outer loop on line~\ref{alg:reach:outerloop} removes at least 1 state per iteration, so the maximum number of iterations is $|S \times \powerset(I)|$, \ie, the number of states of the BOMDP.
    In each iteration of the inner loop on line~\ref{alg:reach:innerloop}, we compute almost-sure reachability on MDPs, $\winregion{{\bomdp}_i}{\reachobj}$, which is computable in polynomial time, see \eg~\cite[Corollary 10.107]{Baier2008}.
\end{proof}

\subsubsection*{5.3 Safe \Bu in BOMDPs}

\thmrabinpairpoly*
\begin{proof}[Proof of \cref{thm:rabin_pair_poly}]
\cref{lem:safe_buchi_to_buchi} shows that $\winregion\bomdp{\csafeobj \rabincobuchi \wedge \cbuchiobj \rabinbuchi}$ is computable as \Bu objective on $\custombomdp{(\staterestrict\memdp\rabincobuchi)}$.
\Bu objectives are computable in polynomial time, see \cref{lem:buchi_poly}.
\end{proof}

\lemsafebuechitobuechipreservationmemdp*
\begin{proof}[Proof of \cref{lem:safe_buechi_to_buechi_preservation_memdp}]
We show that winning in $\memdp$ implies winning in $\staterestrict\memdp\rabincobuchi$ and vice versa.
\begin{enumerate}
    \item Winning in $\memdp$ implies winning in $\staterestrict\memdp\rabincobuchi$.
Given that for a state $s \in S$ and initial belief $J$ we have that $\tuple{s, J} \models^{\memdp} \csafeobj \rabincobuchi \wedge \cbuchiobj \rabinbuchi$, we show that indeed $\tuple{s, J} \models^{\staterestrict\memdp\rabincobuchi} \csafeobj \rabincobuchi \wedge \cbuchiobj \rabinbuchi$.
As $\tuple{s, J}$ is winning in $\memdp$, there exists a winning strategy $\strat$.
We look at the induced MEMC $\memdp[\strat]$ and $\staterestrict\memdp\rabincobuchi[\strat]$ and show that they are equal.
For the initial states of $\memdp$ and $\staterestrict\memdp\rabincobuchi$, the actions are distributed according $\strat(\tuple{s, J})$. %
As $\strat$ is winning, we necessarily have that $s \in \rabincobuchi$.
Additionally, for all successors, we have that $s' \in \rabincobuchi$.
As the transition function of $\staterestrict\memdp\rabincobuchi$ is equal to that of $\memdp$ for transitions that remain in $\rabincobuchi$, so the outgoing transitions of the initial state are equal in both $\memdp$ and $\staterestrict\memdp\rabincobuchi$. 
We can inductively apply this logic to show that all transitions from states reachable from $\tuple{s, J}$ are equal, and thus that $\memdp[\strat] = \staterestrict\memdp\rabincobuchi[\strat]$.
    
    \item Winning in $\staterestrict\memdp\rabincobuchi$ implies winning in $\memdp$.
    This direction is trivial, as any winning strategy never visits the sink state, and therefore, the winning strategy can be mimicked in $\memdp$.
\end{enumerate}
\end{proof}

\lemsafebuechitobuechipreservationmemdptwo*
\begin{proof}[Proof of \Cref{lem:safe_buechi_to_buechi_preservation_memdp_two}]
As $\rabinbuchi$ cannot contain $\bot$, and $\bot$ is a deadlock state, it is clear that when a strategy $\strat$ satisfies $\cbuchiobj \rabinbuchi$, it also satisfies $\csafeobj \rabincobuchi$. The other direction is trivial.
\end{proof}

\lemsafebuchitobuchi*
\begin{proofsketch}[Proof of \cref{lem:safe_buchi_to_buchi}]
We prove the lemma in two steps:
\begin{enumerate}
    \item $\winregion\bomdp{\csafeobj \rabincobuchi \wedge \cbuchiobj \rabinbuchi} = \winregion{\custombomdp{(\staterestrict{\memdp}{\rabincobuchi})}}{\csafeobj \rabincobuchi \wedge \cbuchiobj \rabinbuchi}$.
    
    This statement follows from two applications of \cref{thm:obj_correspondence} and \cref{lem:safe_buechi_to_buechi_preservation_memdp}.
    \item $\winregion{\custombomdp{(\staterestrict{\memdp}{\rabincobuchi})}}{\csafeobj \rabincobuchi \wedge \cbuchiobj \rabinbuchi} = \winregion{\custombomdp{(\staterestrict{\memdp}{\rabincobuchi})}}{\cbuchiobj \rabinbuchi}$.

    This statement follows from two applications of \cref{thm:obj_correspondence} and \cref{lem:safe_buechi_to_buechi_preservation_memdp_two}. \qedhere
\end{enumerate}

\lembuchipoly*
\begin{proof}[Proof of \Cref{lem:buchi_poly}]
We can reduce any MEMDP $\memdp$ (and thus also any BOMDP) with an almost-sure \Bu objective $T$ to a MEMDP $\memdp'$ with an almost-sure reachability objective $T'$.
The reduction is similar to the one presented in~\cite{DBLP:journals/jacm/BaierGB12}.
First, we introduce a new state $\top$, which is the target of the reachability objective $T'=\{\top\}$.
The transition function of $\memdp'$ is equal to that of $\memdp$, except for transitions that reach states in $T$.
These transitions have their probability multiplied by $\frac12$, with the remaining probability being redirected to $\top$.
In $\memdp'$, a strategy reaches $\top$ almost-surely iff it visits states in $T$ infinitely often.
\end{proof}

\end{proofsketch}

\section{Proofs and Auxiliary Material of \cref{sec:pspace:algorithm}}\label{sec:appendix:local}

\subsubsection*{6.2 Local View on BOMDPs}
The following lemma makes preserving the graph preservation precise.
\begin{restatable}[]{lemma}{lemlocalgraphisomorphism}   
\label{lem:local_graph_isomorphism}
    For an environment $j \in J$, let $G_j \defeq \tuple{V_j, E_j}$ denote a labeled graph.
    Let $V_j \defeq S$, and $E_j \defeq \{ s \xrightarrow{a} s' \mid p_j(s, a, s') > 0 \}$.
    For all $j, j' \in J$, we have that $G_j = G_{j'}$.
\end{restatable}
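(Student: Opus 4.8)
The plan is to reduce the claimed labeled-graph equality to equality of edge sets, and then to show that the $S$-internal edges of $\localbomdp{\memdp}{J}$ are governed by a condition that never refers to the particular environment $j$. Since every $G_j$ already carries the same vertex set $V_j = S$, it suffices to prove $E_j = E_{j'}$ for all $j, j' \in J$, where $E_j$ collects the $S$-to-$S$ transitions of the $J$-local MEMDP in environment $j$ (the $p_j$ in the statement being the local transition function $p_j'$ of $\localbomdp{\memdp}{J}$, with frontier targets excluded because $V_j = S$).

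First I would unfold clause~(3) of the $J$-local MEMDP definition. For $s, s' \in S$ and $a \in A(s)$, the edge $s \xrightarrow{a} s'$ lies in $E_j$ exactly when the transition mass $p_j(s,a,s')$ is retained on the original successor $s'$ rather than redirected to the frontier state $\tuple{s,a,s'}$; by the definition this happens precisely when the transition is non-revealing from belief $J$, i.e.\ when $\beliefupd(\tuple{s,J},a,s') = \tuple{s',J}$ and $p_j(s,a,s') > 0$. The key step is to rewrite this belief-preservation condition using the belief update: $\projbelief{\beliefupd(\tuple{s,J},a,s')} = \{ j'' \in J \mid p_{j''}(s,a,s') > 0 \}$, so (recalling the first belief component is always $s'$) the equality $\beliefupd(\tuple{s,J},a,s') = \tuple{s',J}$ holds iff $p_{j''}(s,a,s') > 0$ for \emph{every} $j'' \in J$. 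This already forces $p_j(s,a,s') > 0$, so the two conjuncts collapse into the single condition ``$p_{j''}(s,a,s') > 0$ for all $j'' \in J$''.

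This condition is symmetric in the environments of $J$ and never singles out $j$: it either holds for all $j \in J$ at once or for none. Hence $s \xrightarrow{a} s' \in E_j \iff s \xrightarrow{a} s' \in E_{j'}$ for every $j, j' \in J$, so $E_j = E_{j'}$ and thus $G_j = G_{j'}$. There is no deep obstacle here — the argument is essentially a definitional unfolding — and the only point requiring care is that the lemma concerns the graph on $S$ alone: the transitions leaving $S$ toward the frontier $F$ are exactly the revealing ones, which genuinely differ across environments, so they must be excluded, which is precisely what fixing $V_j = S$ achieves. Making the equivalence ``non-revealing $\iff$ positive probability throughout $J$'' explicit is what converts the per-environment edge relation into an environment-independent one.
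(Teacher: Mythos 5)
Your proof is correct and matches the paper's (implicit) reasoning: the paper states this lemma without an explicit proof, asserting in the main text that it follows by definition of the local transition functions, and your argument is exactly the right definitional unfolding --- an $S$-internal edge survives in $\localbomdp{\memdp}{J}$ iff $\beliefupd(\tuple{s,J},a,s') = \tuple{s',J}$, which is equivalent to $p_{j''}(s,a,s') > 0$ for \emph{all} $j'' \in J$ and hence independent of the particular environment. You also correctly identify the one point needing care, namely that the $p_j$ in the edge-set definition must be read as the local transition function with frontier targets excluded.
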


\subsubsection*{6.3 An Algorithm for Localized Rabin Objectives}\label{appx:algorithm:loc:rabin}

\begin{restatable}[]{lemma}{lemimmediatelywinningiswinning}
\label{lem:immediately_winning_is_winning}
A state that is immediately winning is winning.
\end{restatable}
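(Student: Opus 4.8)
The plan is to take a strategy witnessing immediate winning and argue that the very same strategy witnesses winning, thereby reducing the claim to a pathwise inclusion between the safe-B\"uchi condition and the Rabin condition.

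Suppose $s \in S \sqcup F$ is immediately winning. By definition there is an index $i$ with $s \models^{\localbomdp{\memdp}{J}} \csafeobj\rabincobuchi_i \wedge \cbuchiobj\rabinbuchi_i$, so there is a strategy $\strat$ under which, in every environment $j \in J$, the induced Markov chain with initial state $s$ almost-surely satisfies $\csafeobj\rabincobuchi_i \wedge \cbuchiobj\rabinbuchi_i$. I claim this $\strat$ is also winning for the localized Rabin objective $\locRabinObj(\winningfrontier)$, which is exactly what it means for $s$ to be winning in the $J$-local MEMDP.

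The key step is a pathwise inclusion. Fix any path $\Path$ with $\Path \models \csafeobj\rabincobuchi_i \wedge \cbuchiobj\rabinbuchi_i$. The safety conjunct means $\Path$ never leaves $\rabincobuchi_i$, so in particular $\infset(\Path) \subseteq \rabincobuchi_i \subseteq \rabincobuchi_i \cup \winningfrontier$, whence $\Path$ leaves $\rabincobuchi_i \cup \winningfrontier$ only finitely often; the B\"uchi conjunct gives $\infset(\Path) \cap \rabinbuchi_i \neq \emptyset$, hence $\infset(\Path) \cap (\rabinbuchi_i \cup \winningfrontier) \neq \emptyset$. These are precisely the two conditions for $\Path$ to win the Rabin pair $\tuple{\rabincobuchi_i \cup \winningfrontier, \rabinbuchi_i \cup \winningfrontier}$, so $\Path$ wins $\locRabinObj(\winningfrontier)$.

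Since this inclusion holds for every such path, in each environment $j \in J$ the set of paths winning $\locRabinObj(\winningfrontier)$ contains the set of paths satisfying $\csafeobj\rabincobuchi_i \wedge \cbuchiobj\rabinbuchi_i$, and the latter has probability one under $\strat$; by monotonicity of the measure, so does the former. As this holds in all environments $j \in J$, the strategy $\strat$ is almost-surely winning for $\locRabinObj(\winningfrontier)$, i.e.\ $s$ is winning. There is no substantive obstacle here; the only points requiring care are that the safety conjunct forces $\infset(\Path) \subseteq \rabincobuchi_i$ (a path that stays in $\rabincobuchi_i$ keeps all, and hence all recurrent, states inside $\rabincobuchi_i$) and that enlarging the pair by $\winningfrontier$ preserves both conjuncts of the Rabin condition.
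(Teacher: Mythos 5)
Your proof is correct and takes essentially the same approach as the paper's: the paper's (much terser) argument likewise observes that $\csafeobj\rabincobuchi_i \wedge \cbuchiobj\rabinbuchi_i$ is a strictly stronger condition than the Rabin-pair condition $\ccobuchiobj\rabincobuchi_i \wedge \cbuchiobj\rabinbuchi_i$, and that winning a fixed pair implies winning the Rabin objective. Your pathwise inclusion plus monotonicity of the measure is just the fully spelled-out version of that same observation.
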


\begin{proof}[Proof of \cref{lem:immediately_winning_is_winning}]
Winning the objective $\csafeobj\rabincobuchi \wedge \cbuchiobj\rabinbuchi$ is a stronger condition than $\ccobuchiobj\rabincobuchi \wedge \cbuchiobj\rabinbuchi$.
Additionally, winning a fixed Rabin pair $\rabinObj_k$ is a stronger condition than winning any Rabin pair, as is the definition of winning.
\end{proof}

\begin{restatable}[]{lemma}{lemreachablefrombscc}\label{lem:reachable_from_bscc}
    Let $\mc = \localbomdp\memdp J[\strat]_j$ be a MC induced by strategy $\strat$ in environment $j \in J$.
    For all BSCCs $\Sbsccj j \subseteq S$ in $\mc$, all states inside the BSCC $s \in \Sbsccj j$ have the following property:
    For all environments $j' \in J$, the states reachable from $s$ in $\localbomdp\memdp J[\strat]_{j'}$ are in $\Sbsccj j \sqcup F$.
\end{restatable}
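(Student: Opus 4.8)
The plan is to show that the set $\Sbsccj j \sqcup F$ is \emph{invariant} under the transition relation of $\localbomdp\memdp J[\strat]_{j'}$ for every environment $j' \in J$. Since $s \in \Sbsccj j \subseteq \Sbsccj j \sqcup F$ and every frontier state in $F$ is absorbing, a straightforward induction on path length then shows that every state reachable from $s$ in environment $j'$ stays in $\Sbsccj j \sqcup F$, which is exactly the claim.

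First I would recall two structural facts about $J$-local MEMDPs. By the definition of $\localbomdp\memdp J$, every transition out of a state $s \in S$ lands either in $S$ — precisely when it is non-revealing, i.e.\ $\beliefupd(\tuple{s, J}, a, s') = \tuple{s', J}$ — or in a frontier state in $F$, which is a sink. Furthermore, a non-revealing transition has positive probability in one environment of $J$ iff it has positive probability in all of them: being non-revealing means $\{ i \in J \mid p_i(s, a, s') > 0 \} = J$, so the environments of $J$ coincide on all transitions that remain inside $S$ (this is the graph-preservation \Cref{lem:local_graph_isomorphism}). Revealing transitions, by contrast, may be present in some environments and absent in others, but they always lead into $F$.

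The core step analyses a single step out of a state $s \in \Sbsccj j$ under $\strat$ in a possibly different environment $j'$. Consider any action $a$ in the support of $\strat$ at $s$ together with any successor $t$ satisfying $p'_{j'}(s, a, t) > 0$. If $t \in F$, then $t \in \Sbsccj j \sqcup F$ and we are done. If instead $t \in S$, the transition $s \xrightarrow{a} t$ is non-revealing, so by the previous paragraph it also has positive probability under $\strat$ in environment $j$; since $\Sbsccj j$ is a BSCC of $\localbomdp\memdp J[\strat]_j$ and therefore closed, we obtain $t \in \Sbsccj j$. Hence every immediate successor of $s$ in environment $j'$ lies in $\Sbsccj j \sqcup F$, which establishes the invariance and, together with the absorbing frontier states, completes the induction.

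I expect the main obstacle to be keeping the two kinds of transitions cleanly separated: only non-revealing transitions may be transported from environment $j'$ back to environment $j$ in order to invoke BSCC-closure, whereas revealing transitions genuinely differ across environments and must be absorbed into $F$ rather than forced to remain in $\Sbsccj j$. A secondary subtlety concerns strategy memory: if $\strat$ is a finite-memory strategy, the induced chains live over $(S \sqcup F) \times N$, but since the memory update depends only on the observed target state — which coincides across environments on non-revealing transitions — the invariance argument lifts unchanged, and the statement follows after projecting onto the $S \sqcup F$ component.
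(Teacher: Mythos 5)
Your proof is correct and follows essentially the same route as the paper's: both arguments rest on the fact that a transition of $\localbomdp\memdp J$ that stays inside $S$ must be non-revealing and hence present in every environment of $J$ (so BSCC-closure in environment $j$ applies), while revealing transitions are redirected into the absorbing frontier $F$. The paper merely phrases this as a contradiction on the first transition escaping $\Sbsccj j \sqcup F$, whereas you phrase it as invariance plus induction on path length; your explicit remark on finite-memory strategies is a welcome extra precision.
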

\noindent

\begin{proof}[Proof of \cref{lem:reachable_from_bscc}]
For contradiction, assume that we can reach a state $s \in S \setminus (\Sbsccj j \sqcup F)$ in environment $j' \in J$.
Clearly, this is not possible from a state in $F$, as these are sink states.
Therefore, we took a transition from a state $s' \in \Sbscc$ to a state $s'' \in S \setminus (\Sbsccj j \sqcup F)$ with some action $a$.
This implies that $p_{j'}(s', a, s'') > 0$.
However, as we know that $s'' \not\in \Sbsccj j \sqcup F$, we know that $p_j(s', a, s'') = 0$, and thus that $\beliefupd(\tuple{s', J}, a, s'') \ne \tuple{s'', J}$.
By definition of $\localbomdp\memdp J$, this implies that $p_j(s', a, \tuple{s', a, s''}) > 0$, contradicting the assumption that $\Sbsccj j \subseteq S$.
\end{proof}

\begin{restatable}[Reachable almost-sure winning states]{lemma}{lemwinningreachablewinning}\label{lem:winning_reachable_winning}
    All states reachable from a winning state $s$ under a winning strategy are winning.
\end{restatable}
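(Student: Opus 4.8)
The plan is a standard prefix-shifting argument, specialized to exploit the shared graph structure of $J$-local MEMDPs. Fix a winning state $s$ together with a strategy $\strat$ witnessing that $s$ is winning, i.e., $\genericlocal^{s}[\strat]_j \models \locRabinObj(\winningfrontier)$ for every $j \in J$, where $\genericlocal = \localbomdp{\memdp}{J}$. Let $s'$ be reachable from $s$ under $\strat$; concretely, there is a finite path $\Path$ with $\first(\Path) = s$ and $\last(\Path) = s'$ that has positive probability in $\genericlocal[\strat]_{j_0}$ for some $j_0 \in J$. I would first dispose of the frontier case: if $s' \in F$, then since frontier states are absorbing, reaching $s'$ forces $\infset(\Path') = \{s'\}$ for every continuation $\Path'$, and a winning strategy cannot reach a frontier state outside $\winningfrontier$ with positive probability, as that would contribute a positive-measure set of losing paths in $\genericlocal[\strat]_{j_0}$. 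Hence $s' \in \winningfrontier$, and such a state is immediately winning, because the self-loop keeps $\infset = \{s'\} \subseteq \rabincobuchi_i \cup \winningfrontier$ with $s' \in \rabinbuchi_i \cup \winningfrontier$ for every pair $i$.

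It remains to treat $s' \in S$. Because the path reaches a state of $S$ and frontier states are absorbing, $\Path$ stays entirely within $S$. Here the key structural fact is \Cref{lem:local_graph_isomorphism}: all environments of $\genericlocal$ share the same edge relation on $S$. Consequently every transition of $\Path$ has positive probability in every environment $j \in J$, so the cylinder $\mathrm{Cyl}(\Path)$ has positive probability in each $\genericlocal[\strat]_j$, not merely in $j_0$. I would then define the shifted strategy $\strat'$ by $\strat'(\rho) \defeq \strat(\Path \cdot \rho)$ for finite paths $\rho$ starting in $s'$, which is a legal path-dependent strategy and hence admissible by the definition of a winning state.

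Finally, I would argue that $\strat'$ is winning from $s'$ in every environment. Rabin objectives, and hence the localized objective $\locRabinObj(\winningfrontier)$, are prefix-independent, since winning depends only on $\infset(\Path)$, which is invariant under dropping a finite prefix. For each $j \in J$, the distribution of $\genericlocal^{s'}[\strat']_j$ is exactly the conditional distribution of $\genericlocal^{s}[\strat]_j$ given the positive-probability cylinder $\mathrm{Cyl}(\Path)$, with the finite prefix removed. Because $\genericlocal^{s}[\strat]_j \models \locRabinObj(\winningfrontier)$, the set of losing paths has measure zero, so it still has conditional measure zero given a positive-probability event; prefix-independence then transfers this to the suffix process, yielding $\genericlocal^{s'}[\strat']_j \models \locRabinObj(\winningfrontier)$. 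As this holds for all $j \in J$, the single strategy $\strat'$ witnesses that $s'$ is winning.

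The main obstacle is ensuring the witnessing prefix has positive probability in all environments simultaneously: a path witnessing reachability a priori only has positive probability in one environment, so the conditioning step could not be carried out uniformly without the shared-graph property of $J$-local MEMDPs (\Cref{lem:local_graph_isomorphism}). The remaining work is the routine measure-theoretic identification of the conditional process with the shifted MC, together with the observation that prefix-independence is preserved by the frontier-augmented objective.
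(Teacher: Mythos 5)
Your argument is correct and is essentially the paper's: both rest on the observation that a positive-probability finite prefix leading to $s'$ forces the residual (shifted) strategy $\strat'(\rho)=\strat(\Path\cdot\rho)$ to win from $s'$, because conditioning an almost-sure event on a positive-probability cylinder preserves measure one and Rabin objectives are prefix-independent. The paper compresses this into a short contradiction on probabilities and glosses over which environment witnesses reachability versus which one would witness failure from $s'$; your explicit appeal to \Cref{lem:local_graph_isomorphism} and the frontier/non-frontier case split supplies that missing uniformity but does not change the route.
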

\noindent
\begin{proofsketch}[Proof of \cref{lem:winning_reachable_winning}]
First, by definition, the winning strategy $\strat$ is winning for $s$.
Assume for contradiction that we can reach a state $s'$ which is not winning.
Because $s'$ is reachable, we have a path from $s$ to $s'$ with some probability $p$.
As $s'$ is not winning, we know that the probability of satisfying the objective from $s'$ is less than 1, say $v$.
Therefore, the probability of winning from $s$ is at most $1 - pv < 1$, which contradicts our assumption that $\strat$ is winning.
\end{proofsketch}

\begin{restatable}[]{lemma}{lemnosubbscc}\label{lem:no_sub_bscc}
    Let $\mc = \localbomdp\memdp J[\strat]_j$ be a MC induced by strategy $\strat$ in environment $j \in J$.
    For all BSCCs $\Sbsccj j \subseteq S$ in $\mc$, all states $s$ inside the BSCC $\Sbsccj j$ have the following property:
    For all environments $j' \in J$, the BSCC reached from $s$ cannot be a strict subset of $\Sbsccj j$.
\end{restatable}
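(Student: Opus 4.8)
The plan is to argue by contradiction, exploiting that under the single fixed strategy $\strat$ the two environments $j$ and $j'$ induce Markov chains with identical edges among the states of $S$. First I would record the structural invariant behind \cref{lem:local_graph_isomorphism}: a transition $s \xrightarrow{a} s'$ with $s,s' \in S$ survives inside $\localbomdp\memdp J$ (rather than being redirected to a frontier state) exactly when every environment in $J$ assigns it positive probability, since otherwise the belief would update. As $\strat$ is one fixed strategy, this forces the induced chains $\mc = \localbomdp\memdp J[\strat]_j$ and $\mc' = \localbomdp\memdp J[\strat]_{j'}$ to have precisely the same set of $S$-to-$S$ edges; they may differ only in their transitions into $F$.

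Next I would set up the contradiction. Fix $s \in \Sbsccj j$ and an environment $j' \in J$, and let $B$ be a BSCC reachable from $s$ in $\mc'$. Because every frontier state is an absorbing sink, $B$ is either a single frontier state or a subset of $S$. In the former case $B \not\subseteq S$, so it cannot be a strict subset of $\Sbsccj j \subseteq S$ and there is nothing to prove. So assume $B \subseteq S$. By \cref{lem:reachable_from_bscc}, every state reachable from $s$ in $\mc'$ lies in $\Sbsccj j \sqcup F$, hence $B \subseteq \Sbsccj j$. I then suppose, for contradiction, that $B \subsetneq \Sbsccj j$.

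The crux is to produce an edge of $\mc'$ leaving $B$, contradicting that $B$ is a BSCC. Since $\Sbsccj j$ is a BSCC of $\mc$ contained in $S$, all of its internal edges are $S$-to-$S$ edges, so $\Sbsccj j$ is strongly connected using exactly the common $S$-edges shared by $\mc$ and $\mc'$. Picking $b \in B$ and $c \in \Sbsccj j \setminus B$, strong connectivity of $\Sbsccj j$ yields a path from $b$ to $c$ inside $\Sbsccj j$; taking its first edge that leaves $B$ gives a common $S$-edge $x \to y$ with $x \in B$ and $y \in \Sbsccj j \setminus B$. As this edge is also present in $\mc'$, the state $y \notin B$ is reachable from $x \in B$ in $\mc'$, contradicting that $B$ is closed. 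Hence no BSCC reachable from $s$ in $\mc'$ is a strict subset of $\Sbsccj j$.

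I expect the main obstacle to be the first paragraph: making fully precise that the fixed strategy $\strat$ together with the belief-update definition forces the $S$-to-$S$ edge sets of $\mc$ and $\mc'$ to coincide, and that the internal edges of the BSCC $\Sbsccj j$ are genuinely $S$-to-$S$ edges rather than edges that might route through a frontier state. Once that invariant is established, the strong-connectivity crossing-edge argument is routine.
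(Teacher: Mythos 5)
Your proof is correct and follows essentially the same route as the paper's: assume a strict sub-BSCC for contradiction, invoke \cref{lem:local_graph_isomorphism} to conclude that the $S$-to-$S$ edges of the induced chains coincide across environments under the fixed strategy $\strat$, and conclude that the smaller set cannot be closed. You merely spell out the last step (the crossing-edge argument from strong connectivity) more explicitly than the paper does, which is a harmless and arguably welcome refinement.
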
\noindent
\begin{proof}[Proof of \cref{lem:no_sub_bscc}]
Assume for contradiction, that $\strat$ reaches a BSCC $\Sbsccj{j'} \subset \Sbsccj j$ in environment $j'$.
Then, as $\Sbsccj j \subseteq S$, also: $\Sbsccj{j'} \subseteq S$.
According to \cref{lem:local_graph_isomorphism}, both $j$ and $j'$ must have the same underlying graph for transitions that remain inside of $S$.
As a consequence, the transitions (and therefore also states) belonging to $\Sbsccj j$ must also be present in $\Sbsccj{j'}$, contradicting the assumption that $\Sbsccj{j'} \subset \Sbsccj j$.
\end{proof}

\begin{restatable}[]{lemma}{lembsccdetermineswinning}
\label{lem:bscc_determines_winning}
    Given any MC $\mc$ with some BSCC $\Sbscc$ and any Rabin pair $\rabinPair i$. Either $\forall s \in \Sbscc\colon s \models^{\mc} \csafeobj\rabincobuchi_i \wedge \cbuchiobj\rabinbuchi_i$ or $\forall s \in \Sbscc \colon s \not\models^{\mc} \csafeobj\rabincobuchi_i \wedge \cbuchiobj\rabinbuchi_i$.
\end{restatable}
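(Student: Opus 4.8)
The plan is to exploit the standard long-run behaviour of a finite Markov chain inside a BSCC, and to observe that acceptance of $\csafeobj\rabincobuchi_i \wedge \cbuchiobj\rabinbuchi_i$ is governed by a condition that mentions only $\Sbscc$ and the pair $\rabinPair i$, never the chosen starting state $s \in \Sbscc$. That state-independence is exactly the dichotomy being claimed, so the whole proof reduces to making the characterization explicit.

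First I would recall two elementary facts about a BSCC $\Sbscc$ of a finite MC $\mc$ (cf.~\cite{Baier2008}): from any $s \in \Sbscc$, every path surely remains inside $\Sbscc$, because a bottom SCC cannot be left; and almost-surely every state of $\Sbscc$ is visited infinitely often, i.e.\ $\mathbb{P}_{\mc}(\infset(\Path) = \Sbscc \mid s_0 = s) = 1$. Neither fact depends on \emph{where} inside $\Sbscc$ the path begins, and it is precisely this that I would stress.

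Next I would characterize the two conjuncts separately, for an arbitrary $s \in \Sbscc$. For safety, $s \models^{\mc} \csafeobj\rabincobuchi_i$ iff $\Sbscc \subseteq \rabincobuchi_i$: if $\Sbscc \subseteq \rabincobuchi_i$ the path surely stays in $\Sbscc \subseteq \rabincobuchi_i$, whereas if some $t \in \Sbscc \setminus \rabincobuchi_i$ exists then $t$ is visited with probability one, violating safety almost-surely. For the \Bu part, $s \models^{\mc} \cbuchiobj\rabinbuchi_i$ iff $\Sbscc \cap \rabinbuchi_i \neq \emptyset$: when the intersection is nonempty some state of $\rabinbuchi_i$ lies in $\infset(\Path) = \Sbscc$ almost-surely, and when it is empty the path (staying in $\Sbscc$) never reaches $\rabinbuchi_i$ at all. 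Combining the two, $s \models^{\mc} \csafeobj\rabincobuchi_i \wedge \cbuchiobj\rabinbuchi_i$ iff $\Sbscc \subseteq \rabincobuchi_i$ \emph{and} $\Sbscc \cap \rabinbuchi_i \neq \emptyset$.

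Finally I would conclude: since the resulting condition refers only to $\Sbscc$, $\rabincobuchi_i$, and $\rabinbuchi_i$ and is completely insensitive to $s$, either every $s \in \Sbscc$ wins the objective (when the condition holds) or no $s \in \Sbscc$ does (when it fails), which is the claimed statement. There is no genuine obstacle in this argument; the only point requiring care is the faithful invocation of the two standard BSCC facts, and in particular being explicit that both the ``surely stays inside $\Sbscc$'' and the ``visits all of $\Sbscc$ infinitely often'' properties are uniform over the starting state, as it is exactly this uniformity that drives the dichotomy.
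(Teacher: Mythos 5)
Your proof is correct and follows essentially the same route as the paper's: both rest on the fact that from any state of $\Sbscc$ the path almost-surely satisfies $\infset(\Path) = \Sbscc$ (and never leaves $\Sbscc$), reducing satisfaction of $\csafeobj\rabincobuchi_i \wedge \cbuchiobj\rabinbuchi_i$ to the state-independent condition $\Sbscc \subseteq \rabincobuchi_i \wedge \Sbscc \cap \rabinbuchi_i \neq \emptyset$. Your version is merely more explicit in handling the safety conjunct (via ``surely stays inside $\Sbscc$'') than the paper's two-line argument, which is a welcome but not substantive refinement.
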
\noindent
\begin{proof}[Proof of \cref{lem:bscc_determines_winning}]
    All states are reached infinitely often in a BSCC, so all paths $\Path$ that start in a state in $\Sbscc$, we have that $\mathit{Inf}(\Path) = \Sbscc$.
    Therefore, for all states $s \in \Sbscc$ we have that $s \models^{\mc} \rabinPair i \Leftrightarrow \Sbscc \subseteq \rabincobuchi_i \wedge \Sbscc \cap \rabinbuchi_i \ne \emptyset$.
\end{proof}

\lemcommonpairreachable*
\begin{proof}[Proof of \cref{lem:common_pair_reachable}]
A strategy that almost-surely reaches states which are (immediately, cf.\ \cref{lem:immediately_winning_is_winning}) winning is winning. 
It remains to show that from any winning state, all winning strategies almost-surely reach immediately winning states.

Given any strategy $\strat$, in every environment, we must almost-surely reach the BSCCs in the induced MC~\cite[Theorem 10.27]{Baier2008}. By \cref{lem:common_rabin_pair}, the BSCCs that are reached are immediately winning. 

\thmrabinalgcorrect*
\begin{proof}[Proof \Cref{thm:rabin_alg_correct}]
\cref{lem:common_pair_reachable,lem:winset_correct} together prove~\cref{thm:rabin_alg_correct}:
\begin{enumerate}
 \item $\winset$ computed by \rabinalgname{} is equal to the set of immediately winning states,  $\rabinWinSet$ (\cref{lem:winset_correct})
    \item A state is winning iff it can almost-surely reach the set $\rabinWinSet$(\cref{lem:common_pair_reachable}).
   \item Line~\ref{line:return} returns the states that can almost-surely reach $\winset$. \qedhere
\end{enumerate}%
\end{proof}

\lemcommonrabinpair*

\begin{proof}[Proof of \Cref{lem:common_rabin_pair}]
If the BSCC $\Sbsccj j$ reached is a frontier state, then by construction, we win with \emph{every} Rabin pair, so the statement trivially holds.
Otherwise, by construction of $\localbomdp{\memdp}{J}$, $\Sbsccj j$ must entirely be contained within $S$, \ie, $\Sbsccj j \cap F = \emptyset$.
Furthermore, as the strategy $\strat$ is winning, the BSCC $\Sbsccj j$ must win some Rabin pair $\rabinObj_i$ in environment $j$ (\cref{lem:bscc_determines_winning}).
It remains to show that the states in $\Sbsccj j$ are immediately winning $\rabinObj_i$ in \emph{every} environment. 
For $s \in \Sbsccj j$ in environment $j' \ne j$, we can distinguish several cases based on the fragment reachable from $s$:
\begin{enumerate}[I]
\item The reachable fragment of the environment $j'$ is exactly the same as $\Sbsccj j$. 
In this case, the same Rabin pairs are immediately winning~(\cref{lem:bscc_determines_winning}).
\item The reachable fragment of the environment $j'$ is a strict subset of $\Sbsccj j$.
This contradicts \cref{lem:no_sub_bscc}.
\item The reachable fragment of the environment $j'$ contains at least a state not in $\Sbsccj j$.
\end{enumerate}
Thus, we only need to consider this last case. 

We first note the following.
By \cref{lem:reachable_from_bscc}, all reachable states are in $\Sbsccj j$ or in $F$.
By \cref{lem:winning_reachable_winning}, all reachable states under a winning strategy are winning.
Thus, from state $s$, in every environment, the reachable states are a subset of the union of $\Sbsccj j$ and the set of winning frontier states. In particular, (1)~$\forall s \in \Sbsccj j\colon s \models^{\localbomdp{\memdp}{J}_{j'}} \csafeobj\rabincobuchi_i$.
First, there cannot be any BSCC that contains both states in $\Sbsccj j$ and $F$.
Second, by \Cref{lem:no_sub_bscc}, there is no BSCC strictly contained in $\Sbsccj j$.
Thus: In the union of $\Sbsccj j$ and $F$, the only BSCCs are frontier states.
In any MC, we must almost-surely reach a BSCC~\cite[Theorem 10.27]{Baier2008}. 
Combining this with the statements above, from $s$ onwards, we almost-surely reach winning frontier states. In these states, all Rabin pairs are immediately winning.
From this, we deduce: (2)~$
\forall s \in \Sbsccj j\colon s \models^{\localbomdp{\memdp}{J}_{j'}} \Finally(\csafeobj\rabincobuchi_i \wedge \cbuchiobj\rabinbuchi_i)$.

Thus (1) $\forall s \in \Sbsccj j\colon s \models^{\localbomdp{\memdp}{J}_{j'}} \csafeobj\rabincobuchi_i$ and (2)~$
\forall s \in \Sbsccj j\colon s \models^{\localbomdp{\memdp}{J}_{j'}} \Finally(\csafeobj\rabincobuchi_i \wedge \cbuchiobj\rabinbuchi_i)$.
From (2), we derive $
\forall s \in \Sbsccj j\colon s \models^{\localbomdp{\memdp}{J}_{j'}} \Finally( \cbuchiobj\rabinbuchi_i)$ which in turn implies\footnote{as eventually infinitely often implies infinitely often.} $
\forall s \in \Sbsccj j\colon s \models^{\localbomdp{\memdp}{J}_{j'}} \cbuchiobj\rabinbuchi_i$, which combines with (2) to show that each such $s$ is immediately winning. 
\end{proof}

\end{proof}

\lemwinsetcorrect*
\begin{proofsketch}[Proof of \cref{lem:winset_correct}]
    The algorithm iterates over each Rabin pair and modifies it to be a local Rabin objective.
    Next, it computes the states for which this pair is an immediate common winning Rabin pair.
    By taking the union of all these states (for each pair), we obtain $\winset = \rabinWinSet$.
\end{proofsketch}

\lemlocalrabinpoly*
\begin{proof}[Proof of \cref{lem:local_rabin_poly}]
The loop on line~\ref{line:forloop} has a fixed number of iterations, namely $n$, which is linear in the size of the input (as we encode MEMDPs as a list of MDPs).
In each iteration, we compute the union of sets, which are bounded by the size of the input.
Next, we compute the winning region of a safe \Bu objective, which, according to \cref{thm:rabin_pair_poly} is computable in polynomial time on BOMDPs.
We remark that constructing a BOMDP of a $J$-local MEMDP produces an identical MEMDP (up to renaming of states).
Finally, after the loop on line~\ref{line:forloop}, we compute the winning region of an almost-sure reachability objective, which, according to \cref{lem:reachability_complexity}, is computable in polynomial time.
\end{proof}
\subsubsection*{6.4 Recursive Computation of Winning Regions}

\thmlocalwinningmeansglobal*
\begin{proof}[Proof of \Cref{thm:local_winning_means_global}]
We prove this by comparing the probability measure of winning paths under a winning strategy $\strat$ in the $J$-local MEMDP $\localbomdp\memdp J$ and the BOMDP $\bomdp$.
We prove the two inclusions.

    \noindent For a given state $\tuple{s, J} \in (\winregion{\localbomdp{\memdp}{J}}{\clocRabinObj J} \cap S) \times \{J\}$, we show that  $\tuple{s, J} \in \winregion{\memdp}{\rabinObj}_{J}$.
    
    As $\tuple{s, J}$ is winning, there exists a winning strategy $\strat$ that is winning $\clocRabinObj J$ on $\localbomdp\memdp J$.
    Therefore:
    \begin{align*}
    \int_{\Path \in \clocRabinObj J} p^\strat_i(\Path)\,d\Path = 1 &=
    \int_{\substack{\Path \in \clocRabinObj J \\ \Path \in \creachobj F}} p^\strat_i(\Path)\,d\Path + \int_{\substack{\Path \in \clocRabinObj J \\ \Path \not\in\creachobj F}} p^\strat_i(\Path)\,d\Path\\
    &= \int_{\substack{\Path \in \clocRabinObj J \\ \Path \in \creachobj F}} p^\strat_i(\Path)\,d\Path + \int_{\substack{\Path \in \clocRabinObj J \\ \Path \in\csafeobj S}} p^\strat_i(\Path)\,d\Path.
    \end{align*}
        Let $\mathit{CWF} \defeq \mathit{WinLocal}_J(\mathit{RF}, \winregion{\memdp}{\rabinObj}_{\subset J})$.
    As $\strat$ is winning, we obtain:
    \begin{align}
    \label{eq:targetorso}
    1= \int_{\substack{\Path \in \clocRabinObj J \\ \Path \in \creachobj F}} p^\strat_i(\Path)\,d\Path + \int_{\substack{\Path \in \clocRabinObj J \\ \Path \in\csafeobj S}} p^\strat_i(\Path)\,d\Path =
    \int_{\substack{\Path \in \clocRabinObj J \\ \Path \in \creachobj \mathit{CWF}}} p^\strat_i(\Path)\,d\Path + \int_{\substack{\Path \in \clocRabinObj J \\ \Path \in\csafeobj S}} p^\strat_i(\Path)\,d\Path.
    \end{align}
For the states in $\mathit{CWF}$ there is a strategy $\strat_f$ that wins globally in $\bomdp$.

    Now consider a strategy $\strat_g$ for the global BOMDP, which mimics $\strat$ while the belief is $J$, but switches to $\strat_f$ once a frontier state is reached.
    What remains to be shown is that $\strat_g$ indeed has probability 1 of satisfying $\rabinObj$.
    We denote the transition function of $\bomdp$ as $p'$ to distinguish it from $p$. Using a similar idea as above, we want to show:
    \begin{align}
    \label{eq:something1}
   1 \stackrel?= \int_{\Path \in \rabinObj} {p'}^{\strat_g}_i(\Path)\,d\Path = 
    \int_{\substack{\Path \in \rabinObj \\ \pathbelief(\Path) = J}} {p'}^{\strat_g}_i(\Path)\,d\Path + 
    \int_{\substack{\Path \in \rabinObj \\ \pathbelief(\Path) \ne J}} {p'}^{\strat_g}_i(\Path)\,d\Path.
    \end{align}
    We remark that paths that satisfy $\pathbelief(\Path) = J$ can be mapped one-to-one to paths in $\localbomdp\memdp J$ that satisfy $\csafeobj S$.
    Additionally, this means that $\clocRabinObj J$ is satisfied. Next, we observe that $p=p'$ for paths that have belief $J$. Thus 
     \begin{align*}
    \int_{\substack{\Path \in \rabinObj \\ \pathbelief(\Path) = J}} {p'}^{\strat_g}_i(\Path)\,d\Path =  \int_{\substack{\Path \in \clocRabinObj J \\ \Path \in\csafeobj S}} p^\strat_i(\Path)\,d\Path. 
    \end{align*}
    By substituting this in \eqref{eq:something1} we obtain:
    \begin{align*}
     1 \stackrel?= 
    \int_{\substack{\Path \in \clocRabinObj J \\ \Path \in\csafeobj S}} p^\strat_i(\Path)\,d\Path + 
    \int_{\substack{\Path \in \rabinObj \\ \pathbelief(\Path) \ne J}} {p'}^{\strat_g}_i(\Path)\,d\Path.
    \end{align*}
Using \eqref{eq:targetorso}, we can reformulate this to:
\begin{align}\label{eq:eq3}
    \int_{\substack{\Path \in \rabinObj \\ \pathbelief(\Path) \ne J}} {p'}^{\strat_g}_i(\Path)\,d\Path \stackrel?=
    \int_{\substack{\Path \in \clocRabinObj J \\ \Path \in \creachobj \mathit{CWF}}} p^\strat_i(\Path)\,d\Path.
\end{align}
For (absorbing) winning frontiers, we can rewrite the right-hand side:
\[
\int_{\substack{\Path \in \clocRabinObj J \\ \Path \in \creachobj \mathit{CWF}}} p^\strat_i(\Path)\,d\Path =
\sum_{f \in \mathit{CWF}}\int_{\substack{\Path \in \clocRabinObj J \\ \Path \in \creachobj \{f\}}} p^\strat_i(\Path)\,d\Path
\]
Note that all paths that reach $\mathit{CWF}$ do satisfy $\clocRabinObj J$. 
\begin{align}
\label{eq:rewrittenrhs}
\int_{\substack{\Path \in \clocRabinObj J \\ \Path \in \creachobj \mathit{CWF}}} p^\strat_i(\Path)\,d\Path =
\sum_{f \in \mathit{CWF}}\int_{\substack{ \Path \in \creachobj \{f\}}} p^\strat_i(\Path)\,d\Path,
\end{align}
which is the definition of the reachability probability for states in $F$ (given that they are absorbing). 

We similarly aim to rewrite the left side. Note that any path where the belief eventually changes must hit a (first) revealing transition, \ie, let $\Path = \Path' \cdot as'$ such that $\pathbelief(\Path') = J$ and $\pathbelief(\Path) = J' \neq J$. We can collect the states for which such paths exist in a set $F'$.
    Thus, we can write the probability of satisfying the objective $\rabinObj$ as the sum of reaching such a frontier state multiplied by the probability of satisfying the objective \emph{after} reaching the frontier state:
    \begin{align*}
    \int_{\substack{\Path \in \rabinObj \\ \pathbelief(\Path) \ne J}} {p'}^{\strat_g}_i(\Path)\,d\Path =
    \sum_{f \in F'} &\int_{\Path \in \creachobj \{ f\}} {p'}^{\strat}_i(\Path)\,d\Path \cdot \int_{\Path \in \rabinObj} {p'}^{\strat_f}_i(\Path \mid \first(\pi) = f) \,d\Path.
    \end{align*}
  We note that this notation requires a slightly generalized notion of the measure $ {p'}^{\strat_f}_i$ where we can choose the initial state. Now, as $\strat_f$ is winning in the BOMDP from the frontier states, we know that $\int_{\Path \in \rabinObj} {p'}^{\strat f}_i(\Path \mid \first(\pi) = f) \,d\Path = 1$, thus 
  \begin{align*}
    \int_{\substack{\Path \in \rabinObj \\ \pathbelief(\Path) \ne J}} {p'}^{\strat_g}_i(\Path)\,d\Path =
    \sum_{f \in F'} &\int_{\Path \in \creachobj \{ f\}} {p'}^{\strat}_i(\Path)\,d\Path \cdot 1.
    \end{align*}
    The paths do not update their belief until they reach the target, so we have that $p=p'$ in the above equality.
  \begin{align}\label{eq:rewrittenlhs}
    \int_{\substack{\Path \in \rabinObj \\ \pathbelief(\Path) \ne J}} {p'}^{\strat_g}_i(\Path)\,d\Path =
    \sum_{f \in F'} &\int_{\Path \in \creachobj \{ f\}} {p}^{\strat}_i(\Path)\,d\Path.
    \end{align}
    Because of the definition of $F'$, it is equal to the reachable frontiers in $\localbomdp\memdp J$.
    Combining \eqref{eq:rewrittenrhs} and \eqref{eq:rewrittenlhs} in \eqref{eq:eq3} yields the following equality, as desired:
    \sj{thanks! needs rewording here:-)}
\begin{align*}
    \int_{\substack{\Path \in \rabinObj \\ \pathbelief(\Path) \ne J}} {p'}^{\strat_g}_i(\Path)\,d\Path &=
\sum_{f \in \mathit{CWF}}\int_{\substack{\Path \in \clocRabinObj J \\ \Path \in \creachobj \{f\}}} p^\strat_i(\Path)\,d\Path\\
&=  \sum_{f \in F'} \int_{\Path \in \creachobj \{ f\}} {p}^{\strat}_i(\Path)\,d\Path\\
&= \int_{\substack{\Path \in \clocRabinObj J \\ \Path \in \creachobj \mathit{CWF}}} p^\strat_i(\Path)\,d\Path.
\end{align*}
    For the other direction of the proof, \ie, for a give state $\tuple{s, J} \in \winregion{\memdp}{\rabinObj}_{J}$, show that indeed  $(\tuple{s, J} \in \winregion{\localbomdp{\memdp}{J}}{\clocRabinObj J} \cap S) \times \{J\}$, we use a similar argument.
    A winning strategy $\strat_g$ on the BOMDP can be decomposed into a local strategy $\strat$ and a frontier strategy.
    The above equalities follow.

\end{proof}

\thmcheckcorrect*
\begin{proof}[Proof of \Cref{thm:check_correct}]

By \cref{cor:memdp_win_region_characterisation}, we know that the winning region can be computed by looking at $J$-local MEMDPs.
For each of these $J$-local MEMDPs, we compute the local winning region, which, according to \cref{thm:local_winning_means_global}, is equal to the global winning region when we use the correct localized Rabin objective $\clocRabinObj J$.
This requires the winning frontier to be equal to $\mathit{WinLocal}_J(\mathit{RF}, \winregion{\memdp}{\rabinObj}_{\subset J})$, \ie, all frontier states that are globally winning.
These are recursively computed on line~\ref{line:wf}.
Eventually, we reach a belief for which there are no more reachable frontiers, in which case $\mathit{RF}$ will be empty.
In this case, the computed $\mathit{WF}$ will be empty, too, causing the recursion to end.
Thus, the winning region is correct, and consequently, the winning frontier is to be used higher up in the recursion.

\end{proof}

\thmcheckspace*
\begin{proof}[Proof of \cref{thm:check_space}]
We give an upper bound on the space complexity of \genericalgoname{}.
As \genericalgoname{} is a recursive algorithm, we can compute an upper bound by multiplying the space complexity of a single recursive step by the maximum recursion depth.
The size of the local MEMDP and the winning frontier are both polynomial in the input.
The maximum recursion depth is equal to $I$, as the belief-support is strictly decreasing in size with each recursive step.
\end{proof}

\thmalgbigo*
\begin{proof}[Proof of \Cref{thm:alg_big_o}]
The algorithm performs the recursive step for all reachable frontier states, of which there are at most $|S|^2\cdot|A|$.
This recursive step decreases the size of the belief by at least $1$, so the recursion is at most $|I|$ levels deep.
In total, this means that we have to perform the recursive step at most $(|S|^2\cdot|A|)^{|I|}$ times.

The time complexity of each recursive step is polynomial in $|\memdp|$ and $|\rabinObj|$ (see~\cref{lem:local_rabin_poly}).
\end{proof}

\cortwomemdps*
\begin{proof}[Proof of \cref{cor:2memdps}]
Follows immediately from~\cref{thm:alg_big_o} by substituting $|I|$ with $k$.
\end{proof}

\end{document}